\newcommand{\beq}{\begin{equation}}
\newcommand{\eeq}{\end{equation}} 
\newcommand{\bea}{\begin{eqnarray}}
\newcommand{\eea}{\end{eqnarray}}
\newtheorem{theorem}{Theorem}
\newtheorem{lemma}{Lemma}
\newtheorem{corollary}[theorem]{Corollary}
\newtheorem{definition}{Definition}
\newtheorem{remark}{Remark}
\newcommand{\bZ}{\mathbb{Z}}
\newcommand{\bP}{\mathbb{P}}
\newcommand{\bR}{\mathbb{R}}
\newcommand{\bF}{\mathbb{F}}
\newcommand{\cN}{\mathcal{N}}
\def\unit{{1\kern-.65ex {\rm l}}}
\def\1{{1\kern-.65ex {\rm l}}}
\def\now{%
\ifnum \hour<13
  \ifnum \hour=0 \advance \hour by 12 \number\hour:\else \number\hour:\fi%
     \ifnum \minute<10 0\fi%
     \number\minute%
\ A.M.%
\else \advance \hour by -12 \number\hour:%
  \ifnum \minute<10 0\fi%
  \number\minute%
  \ P.M.%
\fi%
}
\newtheorem{proposition}{Proposition}
\title{On finiteness of Type IIB compactifications:
Magnetized branes on elliptic Calabi-Yau threefolds}
\author[1,2]{Mirjam Cveti\v c,}
\author[3]{\hspace{.2cm}James Halverson,}
\author[1]{\hspace{.2cm}Denis Klevers,}
\author[1]{\hspace{.2cm}Peng Song} 
\affiliation{$^1$ Department of Physics and Astronomy, \\University of Pennsylvania,
  Philadelphia, PA 19104-6396, USA \vspace{.25cm} }
\affiliation{$^2$ Center for Applied Mathematics and Theoretical Physics,\\
University of Maribor, Maribor, Slovenia \vspace{.25cm} }
 \affiliation{$^3$ Kavli Institute for Theoretical Physics, \\ University of California,
  Santa Barbara, CA 93106-4030, USA \vspace{.25cm} }
\emailAdd{cvetic@cvetic.hep.upenn.edu}
\emailAdd{jim@kitp.ucsb.edu}
\emailAdd{klevers@sas.upenn.edu}
\emailAdd{songpeng@sas.upenn.edu}  
\preprint{UPR-1259-T, 
NSF-KITP-13-259}
\abstract{The string landscape satisfies interesting finiteness
  properties imposed by supersymmetry and string-theoretical
  consistency conditions. We study $\cN=1$ supersymmetric
  compactifications of Type IIB string theory on smooth elliptically
  fibered Calabi-Yau threefolds at large volume with magnetized
  D9-branes and D5-branes. We prove that supersymmetry and tadpole
  cancellation conditions imply that there is a finite number of such
  configurations.  In particular, we derive an explicitly computable
  bound on the number of magnetic flux quanta, as well as the number
  of D5-branes, which is independent of the continuous moduli of the
  setup.  The proof applies if a number of easy to check geometric
  conditions of the twofold base are met.  We show that these
  geometric conditions are satisfied for the almost Fano twofold
  bases given by each toric variety associated to a reflexive
  two-dimensional polytope as well as by the generic del Pezzo
  surfaces $dP_n$ with $n=0,\ldots,8$.    Physically, this finiteness proof shows that there exist a
  finite collection of four-dimensional gauge groups and chiral matter
  spectra in the 4D supergravity theories realized by these
  compactifications. As a by-product we explicitly
  construct all generators of the K\"ahler cones of 
  $dP_n$ and work out their relation to representation theory.}
\begin{document}

\maketitle

%%%%%%%%%%%%%%%%%%%%%%%%%%%%%%%%%%%%%%%%%%%%%%%%%%%%%%%%%%%%%%%%%%%%%%%%%%%%%%%%%%%%%%%%%%%%%%%
%%%%%%%%%%%%%%%%%%%%%%%%%%%%%%%%%%%%%%%%%%%%%%%%%%%%%%%%%%%%%%%%%%%%%%%%%%%%%%%%%%%%%%%%%%%%%%%
\section{Introduction}
%%%%%%%%%%%%%%%%%%%%%%%%%%%%%%%%%%%%%%%%%%%%%%%%%%%%%%%%%%%%%%%%%%%%%%%%%%%%%%%%%%%%%%%%%%%%%%%
%%%%%%%%%%%%%%%%%%%%%%%%%%%%%%%%%%%%%%%%%%%%%%%%%%%%%%%%%%%%%%%%%%%%%%%%%%%%%%%%%%%%%%%%%%%%%%%

M-theory or superstring compactification to four dimensions remains
the most promising framework for the unification of the fundamental
forces in Nature. The set of associated low energy effective
theories  which can arise in consistent compactifications
is known as the string landscape. There have been many efforts to
quantify this space, with the hope of uncovering observable properties
shared by large classes of vacua which lead to novel insights in
particle physics or cosmology. However, this has proven to be a very
difficult problem deserving a multi-faceted approach. 

The traditional one is to study the effective scalar potential on
moduli space and to examine its associated vacua; in general a variety
of perturbative and non-perturbative effects are utilized to this end.
For example, in the much studied moduli stabilization scenarios of
Type IIB compactifications
\cite{Kachru:2003aw,Balasubramanian:2005zx}, these effects include
superpotential contributions from background Ramond-Ramond flux and
D-instanton effects. Increasingly more detailed realizations of these
constructions have been studied in recent years; for progress
on vacua with
explicit complex structure moduli stabilization, see \cite{Louis:2012nb,MartinezPedrera:2012rs}, and on constructing
explicit de Sitter flux vacua with a chiral spectrum, see the recent
\cite{Cicoli:2013cha}.  While this progress is significant and
provides excellent proofs of principle, a clear caveat to the explicit
construction of vacua is the enormity of the landscape.

Another approach is to study properties of the landscape more
broadly. In Type IIB flux compactifications this has included, for
example, the importance of four-form fluxes in obtaining the observed
value of the cosmological constant \cite{Bousso:2000xa}; issues of
computational complexity, including finding vacua in agreement with
cosmological data \cite{Denef:2006ad} and the systematic computation
of non-perturbative effective potentials \cite{Cvetic:2010ky}; and the
distribution and number of various types of supersymmetric and
non-supersymmetric vacua \cite{Denef:2004ze,Denef:2004cf}. 
Progress has also been made in understanding vacua in strongly coupled 
corners of the landscape. % while progress in
%studies M-theory on $G_2$ manifolds has been limited by the lack of
%construction techniques\comj{Do you like this footnote? Leaving it out
%  might be fine too . . .}\footnote{See \cite{CortiHaskins2} and
%  references therein for progress on compact $G_2$ manifolds from
%  twisted connected sums \cite{Kovalev}.},
For example there has been much progress in F-theory, beginning with
\cite{Donagi:2008ca,Beasley:2008dc}.

A final approach, which will be the one utilized in this paper, is to
understand how consistency conditions and properties of the landscape
differ from those of generic quantum field theories. The former case
is motivated in part by the existence of a swampland
\cite{Vafa:2005ui} of consistent effective theories which do not admit
a string embedding.  There are a number of examples of limitations on
gauge theories in the landscape not present in generic gauge theories.
In weakly coupled theories with D-branes, Ramond-Ramond tadpole
cancellation places stronger constraints
\cite{Uranga:2000xp,Aldazabal:2000dg,Ibanez:2001nd,Cvetic:2001nr} on low energy
gauge theories than anomaly cancellation, which include additional
anomaly nucleation constraints \cite{Halverson:2013ska} on SU$(2)$
gauge theories; see also \cite{Cvetic:2012xn} for a recent analysis of anomalies at strong coupling
in F-theory; ranks of gauge groups are often bounded \cite{Lerche:1986cx,Gmeiner:2005vz}; and the matter representations are 
limited by the fact that open strings have precisely two ends. While more
matter representations are possible at strong coupling, the
possibilities are still limited. For example, in F-theory the possible
non-Abelian \cite{Katz:1996xe, DeWolfe:1998zf,
Grassi:2000we,Grassi:2011hq,Morrison:2011mb,Grassi:2013kha,Hayashi:2014kca,Grassi:2014sda,Esole:2014bka} and Abelian
\cite{Morrison:2012ei,Borchmann:2013jwa,Cvetic:2013nia,Grimm:2013oga,Braun:2013nqa,Cvetic:2013uta,Borchmann:2013hta,Cvetic:2013jta,Cvetic:2013qsa}
matter representations are limited by the structure of holomorphic
curves in the geometry.

In \cite{Douglas:2006xy}, Douglas and Taylor studied the landscape of
intersecting brane models\footnote{See
  \cite{Blumenhagen:2005mu,Blumenhagen:2006ci,Cvetic:2011vz} for
  reviews of these compactifications and their implications for
  particle physics.} in Type IIA compactifications on a particular
$\mathbb{Z}_2\times\mathbb{Z}_2$-orientifold\footnote{See \cite{Blumenhagen:2004xx,Gmeiner:2005vz}
for a  finiteness proof of the number of supersymmetric D-branes 
for fixed complex structures of this orientifold and  
\cite{Cvetic:2001tj,Cvetic:2001nr}
for a first construction of chiral $\mathcal{N}=1$ supersymmetric three-family models.}. They found that the conditions on D6-branes
necessary for $\cN=1$ supersymmetry in four dimensions, together with
the D6-brane tadpole cancellation condition required for consistency
of the theory, allow only a finite number of such D6-brane
configurations\footnote{See \cite{Cvetic:2004ui} for a counting of three 
family vacua, that yields eleven such vacua.}. In each configuration, the 
four-dimensional gauge
group and matter spectrum can be determined explicitly. Thus, the
finite number of D6-brane configurations gives a finite number of
gauge sectors in a 4D supergravity theory that arise from these 
compactifications, and their
statistics were studied explicitly. It is expected that the finiteness
result which they obtained is a much more general consequence of
supersymmetry and tadpole cancellation conditions, rather than a
phenomenon specific to their construction. In fact, they proposed a
potential generalization of their result to theories with magnetized
D9- and D5-branes on smooth elliptically fibered Calabi-Yau
threefolds, which can also be motivated by mirror symmetry, for 
example.

In this paper, we demonstrate that finiteness results are indeed   much
more general phenomena, providing further evidence that the string
landscape itself is finite.
 Specifically, in large
volume Type IIB compactifications on many smooth elliptically fibered
Calabi-Yau threefolds $\pi:\,X\rightarrow B$, we prove that there are
finitely many configurations of magnetized D9- and
D5-branes satisfying Ramond-Ramond tadpole cancellation and the
conditions necessary for $\cN=1$ supersymmetry in four dimensions.  We
formulate a general, mathematical proof showing the existence of computable, explicit bounds on the number of magnetic flux quanta on
the D9-branes and on the number of D5-branes, which only depends on
the topology of the manifold $B$ and is in particular independent of
its K\"ahler moduli, as long as they are in the large volume
regime of $X$. These bounds involve simple geometric quantities of the
twofold base $B$ of $X$ and the proof applies to any base $B$ that
satisfies certain geometric conditions, that are easy to check and
summarized in this paper.  Furthermore, we show that these conditions
are met by the almost Fano twofold bases $B$ given by the toric
varieties associated to all 16 reflexive two-dimensional polytopes
and the generic del Pezzo surfaces $dP_n$ for $n=0,\ldots, 8$. This
list in particular includes also the Hirzebruch surfaces $\bF_0 =
\bP^1 \times \bP^1$, $\bF_1 = dP_1$, and $\bF_2$. In this work, we
focus on the finiteness question only, leaving the analysis of gauge
group and matter spectra for this finite set of configurations to
future work.

This paper is organized is follows. In section \ref{sec:background} we
provide the relevant background on Type IIB compactifications with
magnetized D9- and D5-branes and elliptically fibered Calabi-Yau
threefolds at large volume. We first discuss the tadpole and
supersymmetry conditions of general such setups, then present a basic
account on elliptically fibered Calabi-Yau threefolds and end with a
detailed discussion of the geometries of the twofold bases
$B=\mathbb{F}_k$, $dP_n$ and the 16 toric twofolds.  In section
\ref{sec:finiteness} we prove the finiteness of such D-brane
configurations. We begin by rewriting the tadpole and supersymmetry
constraints in a useful form for the proof and make some definitions,
then show the power of these definitions by proving finiteness on
$\mathbb{P}^2$. Finally, we prove the existence of explicit bounds on
the number of fluxes and D5-branes, that apply certain geometric
conditions on $B$ are satisfied.  In section \ref{sec:conclusion} we
conclude and discuss possibilities for future work.  In appendix
\ref{app:MoriKaehlerFano} we discuss the detailed structure of the
K\"ahler cone of generic del Pezzo surfaces $dP_n$ and give the
proof of positive semi-definiteness of certain intersection matrices
on these K\"ahler cones, which is essential for the proof. In appendix 
\ref{app:explicitdata} we summarize the geometrical data of the considered 
almost Fano twofolds which is necessary to explicitly compute the bounds
derived in this work.

While finishing this manuscript we learned about the related work 
\cite{AndersonTaylor} in which heterotic compactifications and their F-theory duals are 
constructed systematically.

\clearpage

%%%%%%%%%%%%%%%%%%%%%%%%%%%%%%%%%%%%%%%%%%%%%%%%%%%%%%%%%%%%%%%%%%%%%%%%%%%%%%%%%%%%%%%%%%%%%%%
%%%%%%%%%%%%%%%%%%%%%%%%%%%%%%%%%%%%%%%%%%%%%%%%%%%%%%%%%%%%%%%%%%%%%%%%%%%%%%%%%%%%%%%%%%%%%%%
\section{Magnetized Branes on Elliptically Fibered Calabi-Yau Manifolds}
\label{sec:background}
%%%%%%%%%%%%%%%%%%%%%%%%%%%%%%%%%%%%%%%%%%%%%%%%%%%%%%%%%%%%%%%%%%%%%%%%%%%%%%%%%%%%%%%%%%%%%%%
%%%%%%%%%%%%%%%%%%%%%%%%%%%%%%%%%%%%%%%%%%%%%%%%%%%%%%%%%%%%%%%%%%%%%%%%%%%%%%%%%%%%%%%%%%%%%%%

We consider an $\mathcal{N}=1$ compactification of Type IIB string theory
on a Calabi-Yau threefold $X$ with spacetime-filling D5-branes,
magnetized D9-branes, i.e.~D9-branes with magnetic fluxes\footnote{For the 
generic case of gauge bundles with
non-Abelian structure groups, see \cite{Blumenhagen:2005pm}.}, 
and an O9-plane.  We group the D9-branes into stacks of $N^\alpha$ branes and 
their orientifold image branes. The corresponding line bundle magnetic fluxes in 
$H^{(1,1)}(X,\mathbb{Z})$ are denoted by $F^\alpha$, respectively, $-F^\alpha$ 
for the image brane. In addition, we add stacks of  D5-branes wrapping a curve 
$\Sigma^{\text{D5}}$.

In the following discussion of these models\footnote{These models were  first proposed for model-building in \cite{Bachas:1995ik}.} we mainly follow the notations and conventions of \cite{Douglas:2006xy}, to which 
we also refer for more details. For a concise review see  \cite{Blumenhagen:2006ci}.

%%%%%%%%%%%%%%%%%%%%%%%%%%%%%%%%%%%%%%%%%%%%%%%%%%%%%%%%%%%%%%%%%%%%%%%%%%%%%%%%%%%%%%%%%%%%%%%
%%%%%%%%%%%%%%%%%%%%%%%%%%%%%%%%%%%%%%%%%%%%%%%%%%%%%%%%%%%%%%%%%%%%%%%%%%%%%%%%%%%%%%%%%%%%%%%
\subsection{Tadpole Cancellation and SUSY Conditions}
\label{sec:TapoleSUSY}
%%%%%%%%%%%%%%%%%%%%%%%%%%%%%%%%%%%%%%%%%%%%%%%%%%%%%%%%%%%%%%%%%%%%%%%%%%%%%%%%%%%%%%%%%%%%%%%
%%%%%%%%%%%%%%%%%%%%%%%%%%%%%%%%%%%%%%%%%%%%%%%%%%%%%%%%%%%%%%%%%%%%%%%%%%%%%%%%%%%%%%%%%%%%%%%

D-branes carry Ramond-Ramond charge and source flux lines that must
be cancelled in the compact extra dimensions, in accord with Gauss'
law. These give rise to the so-called tadpole cancellation
conditions. The D5-brane tadpole cancellation conditions are
\begin{equation}\label{Tadpoles}
n_I^{D5}-T_I=\sum_{\alpha}N^{\alpha}\mathcal{K}(F^{\alpha},F^{\alpha},D_I)\,,\qquad  \forall\,D_I\in H^{(1,1)}(X)
\end{equation}
(we note a sign  difference between the D5-tadpoles\footnote{We thank Washington Taylor and Michael Douglas for helpful correspondence related to this 
issue.} in \cite{Douglas:2006xy} and \cite{Blumenhagen:2006ci}; here, we use the sign in \cite{Blumenhagen:2006ci})
where $D_I$ is a basis of divisors on $X$, $\mathcal{K}(\cdot,\cdot,\cdot)$
is the classical triple intersection of three two-forms or their dual
divisors, where we  denote, by abuse of notation, a divisor $D_I$ and its
Poincar\'e dual by the same symbol. Furthermore, we define the curvature 
terms
\beq \label{eq:TI}
	T_I = \int_{D_I}c_2(X)\,,\qquad n_I^{\text{D5}}=\Sigma^{\text{D5}}\cdot D_I\,,
\eeq
where $c_2(X)$ is the second Chern-class on $X$ and $\Sigma^{\text{D5}}$ is the curve wrapped by all D5-branes.
The integral wrapping numbers $n^{\text{D5}}_I$ are positive if $\Sigma^{\text{D5}}$ is a holomorphic curve
and the $D_I$ are effective divisors. Following \cite{Blumenhagen:2006ci}, the D9-brane tadpole cancellation condition reads
\begin{equation}
  \label{eq:D9tadpole}
  16 = \sum_\alpha N^\alpha\,.
\end{equation}

Compactification of Type IIB string theory on a Calabi-Yau manifold
gives rise to a four-dimensional $\cN=2$ supergravity theory. An O9-orientifold
breaks half of these supersymmetries and give rise to an $\cN=1$ supergravity theory at low
energies. Only D9- and D5-branes can be added in a supersymmetric way to this orientifold. 
However, this requires aligning the central charges $Z(F^\alpha)$ of the branes appropriately with the O9-plane. 
For consistency with the supergravity approximation, we have to assume that the K\"ahler parameters of the Calabi-Yau threefold
$X$ are at large volume. In this case, the conditions on the central charges\footnote{In general, the central charge (and also the K\" ahler 
potential on the K\"ahler moduli space) receives perturbative and non-perturbative $\alpha'$
corrections. Recently it has been understood
\cite{Honda:2013uca,Sugishita:2013jca,Hori:2013ika,Halverson:2013qca}
that these corrections are captured by the so-called Gamma class $\hat
\Gamma_X$ on $X$ rather than $\sqrt{Td_X}$. Since we study compactifications at large volume, these corrections can be neglected.} 
necessary for $\cN=1$ supersymmetry, with $J$ denoting the K\"ahler form on $X$, reduce to
\beq
\label{eq:SUSYcondition}
3\mathcal{K}(J,J,F^\alpha)=\mathcal{K}(F^\alpha,F^\alpha,F^\alpha)\,, \qquad
\mathcal{K}(J,J,J)>3\mathcal{K}(J, F^\alpha,F^\alpha)\,,
\eeq
to which we will refer in the following as the SUSY equality and the SUSY inequality respectively. 

%%%%%%%%%%%%%%%%%%%%%%%%%%%%%%%%%%%%%%%%%%%%%%%%%%%%%%%%%%%%%%%%%%%%%%%%%%%%%%%%%%%%%%%%%%%%%%%
%%%%%%%%%%%%%%%%%%%%%%%%%%%%%%%%%%%%%%%%%%%%%%%%%%%%%%%%%%%%%%%%%%%%%%%%%%%%%%%%%%%%%%%%%%%%%%%
\subsection{Smooth Elliptic Calabi-Yau Threefolds}
\label{sec:geometryEllThreefolds}
%%%%%%%%%%%%%%%%%%%%%%%%%%%%%%%%%%%%%%%%%%%%%%%%%%%%%%%%%%%%%%%%%%%%%%%%%%%%%%%%%%%%%%%%%%%%%%%
%%%%%%%%%%%%%%%%%%%%%%%%%%%%%%%%%%%%%%%%%%%%%%%%%%%%%%%%%%%%%%%%%%%%%%%%%%%%%%%%%%%%%%%%%%%%%%%

We study compactifications where $X$ is a smooth elliptically fibered
Calabi-Yau threefold over a complex two-dimensional base $B$,
$\pi:X\rightarrow B$, with a single section $\sigma:B\rightarrow X$,
the zero-section. The class of the section $\sigma$ is the base
$B$. By the adjunction formula and the Calabi-Yau condition, the
section $\sigma$ obeys the relation
\begin{equation} \label{eq:sigma^2}
	\sigma^2=-c_1 \sigma\,,
\end{equation} 
where $c_1$ denotes the first Chern class of the base $B$.
For a smooth threefold the second cohomology is given by $H^{(1,1)}(X)=
\sigma H^{0}(B)\oplus \pi^*H^{(1,1)}(B)$. A basis of $H^{(1,1)}(X)$ generating the
K\"ahler cone of $X$ is given by
\begin{equation} \label{eq:D_I}
	D_I=(D_0,D_i)\,,	\qquad D_0=\sigma+\pi^*c_1\,,\qquad I=0,1,\ldots, p \equiv h^{(1,1)}(B)
\end{equation} 
 with Poincar\' e duality implied when discussing divisors. The divisors $D_i$, $i=1,\ldots, p$, 
are inherited from generators of the K\"ahler cone of the base, by abuse of notation denoted by the same symbol as their counterparts in 
$B$. The divisor $D_0$ is dual to the elliptic fiber $\mathcal{E}$ in the sense that it does not intersect any 
curve in $B$, i.e.~$D_0\cdot \sigma\cdot D_i=0$  by \eqref{eq:sigma^2}, and obeys $D_0\cdot \mathcal{E}=1$. We note that $\mathcal{E}$ is an effective curve.

We emphasize that the requirement of a smooth elliptically fibered $X$, which means 
that the fibration can at most have $I_1$-fibers, 
restricts the choice of two-dimensional bases $B$. The bases we consider here 
are smooth almost Fano twofolds, which are the nine del Pezzo surfaces $dP_n$, 
$n=0,\ldots,8$, that are the blow-ups of $\mathbb{P}^2$ at up to eight generic points, the Hirzebruch surfaces $\mathbb{F}_k$, 
$k=0,1,2$ and the toric surfaces described by the 16 reflexive two-dimensional polytopes.  For these bases, the elliptic fibration $X$ is smooth.

We abbreviate the triple intersections of three divisors on $X$ as $\mathcal{K}_{IJK}= \mathcal{K}(D_I,D_J,D_K)$.  In the particular basis \eqref{eq:D_I}, 
we obtain the following structure of the triple intersections,
\begin{equation} \label{eq:C_IJKrels}
	\mathcal{K}_{ijk}=0\,,\quad \mathcal{K}_{00i}=\sum_j^p b_j\mathcal{K}_{0ij}\,,\quad \mathcal{K}_{000}=\sum_{i,j}^pb_ib_j 
	\mathcal{K}_{0ij}=\sum_i^pb_i\mathcal{K}_{00i}\,,
\end{equation}
where the first equation is a property of the fibration and the second and third relations can be derived using \eqref{eq:sigma^2}.
We also introduce the $p\times p$-matrix
\beq \label{eq:CmatOnB}
(C)_{ij}:=\mathcal{K}(D_0,D_i,D_j)=\mathcal{K}_{0ij}\,,
\eeq
which defines a bilinear pairing on divisors on the base $B$. For the cases we consider here its signature is $(1,p-1)$ for $\mathbb{F}_k$ 
and $dP_n$, $n=1,\ldots,8$, and $C=1$ for $\mathbb{P}^2=dP_0$. Note that it will be convenient at some places in this work to view 
$H^{(1,1)}(B)$ as a $p$-dimensional vector
space equipped with an  inner product \eqref{eq:CmatOnB}. We denote the inner product of two vectors $v$, $w$ in $H^{(1,1)}(B)$ simply by $C(v,w)$.  In addition, we view the first Chern class $c_1$ of $B$, the fluxes $F^\alpha$ and
the K\"ahler form $J$ as column vectors
\begin{equation}\label{eq:vectors}
j= 
\left( \begin{array}{c}
j_1 \\
. \\
. \\
. \\
j_p  \end{array} \right)
\qquad
m^{\alpha}= 
\left( \begin{array}{c}
m^{\alpha}_1 \\
. \\
. \\
. \\
m^{\alpha}_p  \end{array} \right)
\qquad
b= 
\left( \begin{array}{c}
b_1 \\
. \\
. \\
. \\
b_p  \end{array} \right)\,.
\end{equation}
Here the components of these vectors are defined via the expansion w.r.t.~the $D_I$ in \eqref{eq:D_I},
\begin{equation} \label{eq:basisexp}
	\pi^*c_1=\sum_{i=1}^p b_iD_i\,,\quad F^\alpha=m_0^\alpha D_0+\sum_{i=1}^p m^\alpha_i D_i\,,\quad J=j_0D_0+\sum_{i=1}^p j_i D_i\,,
\end{equation}
where $b_i\in \mathbb{Q}^+$, $m^\alpha_I \in \mathbb{Q}$ and $j_I \in \bR^+$.\footnote{We allow 
here for rational coefficients $m^\alpha_I$, $b_i$ in the expansion of  $F^\alpha$, $\pi^*c_1$ 
that are in the integral homology $H^{(1,1)}(X,\mathbb{Z})$ in order to account for the possibility of K\"ahler generators $D_I$ that only span a 
sublattice of $H^{(1,1)}(X,\mathbb{Z})$ of index greater than one. This can happen for non-simplicial K\"ahler cones.}

We emphasize that the flux quantization condition $F^\alpha\in H^{(1,1)}(X,\mathbb{Z})$ can be equivalently written as
\beq
	\int_{C}F^\alpha\in \bZ\,,\quad\qquad \forall\, C\in H_2(X,\mathbb{Z})\,,
\eeq  
where $C$ is any curve in $X$. Noting that the elliptic fiber 
$\mathcal{E}$ and the K\"ahler generators $D_i$ of $B$ are integral 
curves in $X$, this implies, using \eqref{eq:basisexp}, 
\beq \label{eq:integralitymalpha}
	\int_{\mathcal{E}}F^\alpha=m_0^\alpha\in \mathbb{Z}\,,\qquad \int_{D_i}F^\alpha=\sum_j^p C_{ij}m^\alpha_j \in \bZ\,.
\eeq

We conclude by noting that for smooth elliptically fibered Calabi-Yau threefolds, the second Chern class $c_2(X)$ can be computed explicitly, 
see e.g.~\cite{Friedman:1997yq} for a derivation. By adjunction one obtains $c_2(X)=12\sigma\cdot c_1+\pi^*(c_2+11c_1^2)$ with 
$c_2$ the second Chern class on $B$, employing the relation \eqref{eq:sigma^2}. Using this and \eqref{eq:C_IJKrels} we evaluate 
the curvature terms in \eqref{eq:TI} as
\beq \label{eq:T_iEvaluated}
	T_0=\int_B(c_2+11c_1^2)\,,\qquad T_i=12\int_{D_i}c_1=12\mathcal{K}_{00i}\,,
\eeq
which is straightforward to evaluate for concrete bases $B$.

%%%%%%%%%%%%%%%%%%%%%%%%%%%%%%%%%%%%%%%%%%%%%%%%%%%%%%%%%%%%%%%%%%%%%%%%%%%%%%%%%%%%%%%%%%%%%%%
%%%%%%%%%%%%%%%%%%%%%%%%%%%%%%%%%%%%%%%%%%%%%%%%%%%%%%%%%%%%%%%%%%%%%%%%%%%%%%%%%%%%%%%%%%%%%%%
\subsection{Basic Geometry of Almost Fano Twofolds}
\label{sec:B2geometries}
%%%%%%%%%%%%%%%%%%%%%%%%%%%%%%%%%%%%%%%%%%%%%%%%%%%%%%%%%%%%%%%%%%%%%%%%%%%%%%%%%%%%%%%%%%%%%%%
%%%%%%%%%%%%%%%%%%%%%%%%%%%%%%%%%%%%%%%%%%%%%%%%%%%%%%%%%%%%%%%%%%%%%%%%%%%%%%%%%%%%%%%%%%%%%%%

In this section we briefly discuss the geometrical properties of 
the almost Fano twofolds $B=\mathbb{F}_k$, $dP_n$ and the toric surfaces. 
The discussion in this section is supplemented by the explicit computations of the K\"ahler cones of $dP_n$
in appendix \ref{app:MoriKaehlerFano} and the summary of the key geometric
data of $\mathbb{F}_k$, $dP_n$ in Appendix \ref{app:explicitdata}, which is critical
for the proof in Section \ref{sec:finiteness}.

\subsubsection{Hirzebruch Surfaces}

The Hirzebruch surfaces $\mathbb{F}_k$ are $\mathbb{P}^1$-bundles 
over $\mathbb{P}^1$ of the form $\mathbb{F}_k=\mathbb{P}(\mathcal{O}\oplus \mathcal{O}(k))$. There is an infinite family 
of such bundles for every positive $k\in\mathbb{Z}_{\geq 0}$.

The isolated section of this bundle, $S$, and the fiber $F$ are effective 
curves generating the Mori cone and spanning the entire second homology 
\beq \label{eq:H2Fk}
	H_2(\mathbb{F}_k,\mathbb{Z})=\langle S,F\rangle\,.
\eeq
Their intersections read
\beq \label{eq:intsFk}
	S^2=-k\,,\qquad S\cdot F=1\,,\qquad F^2=0\,.
\eeq
From this we deduce that the generators $D_i$, $i=1,2$,
of the K\"ahler cone, which are 
defined to be dual to the generators in \eqref{eq:H2Fk}, read
\beq\label{eq:KaehlerConeFk}
	D_1=F\,,\qquad D_2=S+kF\,.
\eeq
The Chern classes on $\mathbb{F}_k$ read
\beq \label{eq:ChernFk}
	c_1(\mathbb{F}_k)=2S+(2+k)F=(2-k)D_1+2D_2\,,\qquad c_2(\mathbb{F}_k)=4\,,
\eeq
which implies that the vector $b$ in \eqref{eq:vectors} is $b=(2-k,2)^T$.

Using \eqref{eq:intsFk}, we compute the triple intersections in \eqref{eq:C_IJKrels}, in particular \eqref{eq:CmatOnB}, as
\beq \label{eq:tripleIntsFk}
	C=\begin{pmatrix}
	0 & 1 \\
	1 & k
\end{pmatrix}\,,\qquad 	 \mathcal{K}_{001}=2\,,\qquad \mathcal{K}_{002}=2+k\,,\qquad \mathcal{K}_{000}=8\,,
\eeq
from which the curvature terms in \eqref{eq:T_iEvaluated} immediately 
follow as
\beq \label{eq:T_iEvaluatedFk}
	T_0=92\,,\qquad T_1=24\,,\qquad T_2=24+12k
\eeq

We emphasize that $\mathbb{F}_k$ by means of \eqref{eq:ChernFk} is Fano 
for $k< 2$ and almost Fano for $k=2$, since the coefficient 
$b_1=2-k\geq 0$. The general elliptic Calabi-Yau fibration $X$ over $F_k$ 
with $k=0,1,2$ is smooth and develops $I_3$-singularities for $k=3$ up to 
$II^*$-singularities for $k=12$, before terminal singularities occur for 
$k>12$ \cite{Morrison:1996pp}. Thus, we focus on the Hirzebruch surfaces with 
$k=0,1,2$.

\subsubsection{Del Pezzo Surfaces}

The Fano del Pezzo surfaces $dP_n$ are the blow-up of $\mathbb{P}^2$ at 
up to eight generic points.\footnote{See \cite{Huang:2013yta,Huang:2014nwa} for 
recent computations of refined BPS invariants on del Pezzo surfaces as well as their interpretation in M-/F-theory.} 

Their second homology group is spanned by the pullback of the hyperplane 
on $\mathbb{P}^2$, denoted by $H$, and the classes of the exceptional 
divisors, denoted as $E_i$, $i=1,\ldots, n$, 
\beq \label{eq:H2dPn}
	H_2(dP_n,\mathbb{Z})=\langle H,E_{i=1,\ldots,n}\rangle\,.
\eeq
The intersections of these classes read
\beq \label{eq:intsdPn}
	H^2=1\,,\qquad H\cdot E_i=0\,,\qquad E_i\cdot E_j=-\delta_{ij}\,.
\eeq
The Chern classes on $dP_n$ read
\beq \label{eq:CherndPn}
	c_1(dP_n)=3H-\sum_{i=1}^nE_i\,,\qquad c_2(dP_n)=3+n\,.
\eeq

The Mori cone of $dP_n$ for $n>1$ is spanned by the curves $\Sigma$ obeying \cite{demazure1980seminaire,Donagi:2004ia}
\beq
\label{eq:MoriConedPn}
	\Sigma^2=-1\,,\qquad \Sigma\cdot [K_{dP_n}^{-1}]=1\,,
\eeq
where $[K_{dP_n}^{-1}]$ is the anti-canonical divisor in $dP_n$, which 
is dual to $c_1(dP_n)$. By adjunction, we see that the curves obeying 
\eqref{eq:MoriConedPn} obey the necessary condition for being 
$\mathbb{P}^1$'s. By solving the conditions \eqref{eq:MoriConedPn} with
the ansatz $a_0H+\sum_{i=1}^n a_i E_i$ for $a_0,\,a_i\in \mathbb{Z}$, we 
obtain a cone that is simplicial, i.e.~generated by $h^{(1,1,)}(B)=1+n$ 
generators, for $n=0,1,2$ and non-simplicial for $n>2$. The number of 
generators, beginning with $dP_2$, furnish irreducible 
representations of $A_1$, $A_1\times A_2$, $A_4$, $D_5$, $E_{n}$, for 
$n=6,7,8$, which concretely are $\mathbf{3}$, 
$\mathbf{2}\otimes\mathbf{3}$, 
$\mathbf{10}$, $\mathbf{16}$, $\mathbf{27}$, $\mathbf{56}$, 
$\mathbf{248}$.\footnote{The genuine roots in $H_2(dP_n)$ are the 
$-2$-curves orthogonal to $[K^{-1}_{dP_n}]$, i.e.~$\alpha_i=E_i-E_{i+1}$, $i=1,\ldots, n-1$, 
$\alpha_n=H-E_1-E_2-E_3$ for $n>2$. These act on 
$H_2(dP_n)$ by means of the Weyl group, cf.~\cite{demazure1980seminaire}.}
For the simplicial cases the Mori cone reads
\beq \label{eq:simplicialMCdPn}
	\mathbb{P}^2\,:\,\,\langle H\rangle\,,\qquad dP_1\,:\,\,\langle 
	E_1,H-E_1\rangle\,,\qquad dP_2\,:\,\,\langle E_1,E_2, 
	H-E_1-E_2\rangle\,
\eeq 
and we refer to appendix \ref{app:MoriKaehlerFano} for more details on the 
non-simplicial cases. 

Consequently, also the K\"ahler cones of the $dP_n$, which are the dual
of the Mori cones defined by \eqref{eq:MoriConedPn}, are non-simplicial 
for $n>2$. The K\"ahler cone is spanned by rational curves $\Sigma$ 
obeying 
\beq \label{eq:KaehlerConedPn}
	\Sigma^2=0\,,\quad\Sigma\cdot [K^{-1}_{dP_n}]=2\qquad \text{or}\qquad \Sigma^2=1\,,\quad \Sigma\cdot [K^{-1}_{dP_n}]=3\,,
\eeq
which again implies by adjunction that $\Sigma=\mathbb{P}^1$.
The solutions over the integers of these conditions yield the generators of the K\"ahler cone of $dP_n$
which again follow the representation theory of the above mentioned Lie 
algebras. The number of generators, starting with $dP_0$, is $1$, $2$, 
$3$, $5$, $10$, $26$, $99$, $702$ and $19440$, see appendix 
\ref{app:MoriKaehlerFano}. 
In the simplicial cases, the K\"ahler cone generators read 
\beq\label{eq:simplicialKCdPn}
	\mathbb{P}^2\,:\,D_1= H\,,\quad dP_1\,:\,D_1=H-E_1,\,D_2=H\,,\quad dP_2\,:\, D_1=H-E_1, \,
	D_2=H-E_2,\,D_3=H\,
\eeq

Generically, for $n\geq 2$ the vector $c_1(dP_n)$ is the 
center both of the K\"ahler and Mori cone. This implies that for all del 
Pezzo surfaces, the coefficients $b_i$ are positive. For
the simplicial K\"ahler cones, this can be computed explicitly. For
the non-simplicial cases we will argue in appendix 
\ref{app:MoriKaehlerFano}, that a covering of the K\"ahler cone by 
simplicial subcones,  i.e.~subcones with 
$h^{(1,1)}$ generators, with all $b_i\geq 0$ always exists. 
We note that for all $dP_n$, the defining property of the K\"ahler cone \eqref{eq:KaehlerConedPn}, together with \eqref{eq:C_IJKrels}, implies
the intersections
\beq \label{eq:C00iC000dPn}
	\mathcal{K}_{00i}=2,3\,, \qquad \mathcal{K}_{000}=9-n\,.
\eeq
In addition, by explicit computations we check in general that all 
$C_{ij}\geq 0$ for all pairs of K\"ahler cone generators. The intersections 
\eqref{eq:C00iC000dPn} together with \eqref{eq:intsdPn}, \eqref{eq:CherndPn} 
further imply that the curvature terms in \eqref{eq:T_iEvaluated}  read
\beq \label{eq:T_iEvaluateddPn}
	T_0=102-10n\,,\qquad T_i=24,\, 36
\eeq
For the three simplicial cases of $\mathbb{P}^2$, $dP_1$ and $dP_2$,
we compute the matrices \eqref{eq:CmatOnB} in the basis \eqref{eq:simplicialKCdPn} as
\beq \label{eq:tripleIntsdPnsimplicial}
	C_{\mathbb{P}^2}=1\,,\qquad  C_{dP_1}=\begin{pmatrix}
	0 & 1 \\
	1 & 1
\end{pmatrix}\,,\qquad  C_{dP_2}=\begin{pmatrix}
	0 & 1 & 1\\
	1 & 0 & 1\\
	1 & 1 & 1
\end{pmatrix}\,.
\eeq

We emphasize that the del Pezzos $dP_n$ by means of \eqref{eq:C00iC000dPn} are 
Fano for $n< 9$ and almost Fano for $n=9$,  since 
$c_1^2= 0$. The surface $dP_9$ is the rational elliptic surface. Its 
Mori cone is the Mordell-Weil group of rational sections by 
\eqref{eq:MoriConedPn}. Thus, it as well as its 
dual K\"ahler cone is infinite dimensional.  We will only consider the Fano del 
Pezzo surfaces $dP_n$, $n<9$.

\subsubsection{Toric Surfaces from Reflexive Polytopes}

Toric surfaces obtained from fine star triangulations of reflexive
polytopes are smooth almost Fano twofolds.\footnote{See the recent  
\cite{Klemm:2012sx} for a systematic study of the quantum geometry of the 
elliptically fibered Calabi-Yau manifolds over these bases.} There are $16$ such polytopes
in two dimensions, which are displayed in Figure \ref{fig:2dpoly}.

A number of these twofolds are simply toric descriptions of
previously described surfaces. Specifically, these are $\bP^2$,
$dP_1$, $dP_2$, $dP_3$, $\bF_0$ and $\bF_2$ which are described by polytopes
$1$, $3$, $5$, $7$, $2$ and $4$, respectively. From the form of some of
the other polytopes it is clear that they can be obtained from $\bP^2,
dP_1, dP_2, $ or $dP_3$ via toric blow-up. For example, reflecting polytope
$7$ through the vertical axis going through its center and performing
a toric blow-up associated to the point $(-1,1)$, one obtains polytope
$12$. Thus, the smooth Fano surface associated to polytope $12$ is
a toric realization of $dP_4$ at a non-generic point in its complex
structure moduli space.  

The toric varieties associated to all these 16 reflexive polytopes
can be constructed explicitly using the software package Sage \cite{sage}.  The 
intersections \eqref{eq:C_IJKrels}, \eqref{eq:CmatOnB} are readily constructed in a given fine
star triangulation and the K\"ahler cone can be obtained. We summarize the
geometric data necessary for the computation of the bounds derived below in the 
proof in Appendix \ref{app:explicitdata}.

\begin{figure}[h]
  \centering
  \includegraphics[scale=.4]{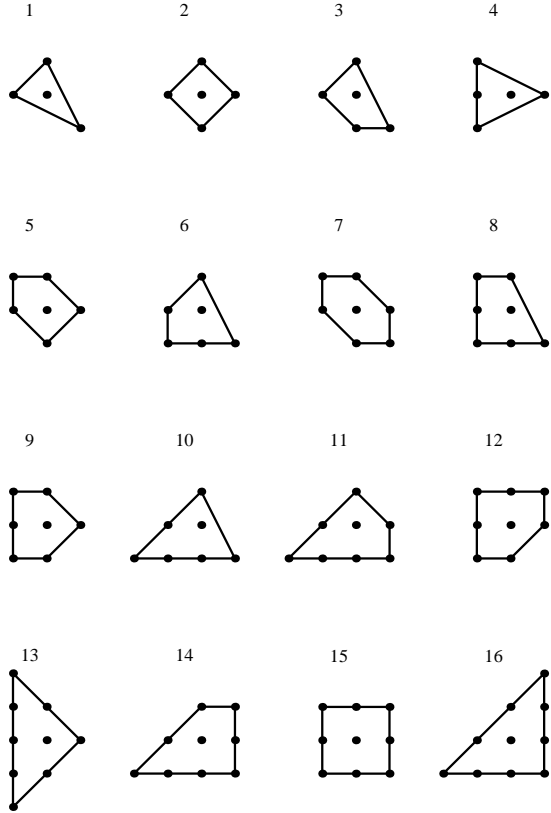}
  \caption{The sixteen two-dimensional reflexive polytopes which
    define the almost Fano toric surfaces via their fine star
    triangulations.}
  \label{fig:2dpoly}
\end{figure}

%%%%%%%%%%%%%%%%%%%%%%%%%%%%%%%%%%%%%%%%%%%%%%%%%%%%%%%%%%%%%%%%%%%%%%%%%%%%%%%%%%%%%%%%%%%%%%%
%%%%%%%%%%%%%%%%%%%%%%%%%%%%%%%%%%%%%%%%%%%%%%%%%%%%%%%%%%%%%%%%%%%%%%%%%%%%%%%%%%%%%%%%%%%%%%%
\section{Finiteness of Magnetized D9- \& D5-brane Configurations}
\label{sec:finiteness}
%%%%%%%%%%%%%%%%%%%%%%%%%%%%%%%%%%%%%%%%%%%%%%%%%%%%%%%%%%%%%%%%%%%%%%%%%%%%%%%%%%%%%%%%%%%%%%%
%%%%%%%%%%%%%%%%%%%%%%%%%%%%%%%%%%%%%%%%%%%%%%%%%%%%%%%%%%%%%%%%%%%%%%%%%%%%%%%%%%%%%%%%%%%%%%%

In this section we bound the number of possible gauge sectors
arising in the considered compactifications of Type IIB string theory. 

As emphasized in section \ref{sec:background}, the number $N^\alpha$ of branes in a stack and
their associated magnetic fluxes $F^\alpha$ are subject to the consistency conditions imposed by
tadpole cancellation conditions \eqref{Tadpoles}, \eqref{eq:D9tadpole} and
the SUSY conditions \eqref{eq:SUSYcondition}. Since the numbers $N^\alpha$
of D9-branes are bounded by \eqref{eq:D9tadpole}, it is
therefore the goal of this proof to bound the flux quanta $F^\alpha$ and the number of D5-branes
in $\Sigma^{\text{D5}}$.

Most of the proofs in this work have the same basic structure. The key point is to find
a  bound on the number of different flux configurations $F^\alpha$ and D5-branes $\Sigma^{\text{D5}}$
at an arbitrary point in the large volume regions of K\" ahler moduli space, i.e.~a bound
that is independent of the K\"ahler moduli. As we will see, proving this requires an intriguing 
interplay between both the tadpole conditions \eqref{Tadpoles}, \eqref{eq:D9tadpole} and the SUSY conditions
\eqref{eq:SUSYcondition}, a general rubic which was also used in the proof of \cite{Douglas:2006xy}\footnote{The interplay 
between SUSY and tadpole conditions has also been used in 
\cite{Blumenhagen:2004xx,Gmeiner:2005vz} for rigid 
$\mathbb{Z}_2\times\mathbb{Z}_2$-orientifolds and for other models in \cite{Gmeiner:2007zz,Honecker:2012qr}.}.  In addition, 
the following proof applies if  a list of geometrical properties, listed at the beginning of Section \ref{sec:generic proof}, are satisfied. 
These are obeyed for the considered examples $B=\mathbb{F}_k$, $dP_n$ and the toric surfaces. 

Before delving into the details of the proof, let us introduce a very important notation.
Because of their fundamentally different contributions to \eqref{Tadpoles}, \eqref{eq:D9tadpole} and \eqref{eq:SUSYcondition} it
is useful to split D9-brane stacks into 
to qualitatively different types according to their flux quanta. We denote D9-brane stacks with
$m^\alpha_0 \ne 0$ as \emph{$\beta$-branes}, and those with
$m^\alpha_0 = 0$ as \emph{$\gamma$-branes}:
\begin{equation}
\label{eq:BetaGammaBranes}
	\xymatrix @C=-1in {
	&\parbox{5cm}{\centering  D9-branes} \ar[rd]\ar[ld]  & \\
	\parbox{6.5cm}{\centering$\beta$-branes:  $m^\beta_0\neq 0$}&  &\parbox{6.5cm}{\centering  $\gamma$-branes: $m^\gamma_0= 0$}
	}
\end{equation}
%In discussing
%the potential boundedness of a $\beta$-brane or a $\gamma$-brane, we
%implicitly mean the potential boundedness of its associated flux quanta.
In addition, in the rest of this section we label fluxes of a $\beta$- and $\gamma$-brane by $m^\beta_I$ and $m^\gamma_I$, 
respectively. 

We begin in Section \ref{sec:genericDiscussion} by preparing for the general finiteness proof by writing out the tadpoles and SUSY 
conditions of Section \ref{sec:background} for elliptically fibered Calabi-Yau threefolds $X$. 
We also make certain definitions and deduce a number 
of simple inequalities and bounds, that will be essential for the later discussion. Then, in Section \ref{sec:P2} we prove finiteness
for the special base $B=\mathbb{P}^2$, which will demonstrate the usefulness of the definitions of the previous section and serve as a 
warm-up for the general proof in Section \ref{sec:generic proof}.

%%%%%%%%%%%%%%%%%%%%%%%%%%%%%%%%%%%%%%%%%%%%%%%%%%%%%%%%%%%%%%%%%%%%%%%%%%%%%%%%%%%%%%%%%%%%%%%
%%%%%%%%%%%%%%%%%%%%%%%%%%%%%%%%%%%%%%%%%%%%%%%%%%%%%%%%%%%%%%%%%%%%%%%%%%%%%%%%%%%%%%%%%%%%%%%
\subsection{Prerequisites: Definitions \& Basic Inequalities}
\label{sec:genericDiscussion}
%%%%%%%%%%%%%%%%%%%%%%%%%%%%%%%%%%%%%%%%%%%%%%%%%%%%%%%%%%%%%%%%%%%%%%%%%%%%%%%%%%%%%%%%%%%%%%%
%%%%%%%%%%%%%%%%%%%%%%%%%%%%%%%%%%%%%%%%%%%%%%%%%%%%%%%%%%%%%%%%%%%%%%%%%%%%%%%%%%%%%%%%%%%%%%%

In this section we make some general definitions and observations 
necessary to formulate and organize the proof in Section \ref{sec:generic proof}. 

As a starting point, we observe that the SUSY conditions 
\eqref{eq:SUSYcondition} must be satisfied by each brane stack, but only involve the direction along the K\"ahler class $j$, 
whereas the tadpole conditions (\ref{Tadpoles}) have to be obeyed for each divisor $D_I$, but are summed across brane stacks. 
Thus, in order to bound each component $m^\alpha_I$ of every flux vector $m^\alpha$, labelled by the brane stack $\alpha$, it 
is crucial to identify quantities, that enter both types of constraints, when rewritten in a particular form. 

To this end, we write out the tadpole conditions explicitly in the basis of divisors \eqref{eq:D_I}.
The conditions (\ref{Tadpoles}) for $I=0$, to which 
we will refer in the future as the $0^{\text{th}}$-tadpole, reads
\begin{equation}\label{T0}
0^{\text{th}}\text{-tadpole}:\,\,\,n_0^{D5}-T_0 = \underbrace{\sum_{\beta}N^{\beta}C\left(b+\frac{m^{\beta}}{m^{\beta}_0}, b+\frac{m^{\beta}}{m^{\beta}_0}\right)(m_0^{\beta})^2}_{\beta\text{~brane contributions}}+\underbrace{\sum_{\gamma}N^{\gamma}C(m^{\gamma},m^{\gamma})}_{\gamma \text{~brane contributions}}
\end{equation}
where we  used \eqref{eq:CmatOnB} and
(\ref{eq:vectors}) and emphasized the respective contributions from $\beta$-branes 
and $\gamma$-branes. For $I=i$, to which we will refer as the $i^{\text{th}}$-tadpole, the tadpole (\ref{Tadpoles}) reads
\begin{equation}\label{eq:Ti}
i^{\text{th}}\text{-tadpole}:\,\,\,n_i^{D5}-T_i=\underbrace{\sum_{\beta}N^{\beta}t_i^{\beta}(m_0^{\beta})^2}_{\beta \text{~brane contributions}} \qquad \text{with} \qquad t_i^{\beta}\equiv 2\sum_{j=1}^pC_{ij}\left(\frac{b_j}{2}+\frac{m^{\beta}_j}{m^{\beta}_0}\right)\,.
\end{equation}
We note that the first term in $t^\beta_k$ can be written as $\sum_j b_jC_{ij}=\mathcal{K}_{00i}$ which is an integer by 
\eqref{eq:tripleIntsFk}, \eqref{eq:C00iC000dPn} and Table \ref{table:toric surface}.
The quantities $t_i^\beta$ can be defined for $\beta$-branes and play an 
important in the proof, because they naturally appear in 
the SUSY constraints. 
We emphasize that while both $\beta$-branes and $\gamma$-branes 
contribute to the $0^{\text{th}}$-tadpole condition, only $\beta$-branes
contribute to the $i^{\text{th}}$-tadpole as is indicated by the braces in \eqref{T0}, \eqref{eq:Ti}.

We note that one can immediately deduce a lower bound on the left hand side of 
\eqref{T0} and \eqref{eq:Ti} by setting the positive numbers $n_I^{\text{D5}}=0$:
\beq \label{eq:T_ILowerbounds}
	-T_0 \leq \sum_{\beta}N^{\beta}C\left(b+\frac{m^{\beta}}{m^{\beta}_0}, b+\frac{m^{\beta}}{m^{\beta}_0}\right)(m_0^{\beta})^2+\sum_{\gamma}N^{\gamma}C(m^{\gamma},m^{\gamma})\,,\quad
	-T_i\leq\sum_{\beta}N^{\beta}t_i^{\beta}(m_0^{\beta})^2\,.
\eeq
These lower bounds on the $i^{\text{th}}$-tadpoles imply, as we will see, 
that if the $t_i^\beta$ are bounded above, then they are automatically 
bounded below. This can be seen by bringing the bounded positive contribution to the left hand side of \eqref{eq:T_ILowerbounds}.

For $\beta$-branes, which have $m_0^\beta\ne 0$, it useful to divide the SUSY equality \eqref{eq:SUSYcondition} by
$m_0^\beta$. Using again \eqref{eq:CmatOnB} and
(\ref{eq:vectors}), we write the first condition in \eqref{eq:SUSYcondition} to
obtain 
\begin{equation}\label{betaSUSYeq}
\left[3C\left(\frac{j}{j_0}, \frac{j}{j_0}\right)+6C\left(\frac{b}{2}+\frac{j}{j_0}, b+\frac{m^{\beta}}{m^{\beta}_0}\right)\right]j_0^2 = \left[\frac{1}{4}\mathcal{K}_{000}+3C\left(\frac{b}{2}+\frac{m^{\beta}}{m_0^{\beta}}, \frac{b}{2}+\frac{m^{\beta}}{m_0^{\beta}}\right)\right](m^{\beta}_0)^2.
\end{equation}
The SUSY inequality in \eqref{eq:SUSYcondition} for $\beta$-branes can be combined with the 
SUSY equality \eqref{betaSUSYeq} as follows.  By dividing the SUSY inequality in \eqref{eq:SUSYcondition} by $j_0$
and subtracting the SUSY equality \eqref{betaSUSYeq}, we obtain after a few lines of algebra the following inequality:
\begin{equation}\label{StarWithC000}
0>\frac{1}{2}\mathcal{K}_{000}+6C\left(\frac{b}{2}+\frac{m^{\beta}}{m^{\beta}_0}, \frac{b}{2}+\frac{j}{j_0}\right)\,.
\end{equation}
This can equivalently be written in the form
\begin{equation}\label{StarOldFormWithC000}
0>\frac{1}{2}\mathcal{K}_{000}+3 \sum_i^p t_i^{\beta}\left(\frac{b_i}{2}+\frac{j_i}{j_0}\right)
\end{equation}
and we see that the expression $t_i^\beta$, which explicitly appears in 
the $i^{\text{th}}$-tadpole conditions in \eqref{eq:T_iEvaluated}, 
appears also in this manipulation of the SUSY constraints.

We note that \eqref{StarOldFormWithC000} can be related to the tadpole 
conditions. By multiplying \eqref{StarOldFormWithC000} by $N^{\beta}(m^{\beta}_0)^2$ and summing over $\beta$, we employ the right hand side of \eqref{eq:Ti} to obtain
\begin{equation} \label{sumstar}
0>\frac{1}{2}\mathcal{K}_{000}\sum_{\beta}N^{\beta}(m_0^{\beta})^2+3\sum_{i=1}^p(n_i^{\text{D5}}-T_i)\Big(\frac{b_i}{2}+\frac{j_i}{j_0}\Big)\geq \frac{1}{2}\mathcal{K}_{000}\sum_{\beta}N^{\beta}(m_0^{\beta})^2-3\sum_{i=1}^p T_i\Big(\frac{b_i}{2}+\frac{j_i}{j_0}\Big)\,,
\end{equation}
where we set $n_i^{\text{D5}}=0$ in the last inequality.
This condition is used throughout the proof.

Next, we demonstrate that it is possible to also rewrite the SUSY 
equality \eqref{betaSUSYeq} and the $0^{\text{th}}$-tadpole (\ref{T0}) 
in a form that  manifestly contains the 
quantities $t^\beta_i$.
To this end, we first define for each distinct pair of indices $\{i,k\}$, 
$i\neq k$, the matrix $M_{\{i,k\}}$ whose $(j,l)$-th entry in the basis $D_i$ is:
\begin{equation}\label{M}
(M_{\{i,k\}})_{jl}=x_{\{i,k\}}C_{ij}C_{kl}+x_{\{i,k\}}C_{il}C_{kj}-C_{jl}
\end{equation}
where $x_{\{i,k\}}\in \mathbb{Q^+}$ is a non-negative rational
number. This number has to be chosen such that its corresponding 
$M_{\{i,k\}}$ is positive semi-definite. We note, that the
matrices $M_{\{i,k\}}$ resemble the stress energy tensor of a system of
free particles, c.f.~Appendix \ref{app:CveticTheorem}. We use this
to show that, if the first condition in Section \ref{sec:generic proof}
is met,  there always exists an $x_{\{i,k\}}$ so that these matrices are 
positive semi-definite, see Appendices \ref{app:MoriKaehlerFano}
and \ref{app:CveticTheorem}. Thus, throughout the rest of this proof we 
assume that all matrices $M_{\{i,k\}}$ are positive semi-definite.

With this definition, the SUSY
equality (\ref{betaSUSYeq}) and $0^{\text{th}}$-tadpole (\ref{T0}) can 
be written as
\bea\label{betaSUSYeqWithM}
\Big[3C\Big(\tfrac{j}{j_0}, \tfrac{j}{j_0}\Big)\!+\!6C\Big(\tfrac{b}{2}\!+\!\tfrac{j}{j_0}, b\!+\!\tfrac{m^{\beta}}{m^{\beta}_0}\Big)\Big]j_0^2\!=\!\Big[\!\tfrac{1}{4}\mathcal{K}_{000}\!+\!\tfrac{3}{2}x_{\{i,k\}}t_i^{\beta}t_k^{\beta}\!-\!3M_{\{i,k\}}\Big(\tfrac{b}{2}\!+\!\tfrac{m^{\beta}}{m^{\beta}_0}, \tfrac{b}{2}\!+\!\tfrac{m^{\beta}}{m^{\beta}_0}\Big)\!\Big] (m^{\beta}_0)^2\nonumber
\eea
and 
\begin{equation}\label{T_0WithM}
n_0^{D5}-T_0 = \underbrace{\sum_{\beta}N^{\beta}\left[\frac{1}{2}x_{\{i,k\}}\tilde{t}_i^\beta \tilde{t}_k^{\beta}-M_{\{i,k\}}\left(b+\frac{m^{\beta}}{m^{\beta}_0}, b+\frac{m^{\beta}}{m^{\beta}_0}\right)\right](m_0^{\beta})^2}_{\beta-\text{~brane contributions}} 
+\underbrace{\sum_{\gamma}N^{\gamma}C(m^{\gamma},m^{\gamma})}_{\gamma-\text{~brane contributions}}\,,
\end{equation}
respectively, where we indicated the contributions from $\beta-$ and $\gamma$-branes by braces and used the short hand 
notation
\beq \label{eq:tildet_i}
	\tilde{t}^\beta_i=\mathcal{K}_{00i}+t_i^{\beta}\,.
\eeq

As we will see, the proof
of Section \ref{sec:generic proof} applies whenever the $M$-matrices in \eqref{M} are all
positive semi-definite. In fact, for all the bases $B$ of the threefold 
$X$ considered, this matrix is positive semi-definite.  
For $\bP^2,\bP^1 \times \bP^1, dP_1, dP_2,$ and $\bF_2$ the $M$-matrix
can be readily computed in the K\" ahler cone basis, and indeed, it is
positive semi-definite. However, for $dP_n$ with $n\ge 3$ there exists
a significant complication since in these examples, the K\" ahler cone 
is non-simplicial, as mentioned in Section \ref{sec:background}.
In these cases, we cover the K\"ahler cone by simplicial subcones 
consisting of $h^{(1,1)}$ generators and compute
the $M$-matrix \eqref{M} for this choice. As demonstrated 
in Appendix \ref{app:MoriKaehlerFano}, for $dP_n$, $n<9$, the 
$M$-matrices are positive semi-definite for all such subcones. For the toric surfaces, we 
refer to Appendix \ref{app:explicitdata} for positive semi-definiteness of the matrices \eqref{M}.  
Thus, for the rest of the paper we can assume that all $M_{\{i,k\}}$ are positive semi-definite for these bases.

%%%%%%%%%%%%%%%%%%%%%%%%%%%%%%%%%%%%%%%%%%%%%%%%%%%%%%%%%%%%%%%%%%%%%%%%%%%%%%%%%%%%%%%%%%%%%%%
%%%%%%%%%%%%%%%%%%%%%%%%%%%%%%%%%%%%%%%%%%%%%%%%%%%%%%%%%%%%%%%%%%%%%%%%%%%%%%%%%%%%%%%%%%%%%%%
\subsection{Warm Up: Finiteness for Elliptic Fibrations over $\bP^2$}
\label{sec:P2}
%%%%%%%%%%%%%%%%%%%%%%%%%%%%%%%%%%%%%%%%%%%%%%%%%%%%%%%%%%%%%%%%%%%%%%%%%%%%%%%%%%%%%%%%%%%%%%%
%%%%%%%%%%%%%%%%%%%%%%%%%%%%%%%%%%%%%%%%%%%%%%%%%%%%%%%%%%%%%%%%%%%%%%%%%%%%%%%%%%%%%%%%%%%%%%%

Before proceeding on to more difficult examples, let us prove 
finiteness in the simplest example of $B=\mathbb{P}^2$. In particular, 
in this example we will demonstrate the usefulness of the derived 
inequality \eqref{StarWithC000} and \eqref{sumstar}.

For an elliptically fibered Calabi-Yau threefold $X$ over $B=\bP^2$, 
the relevant geometrical data following from \eqref{eq:CherndPn}, \eqref{eq:C00iC000dPn}, \eqref{eq:T_iEvaluateddPn} and 
\eqref{eq:tripleIntsdPnsimplicial} is:
\begin{equation}\mathcal{K}_{000}=9\,, \quad \mathcal{K}_{001}=3\,,\quad \mathcal{K}_{011}\equiv C_{11}=1\,,\quad
  b_1=3\,,\quad T_1=36\,.
\end{equation}
Using this the inequality (\ref{StarWithC000})
reduces to
\begin{equation}\label{starineqforP^2}
0>\mathcal{K}_{001}(m_0^{\beta})^2+2\mathcal{K}_{011}m_0^{\beta}m_1^{\beta}\,.
\end{equation}
The tadpole for $D_1$ reads
\begin{equation}\label{T_1forP^2}
n_1^{D5}-T_1=\sum_{\beta}N^{\beta}\left[\mathcal{K}_{001}\,(m_0^{\beta})^2+2\,\mathcal{K}_{011}\,m_0^{\beta}m_1^{\beta}\right]\,.
\end{equation}

By (\ref{starineqforP^2}), the right hand side of (\ref{T_1forP^2}) 
must be negative. Thus we have a bound for $n_1^{\text{D5}}$, given by

\begin{equation}
n_1^{\text{D5}}<T_1.
\end{equation}

In addition, for each $\beta$-brane we deduce  
from \eqref{starineqforP^2} that
\bea
  0 &<&
  |m_0^{\beta}||\mathcal{K}_{001}m_0^{\beta}+2\mathcal{K}_{011}m_1^{\beta}|=|\mathcal{K}_{001}(m_0^{\beta})^2+2\mathcal{K}_{011}m_0^{\beta}m_1^{\beta}|
  \nonumber \\ &\leq&
  \sum_{\beta}N^{\beta}|\mathcal{K}_{001}(m_0^{\beta})^2+2\mathcal{K}_{011}m_0^{\beta}m_1^{\beta}|\leq
  T_1\,.
\eea
Notice that $|\mathcal{K}_{001}m_0^{\beta}+2\mathcal{K}_{011}m_1^{\beta}|$
is a non-zero integer by virtue of the strict inequality
(\ref{starineqforP^2}).  This implies the bound
\beq
|m_0^{\mathnormal{\beta}}|\leq T_1\,.
\eeq  
Next, since $|\mathcal{K}_{001}m_0^{\beta}+2\mathcal{K}_{011}m_1^{\beta}| \leq
T_1/|m_0^{\beta}|$ and $|m_0^\beta|$ is bounded, $m_1^\beta$ is also
bounded as 
\beq
|m_1^{\mathnormal{\beta}}|\leq
\frac{1}{2\mathcal{K}_{011}}\Big(\frac{T_1}{|m_0^{\mathnormal{\beta}}|}+\mathcal{K}_{001}|m_0^{\mathnormal{\beta}}|\Big)\,.
\eeq
Thus we have shown that the magnetic flux quanta $m^\beta$ associated 
to $\beta$-branes are bounded.

A bound on the flux quanta of $\gamma$-branes is straightforward to
obtain. The SUSY equality in \eqref{eq:SUSYcondition} for each 
$\gamma$-brane is
$\mathcal{K}_{011}\left(\frac{b_1}{2}+\frac{j_1}{j_0}\right)m_1^{\gamma}=0$. 
Since $\mathcal{K}_{011}\neq 0$ and $\left(\frac{b_1}{2}+\frac{j_1}{j_0}\right)$ 
is strictly positive, we must have $m_1^{\gamma}=0$. Since a $\gamma$-brane by 
definition has $m_0^\gamma=0$, the flux quanta of 
$\gamma$-branes are trivially bounded. This completes the 
proof for $B=\mathbb{P}^2$.

%%%%%%%%%%%%%%%%%%%%%%%%%%%%%%%%%%%%%%%%%%%%%%%%%%%%%%%%%%%%%%%%%%%%%%%%%%%%%%%%%%%%%%%%%%%%%%%
%%%%%%%%%%%%%%%%%%%%%%%%%%%%%%%%%%%%%%%%%%%%%%%%%%%%%%%%%%%%%%%%%%%%%%%%%%%%%%%%%%%%%%%%%%%%%%%
\subsection{Proving Finiteness for Two-Dimensional Almost Fano Bases}
\label{sec:generic proof}
%%%%%%%%%%%%%%%%%%%%%%%%%%%%%%%%%%%%%%%%%%%%%%%%%%%%%%%%%%%%%%%%%%%%%%%%%%%%%%%%%%%%%%%%%%%%%%%
%%%%%%%%%%%%%%%%%%%%%%%%%%%%%%%%%%%%%%%%%%%%%%%%%%%%%%%%%%%%%%%%%%%%%%%%%%%%%%%%%%%%%%%%%%%%%%%

In this section we present the general proof of the finiteness of the number of consistent
Type IIB compactification with magnetized D9-branes on smooth elliptically fibered Calabi-Yau threefolds.
As discussed before the bases $B$ for which the presented proof has been developed are the 
two-dimensional almost Fano varieties. These are the del Pezzo surfaces
$dP_n$, $n=0,\ldots,8$, with the case of $dP_0=\mathbb{P}^2$ discussed in the previous section \ref{sec:P2}, 
the Hirzebruch surfaces $\mathbb{F}_k$, $k=0,1$, including the almost Fano $\mathbb{F}_2$, as well as the toric surfaces.

The geometrical properties  that are essential for the following proof are the smoothness of the generic
elliptic Calabi-Yau fibration over them, as well as the following list of properties:
\begin{enumerate}
	\item[(1)] all K\"ahler cone generators of $B$ are time- or 
	light-like vectors in the same light-cone.
	\item[(2)] positivity of the coefficients $b_i$ in \eqref{eq:basisexp}, i.e.~$b_i\geq 0$ for all $i$.
	\item[(3)] positivity and integrality of $\mathcal{K}_{00i}$ as defined in \eqref{eq:C_IJKrels}, i.e.~$\mathcal{K}_{00i}\in\mathbb{Z}_{\geq0}$ 
	for all $i$.
	\item[(4)] the signature of the matrix $C_{ij}$ defined in \eqref{eq:CmatOnB} is $(1,n)$, where $n+1=h^{(1,1)}(B)$, i.e.~has one positive and $n$ negative eigenvalues. 
	\item[(5)] positivity of the K\"ahler parameters $j_i$ and validity of the large volume approximation, i.e.~$j_i\gg 1$ for all $i$.
\end{enumerate}
We claim that the proof presented below applies to all bases $B$ that obey these conditions. 

We note that properties (4) and (5) are automatically satisfied for all the surfaces we consider: the signature of the matrix $C_{ij}$ defined in 
\eqref{eq:CmatOnB} is $(1,n)$, cf.~Section \ref{sec:B2geometries}, and  $j_i\gg 1$ always holds in the K\"ahler cone basis at large volume for any $B$. 
The validity of properties (1)-(3) for the considered bases is shown in the Appendices \ref{app:MoriKaehlerFano} and \ref{app:explicitdata}. 
As discussed there, the only subtlety arises for the
higher del Pezzos $dP_n$, $n>2$, which have non-simplicial K\"ahler cones. In this case, the indices $i$ refer to the 
generators of a suitably chosen simplicial subcone, such that properties (1)-(3) hold. As argued in appendix
\ref{app:MoriKaehlerFano} there always exists a covering of the K\"ahler cones of the $dP_n$ by simplicial subcones, such that for each 
subcone in the covering properties (1)-(3) hold. 

The following proof is organized as follows. We already introduced the two types of branes, denoted $\beta$- and $\gamma$-branes,
to distinguish between branes with and without fluxes along the fiber $\mathcal{E}$, i.e.~$\int_\mathcal{E} F^\beta\neq 0$ 
and $\int_\mathcal{E} F^\gamma=0$, respectively. First we prove in Section \ref{sec:BoundsBeta} that there is only a finite
number of flux configurations on $\beta$-branes. Then in Section \ref{sec:BoundsD5} we show finiteness of the numbers of 
D5-branes $n_I^{\text{D5}}$. Finally, we conclude the proof in Section \ref{sec:BoundsGamma} by showing  finiteness of 
the number of flux configurations on $\gamma$-branes.

%%%%%%%%%%%%%%%%%%%%%%%%%%%%%%%%%%%%%%%%%%%%%%%%%%%%%%%%%%%%%%%%%%%%%%%%%%%%%%%%%%%%%%%%%%%%%%%
%%%%%%%%%%%%%%%%%%%%%%%%%%%%%%%%%%%%%%%%%%%%%%%%%%%%%%%%%%%%%%%%%%%%%%%%%%%%%%%%%%%%%%%%%%%%%%%
\subsubsection{Bounds on $\beta$-branes}
\label{sec:BoundsBeta}
%%%%%%%%%%%%%%%%%%%%%%%%%%%%%%%%%%%%%%%%%%%%%%%%%%%%%%%%%%%%%%%%%%%%%%%%%%%%%%%%%%%%%%%%%%%%%%%
%%%%%%%%%%%%%%%%%%%%%%%%%%%%%%%%%%%%%%%%%%%%%%%%%%%%%%%%%%%%%%%%%%%%%%%%%%%%%%%%%%%%%%%%%%%%%%%

\subsubsection*{Bounds on $m^{\beta}_0$}

In the following we obtain a bound on the flux component $m_0^\beta$ for all $\beta$-branes. The result is
\beq \label{eq:m0alphabound}
	|m_0^\beta|\leq \text{max}(T_i)\,,
\eeq 
where the maximum is taken over all generators of the specific subcone 
of the K\"ahler cone. We note that here and in the rest of the paper, 
all minima and maxima on $T_i$ and $x_{\{i,k\}}$ are taken across 
generators of the specific subcone we are in. However, except the 
minimum on $T_i$ in theorem \ref{th:boundn0D5}, the reader is free to 
take all other maxima and minima across all generators of the entire 
K\"ahler cone, for easy computation purposes.
For del Pezzo surfaces this yields 
$\text{max}(T_i)=36$, for the Hirzebruch surfaces 
$\mathbb{F}_k$ it is $\text{max}(T_i)=24+12k$ and for the toric surfaces we can read off this bound from Table \ref{table:toric surface}.

We begin by considering inequality \eqref{StarOldFormWithC000}. In fact, since $\mathcal{K}_{000}\geq 0$,  \eqref{StarOldFormWithC000} 
implies 
\beq \label{eq:StarWithoutC000}
	0 > \sum_i^p t_i^\beta(\frac{b_i}{2}+\frac{j_i}{j_0})
\eeq
Next we multiply this by $N^\beta (m_0^\beta)^2$ and sum over $\beta$ to obtain, using \eqref{eq:Ti},
\beq \label{eq:StarWithLowerBoundPart1}
	0>\sum_\beta\sum_i^p N^\beta t_i^{\beta}(m_0^{\beta})^2\left(\frac{b_i}{2}+\frac{j_i}{j_0}\right)=\sum_i^p\left(n_i^{\text{D5}}-T_i\right)\left(\frac{b_i}{2}+\frac{j_i}{j_0}\right)\geq\sum_i^p\left(-T_i\right)\left(\frac{b_i}{2}+\frac{j_i}{j_0}\right) 
\eeq
where we set the positive $n_i^{\text{D5}}=0$ for all $i$ in the last inequality.
This  lower bound on the sum over $\beta$ also implies 
\begin{equation}\label{eq:StarWithLowerBound}
0>\sum_{i=1}^p N_\beta t_i^{\beta}(m_0^{\beta})^2\left(\frac{b_i}{2}+\frac{j_i}{j_0}\right) \geq \sum_{i=1}^p(-T_i)\left(\frac{b_i}{2} + \frac{j_i}{j_0}\right)\,.
\end{equation}
because by \eqref{eq:StarWithoutC000} all summands are negative. This motivates the following definition:
\begin{definition}
\label{def:specialmixedbranes}
A \textbf{special brane} is a  $\beta$-brane with $t_i^{\beta}<0 $ for all $i$. A \textbf{mixed brane} is  a $\beta$-brane which is not a special 
brane (i.e.~there exists an $i$ such that $t_i^\beta\geq 0$).
\end{definition}

\begin{remark}
\label{remark1}
\normalfont{By \eqref{eq:StarWithoutC000}, there does not exist a mixed brane with $t_i\geq 0$ $\forall i$, since $b_i, j_i\geq 0$.} Hence for a mixed brane, we cannot have $t_i$ of the same sign $\forall i$, they must be of mixed signs. This motivates its name.
\end{remark}

For special branes, we immediately conclude from \eqref{eq:StarWithLowerBound} that 
\bea
\text{max}(T_i)\sum_{i=1}^p\left(\frac{b_i}{2} + \frac{j_i}{j_0}\right) & \geq& \sum_{i=1}^pT_i\left(\frac{b_i}{2} + \frac{j_i}{j_0}\right) 
\geq \sum_{i=1}^p N_\beta|t_i^{\beta}|(m_0^{\beta})^2\left(\frac{b_i}{2}+\frac{j_i}{j_0}\right) \\
& =&
\sum_{i=1}^p \underbrace{N_\beta|t_i^{\beta}m_0^{\beta}|}_{\text{$\in \mathbb{N}$, $\geq 1$}}|m_0^{\beta}|\left(\frac{b_i}{2}+\frac{j_i}{j_0}\right) \geq
|m_0^{\beta}| \sum_{i=1}^p \left(\frac{b_i}{2}+\frac{j_i}{j_0}\right)\,.\nonumber
\eea
Here we have used \eqref{eq:StarWithLowerBound} in the second inequality, and that $ t_i^\beta m_0^\beta=\sum_j C_{ij}(b_jm_0^\beta+2m_j^\beta)$ is a non-zero positive integer, cf.~\eqref{eq:Ti} in the last inequality: it is an integer because both its first term, $\mathcal{K}_{00i}m_0^\beta$, and the second term, the flux $F^\beta$ integrated over the integral class $D_i$, are integers by \eqref{eq:integralitymalpha}. It is non-zero because $t_i^{\beta}$ is non-zero by the definition of special branes, and $m_0^{\beta}$ is non-zero by the definition of $\beta$-branes. Thus for special branes, the flux quantum $m_0^\beta$ is bounded as
\beq
\label{eq:m0specialbrane}
|m_0^{\beta}|\leq
\text{max}(T_i)\,.
\eeq  
We will show that mixed branes have a even smaller bound for their $|m_0^{\beta}|$.

Let us first make an observation that will facilitate the identification of special branes.
\begin{lemma}\label{lem:largej}
A $\beta$-brane which satisfies $0\leq C\left(\frac{b}{2}+\frac{j}{j_0}, b+\frac{m^{\beta}}{m^{\beta}_0}\right)$ is a special brane.
\end{lemma}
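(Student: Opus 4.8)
The plan is to argue by contraposition, or equivalently by a direct manipulation using the SUSY equality. The key identity I would exploit is the reformulated SUSY equality \eqref{betaSUSYeq}, which I rewrite schematically as $(\text{LHS involving } j)\, j_0^2 = (\text{RHS involving } m^\beta/m^\beta_0)\,(m^\beta_0)^2$. The first idea is to inspect the left-hand side of \eqref{betaSUSYeq}: it equals $\big[3C(\tfrac{j}{j_0},\tfrac{j}{j_0}) + 6C(\tfrac{b}{2}+\tfrac{j}{j_0},\,b+\tfrac{m^\beta}{m^\beta_0})\big]j_0^2$. By hypothesis the second term $C(\tfrac{b}{2}+\tfrac{j}{j_0},\,b+\tfrac{m^\beta}{m^\beta_0})\geq 0$. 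The first term $C(\tfrac{j}{j_0},\tfrac{j}{j_0})$ is positive: by property (1) the Kähler form $J$ (hence $j$) lies in the interior of the forward light-cone of the signature-$(1,n)$ form $C$, so $C(j,j)>0$, using properties (1), (4), (5). Therefore the left-hand side of \eqref{betaSUSYeq} is strictly positive, and hence so is the right-hand side: $\tfrac14\mathcal{K}_{000} + 3\,C(\tfrac{b}{2}+\tfrac{m^\beta}{m^\beta_0},\,\tfrac{b}{2}+\tfrac{m^\beta}{m^\beta_0}) > 0$.

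Next I would connect this to the $t_i^\beta$. Recall from \eqref{eq:Ti} that $t_i^\beta = 2\sum_j C_{ij}(\tfrac{b_j}{2}+\tfrac{m^\beta_j}{m^\beta_0})$, i.e.\ $t^\beta$ (as a covector) is $2C$ applied to the vector $v^\beta := \tfrac{b}{2}+\tfrac{m^\beta}{m^\beta_0}$. I want to show $t_i^\beta < 0$ for all $i$, i.e.\ that $v^\beta$ pairs negatively with every Kähler cone generator $D_i$. The main step is to show $v^\beta$ lies in the interior of the \emph{backward} light-cone of $C$. To see this, I use inequality \eqref{StarWithC000}, already derived in the excerpt: $0 > \tfrac12\mathcal{K}_{000} + 6\,C(v^\beta,\,\tfrac{b}{2}+\tfrac{j}{j_0})$. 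Writing $w := \tfrac{b}{2}+\tfrac{j}{j_0}$, which lies in the interior of the forward light-cone (again by properties (1), (2), (5), since $j$ is interior and $b$ is in the closed cone), and noting $\mathcal{K}_{000}\geq 0$, I get $C(v^\beta, w) < 0$ with $w$ timelike forward. By the standard light-cone trichotomy for a Lorentzian form: a vector pairing negatively with an interior forward-timelike vector cannot itself be forward-causal, and combined with the strict positivity $C(v^\beta,v^\beta) + \tfrac{\mathcal{K}_{000}}{12} > 0$ derived above — actually I must be a little careful here, since $C(v^\beta,v^\beta)$ need not be positive on its own. The cleaner route: I show directly that $t_i^\beta = 2C(v^\beta, D_i) < 0$ for each generator $D_i$ by writing $D_i$ as a non-negative combination plus using that $w$ is a positive combination of the $D_i$'s (property (5): $w = \tfrac{b}{2}+\tfrac{j}{j_0}$ has all components positive in the generator basis) together with $C(v^\beta, D_i)\le 0$ would-be-sign.

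The cleanest argument, which I expect to be the intended one and which I would write up: since $C(v^\beta, w) < 0$ and $w$ is a strictly positive combination of the generators, $\sum_i w_i\, C(v^\beta, D_i) < 0$ where all $w_i > 0$; this alone does not force every $C(v^\beta,D_i)<0$. So I instead combine it with the positivity of $\tfrac14\mathcal{K}_{000}+3C(v^\beta,v^\beta)$ and the fact (property (1), signature $(1,n)$) that the generators and $w$ all lie in one light-cone. The genuine content is: a vector $v^\beta$ with $C(v^\beta,w)<0$ for some interior-forward $w$, together with $C(v^\beta,v^\beta) > -\tfrac{\mathcal{K}_{000}}{12}$, when $\mathcal{K}_{000}$ is small enough relative to the geometry, lies in the open backward cone; then $C(v^\beta, D_i)<0$ for every $D_i$ in the closed forward cone follows from the reverse Cauchy–Schwarz inequality for Lorentzian forms ($C(x,y)<0$ whenever $x$ is backward-timelike and $y$ is forward-causal nonzero). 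I expect the main obstacle to be precisely this point — controlling the sign of $C(v^\beta,v^\beta)$ using $\mathcal{K}_{000}$, or alternatively arguing without it — and in the writeup I would either invoke a lemma from Appendix \ref{app:MoriKaehlerFano} guaranteeing the needed positivity, or observe that $\mathcal{K}_{000} = C(b,b)$ together with $b$ in the closed forward cone makes the relevant combination manifestly land in the backward cone. Once $v^\beta$ is in the open backward light-cone, $t_i^\beta = 2C(v^\beta, D_i) < 0$ for all generators $D_i$ (property (1)), so the brane is special by Definition \ref{def:specialmixedbranes}. $\qquad\blacksquare$
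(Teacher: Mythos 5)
Your proposal has a genuine gap at its central step. From \eqref{StarWithC000} you correctly get $C(v^\beta,w)<0$ with $w=\tfrac{b}{2}+\tfrac{j}{j_0}$ forward-timelike, and from the hypothesis plus $C(j,j)>0$ you get $\tfrac14\mathcal{K}_{000}+3C(v^\beta,v^\beta)>0$, i.e.\ only $C(v^\beta,v^\beta)>-\tfrac{1}{12}\mathcal{K}_{000}$. These two facts do \emph{not} place $v^\beta=\tfrac{b}{2}+\tfrac{m^\beta}{m^\beta_0}$ in the open backward light cone: a spacelike $v^\beta$ with small negative norm satisfies both, and a spacelike vector can pair positively with some K\"ahler cone generators and negatively with others --- which is exactly the mixed-brane configuration the lemma must exclude. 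So the reverse Cauchy--Schwarz step never gets off the ground in precisely the case that matters. Your two fallback suggestions do not close this: Appendix \ref{app:MoriKaehlerFano} contains no lemma giving the needed positivity of $C(v^\beta,v^\beta)$ (it concerns the $M_{\{i,k\}}$-matrices and the K\"ahler cone generators), and the identity $\mathcal{K}_{000}=C(b,b)$ with $b$ forward-causal does not improve the lower bound on $C(v^\beta,v^\beta)$.

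A structural symptom of the gap is that your argument never uses the tadpole input or the large-volume regime in a quantitative way, whereas the lemma is genuinely a large-volume statement (cf.\ the paper's remark that for $\mathbb{F}_k$ one needs roughly $j_I\geq 3$). The paper argues by contradiction: if the brane were not special it would be mixed, so some pair $t_i^\beta,t_k^\beta$ has opposite signs; writing the SUSY equality in the $M$-matrix form \eqref{betaSUSYeqWithM} with that pair, the term $\tfrac32 x_{\{i,k\}}t_i^\beta t_k^\beta\leq 0$ and the positive semi-definite $M$-term is nonpositive, so the RHS is at most $\tfrac14\mathcal{K}_{000}(m_0^\beta)^2$; the summed inequality \eqref{sumstar} then bounds this by a quantity linear in $j_i/j_0$, while the hypothesis forces the LHS to be at least $3C(j,j)$, which scales like $j_0^2$. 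At large volume this contradicts the SUSY equality. The opposite-sign pair, the bound \eqref{sumstar} coming from the $i^{\text{th}}$-tadpoles, and the growth comparison are the essential ingredients; your purely cone-geometric route bypasses all three, and that is why it cannot be completed as written.
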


\begin{proof}
For any $\beta$ brane with $0\leq C\left(\frac{b}{2}+\frac{j}{j_0}, b+\frac{m^{\beta}}{m^{\beta}_0}\right)$, consider its SUSY equality (\ref{betaSUSYeqWithM}). Then 
\begin{equation}\label{eq:LHSofSUSYatLeast}
\text{LHS of}~ (\ref{betaSUSYeqWithM}) \geq 3C\left(\frac{j}{j_0}, \frac{j}{j_0}\right)j_0^2\,.
\end{equation}

Suppose it is not a special brane. Then by definition we cannot have $t^{\beta}_i <0 ~\forall i$. Remark \ref{remark1} also forbids 
$t^{\beta}_i \geq 0 ~\forall i$. Thus there exists a pair of $i,k$ such that $t^{\beta}_i$ and $t^{\beta}_k$ are of opposite signs (the following argument still applies if one of them is zero). Writing the RHS of (\ref{betaSUSYeqWithM}) in terms of this particular pair of $t^{\beta}_i,t^{\beta}_k$, we 
observe that 
\begin{equation}\label{eq:RHSofSUSYatMost}
\text{RHS of}~\eqref{betaSUSYeqWithM} \leq \frac{1}{4}\mathcal{K}_{000}(m^{\beta}_0)^2 \leq \frac{1}{4}\mathcal{K}_{000}\sum_{\beta}N^{\beta}(m_0^{\beta})^2 < \frac{3}{2}\sum_{i=1}^p(T_i)\left(\frac{b_i}{2}+\frac{j_i}{j_0}\right)
\end{equation}
where in the first inequality we dropped all negative terms on the RHS of \eqref{betaSUSYeqWithM} 
and in the last inequality we employed the lower bound on \eqref{sumstar}. 
Now \eqref{eq:LHSofSUSYatLeast} shows that the LHS of \eqref{betaSUSYeqWithM} is at
least quadratic in the $j_i$'s and grows as the K\"ahler volume of $B$. However, inequality \eqref{eq:RHSofSUSYatMost}
implies that the RHS of
\eqref{betaSUSYeqWithM} is at most on the order of $j_i/j_0$. In the
limit of all $j_I$ large, which in particular implies large volume of $B$, the LHS of
\eqref{betaSUSYeqWithM} has to be greater than the RHS of \eqref{betaSUSYeqWithM}. Thus, the SUSY
equality \eqref{betaSUSYeqWithM} is violated. Our initial assumption that this $\beta$-brane is not a special brane must be wrong; it must 
be a special brane.
\end{proof}

\begin{remark}
\label{remark:ValueOfji}
\normalfont{The argument in Lemma \ref{lem:largej} about the growth of the two sides of the SUSY equality \eqref{betaSUSYeqWithM}
can be further substantiated for concrete bases $B$.  For all $\bF_k$, we can check that we have LHS of \eqref{betaSUSYeqWithM}$>$ RHS of
\eqref{betaSUSYeqWithM} when $j_I\geq 3 ~\forall I$. This is clearly the case if the 
supergravity approximation is supposed to be valid. For $dP_n$, the matrix $C(\cdot,\cdot)$ has signature $(1,n)$, i.e.~we can
have $C(j,j)=0$ for $j\neq 0$ and the above argument might be invalidated. However,  we can 
only have $C(j,j)=0$  if the K\"ahler form $j_B=\sum_i j_iD_i$ on $B$ is on the boundary of the K\"ahler cone.  This means that the 
K\"ahler volume of $B$ is zero or cycles in $B$ have shrunk to zero which clearly invalidates the supergravity approximation.}
\end{remark}

Thus, it remains to bound $m_0^{\beta}$ for $\beta$-branes satisfying 
$0 \geq C\left(\frac{b}{2}+\frac{j}{j_0},b+\frac{m^{\beta}}{m^{\beta}_0}\right)$. For such $\beta$-branes, we observe 
\beq
0\geq C\left(\frac{b}{2}+\frac{j}{j_0}, b+\frac{m^{\beta}}{m^{\beta}_0}\right)
% = C\left(\frac{b}{2}+\frac{j}{j_0}, \frac{b}{2}\right) + C\left(\frac{b}{2}+\frac{j}{j_0}, \frac{b}{2}+\frac{m^{\beta}}{m^{\beta}_0}\right) 
 = \frac{1}{2}\sum_{i=1}^p\mathcal{K}_{00i}\left(\frac{b_i}{2}+\frac{j_i}{j_0}\right) + \frac{1}{2}\underbrace{\sum_{i=1}^pt_i^{\beta}\left(\frac{b_i}{2}+\frac{j_i}{j_0}\right)}_{<0\text{ by (\ref{eq:StarWithoutC000})}}
\eeq
using (\ref{eq:C_IJKrels}) and the definition of $t_i^{\beta}$ \eqref{eq:Ti}. Next, label all $\beta$-branes with 
$0 \geq C\Big(\frac{b}{2}+\frac{j}{j_0},
  b+\frac{m^{\beta}}{m^{\beta}_0}\Big)$ by $\beta'$, multiply the above inequality by $N^{\beta'}(m_0^{\beta'})^2$ and sum over $\beta'$:
\begin{align}
0 & \geq \frac{1}{2}\sum_{i=1}^p\mathcal{K}_{00i}\left(\frac{b_i}{2}+\frac{j_i}{j_0}\right)\sum_{\beta'}N^{\beta'}(m_0^{\beta'})^2 + \frac{1}{2}\sum_{\beta'}N^{\beta'}(m_0^{\beta'})^2\sum_{i=1}^pt_i^{\beta'}\left(\frac{b_i}{2}+\frac{j_i}{j_0}\right)\nonumber\\
& \geq \frac{1}{2}\sum_{i=1}^p\mathcal{K}_{00i}\left(\frac{b_i}{2}+\frac{j_i}{j_0}\right)\sum_{\beta'}N^{\beta'}(m_0^{\beta'})^2 + \frac{1}{2}\sum_{\beta}N^{\beta}(m_0^{\beta})^2\sum_{i=1}^pt_i^{\beta}\left(\frac{b_i}{2}+\frac{j_i}{j_0}\right)\nonumber\\
& = \frac{1}{2}\sum_{i=1}^p\mathcal{K}_{00i}\left(\frac{b_i}{2}+\frac{j_i}{j_0}\right)\sum_{\beta'}N^{\beta'}(m_0^{\beta'})^2 + \frac{1}{2}\sum_{i=1}^p(n_i^{D5}-T_i)\left(\frac{b_i}{2}+\frac{j_i}{j_0}\right)\nonumber\\
& \geq \frac{1}{2}\sum_{i=1}^p\mathcal{K}_{00i}\left(\frac{b_i}{2}+\frac{j_i}{j_0}\right)\sum_{\beta'}N^{\beta'}(m_0^{\beta'})^2 + \frac{1}{2}\sum_{i=1}^p(-T_i)\left(\frac{b_i}{2}+\frac{j_i}{j_0}\right)\,.
\end{align}
Here in the second line we extended the sum over $\beta'$ to the sum over all $\beta$-branes; by \eqref{eq:StarWithoutC000} each summand 
is negative, thus, extending the sum only decreases it. In the third line we have used \eqref{eq:Ti}. With \eqref{eq:T_iEvaluated}  and
the last line of the above inequality we obtain
\begin{equation}
12\sum_{i=1}^p\mathcal{K}_{00i}\left(\frac{b_i}{2}+\frac{j_i}{j_0}\right) \geq \sum_{\beta'}N^{\beta'}(m_0^{\beta'})^2 \sum_{i=1}^p\mathcal{K}_{00i}\left(\frac{b_i}{2}+\frac{j_i}{j_0}\right)
\end{equation}
Comparing coefficients, we see $\sum_{\beta'}N^{\beta'}(m_0^{\beta'})^2 \leq 12$ which implies the bound
\beq
|m_0^{\beta'}|\leq 3\,.
\eeq 

This is an even smaller bound than \eqref{eq:m0specialbrane} derived previously for special branes satisfying 
$0\leq C\left(\frac{b}{2}+\frac{j}{j_0}, b+\frac{m^{\beta}}{m^{\beta}_0}\right)$ because each $T_i=12\mathcal{K}_{00i}$ is a integer multiple of 12. 
Thus, the overall bound on  $m_0^{\beta}$  for a $\beta$-brane is still $|m_0^{\beta}|\leq \text{max}(T_i)$. 

Recall $\gamma$ branes by definition have $m_0^{\gamma}=0$. Thus we are done bounding $m_0^\alpha$, where 
\eqref{eq:m0alphabound} is the concrete, computable bound. In summary, we have found the  precise bounds in Table 
\ref{tab:m0bounds}.
\begin{table}[ht!]
\begin{tabular}{|c|c|c|c|c|}
\hline
Branes & Special branes with & Special branes with &\! Mixed branes\!&\!\! $\gamma$-branes\!\rule{0pt}{12pt}\\
\!\! & \!\! $0\leq C\left(\frac{b}{2}+\frac{j}{j_0}, b+\frac{m^{\beta}}{m^{\beta}_0}\right)$\!\!& \!\!$0\geq C\left(\frac{b}{2}+\frac{j}{j_0}, b+\frac{m^{\beta}}{m^{\beta}_0}\right)$\!\!  &  & \\ \hline
 $m_0^\alpha$-bound & $|m_0^\beta|\leq \text{max}(T_i)$ & $|m_0^\beta|\leq 3$ & $|m_0^\beta|\leq 3$ & $m_0^\gamma=0$\rule{0pt}{12pt}\\ \hline
\end{tabular}
\caption{Summary of bounds on $m^\alpha_0$.}
\label{tab:m0bounds}
\end{table} 

\subsubsection*{Bounds on the number of Solutions to the Vector $m^{\beta}$}

We begin by noting that \eqref{eq:Ti} can be viewed as the following matrix multiplication equation 
\begin{equation}
t^{\beta}:=
\left( \begin{array}{c}
t^{\beta}_1 \\
\cdot \\
\cdot\\
\cdot \\
t^{\beta}_p  \end{array} \right)= 2C\cdot\left(\frac{b}{2}+\frac{m^{\beta}}{m^{\beta}_0}\right)\,.
\end{equation}
The invertible matrix $2C$ gives a 1-1 correspondence between the
vector $m^{\beta}$ and the vector $t^{\beta}$. Thus, in order to show
that there are finitely many solutions for the vector $m^{\beta}$, we can
equivalently show that there are finitely many solutions for the
vector $t^{\beta}$. 

We can accomplish this by showing each component $t^{\beta}_i$ is bounded. We recall that it suffices to 
prove each $t^{\beta}_i$ is bounded above: since $(m^{\beta}_0)^2$ is bounded as we have just shown,  an upper 
bound also implies a lower bound by the second inequality in 
\eqref{eq:T_ILowerbounds}. Since the $t^\beta_i$ of special branes are by definition bounded above by $0$, see Definition 
\ref{def:specialmixedbranes}, we only have to bound  the $t^\beta_i$ of mixed branes. 

It is important for finding this upper bound on the $t_i^\beta$, to first analyze how each type of branes contribute to the sign of a tadpole.  We obtain the table \ref{tab:TadpoleContributions}, where we have indicated in parenthesis 
where the corresponding result will be proven in this work. 
\begin{table}[ht!]
\begin{center}
\begin{tabular}{|c||c|c|c|c|}
\hline
 & \multicolumn{2}{c|}{Special branes} &\! \!\! Mixed branes\! \!\!& $\gamma$-branes\rule{0pt}{12pt}\\ \hline \hline
\!\!$0^{\text{th}}$-tadpole\rule{0pt}{12pt}\!\! & positive & negative & negative & negative   \\ 
&\!\!($\Rightarrow$  $\forall\,\tilde{t}_i^\beta<0$ by Cor.~\ref{cor:T0+then(C00i+ti)-})\!\!&\!\! \phantom{($\Rightarrow$  $\forall\,
\tilde{t}_i^\beta<0$ by Cor\ref{cor:T0+then(C00i+ti)-})}\!\!&(by Prop.~\ref{prop:T0+specialbrane}) &\!\!\! (by Prop.~\ref{prop:gammaContribute-T0})\!\!\!\\ \hline
\!\!$i^{\text{th}}$-tadpole\!\! & \multicolumn{2}{c|}{negative}  & $\text{sign}(t_i^\beta)$ & $0$ \rule{0pt}{12pt}\\ 
 & \multicolumn{2}{c|}{(by \eqref{eq:Ti} and Def.~\ref{def:specialmixedbranes})} & (by \eqref{eq:Ti})  &\\ \hline
\end{tabular}
\caption{Summary of the contributions of the different types of branes to the different tadpoles.}
\label{tab:TadpoleContributions}
\end{center}
\end{table} 

Next, we proceed with proving the results of this table. We begin with the following 
%For the $0^{\text{th}}$-tadpole (\ref{T0}), we will see that only special
%branes, defined in the last section, can contribute
%positively. This is not obvious but we can prove this with some
%work. Let us state and prove the associated propostions now.

\begin{proposition}\label{prop:gammaContribute-T0}
$\gamma$-branes only contribute negatively to the $0^{\text{th}}$-tadpole (\ref{T0}). Furthermore, any $\gamma$-brane contributing 
zero to the $0^{\text{th}}$-tadpole is the trivial brane, i.e.~$m^\gamma_I=0$ for all $I$.
\end{proposition}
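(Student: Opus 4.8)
The plan is to read off the $\gamma$-brane contribution to the $0^{\text{th}}$-tadpole \eqref{T0} directly, which is $N^{\gamma}C(m^{\gamma},m^{\gamma})$, and to show that $C(m^{\gamma},m^{\gamma}) \leq 0$ for any integral vector $m^\gamma$ in $H^{(1,1)}(B)$, with equality only when $m^\gamma=0$. Since $N^{\gamma}>0$, this immediately gives the first claim, and the sharpened statement follows from the equality case. The whole argument lives on the twofold base $B$ and uses only the structure of the intersection pairing $C$ there, so it is self-contained.

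\textbf{Key steps.} First I would invoke property (1)/(4) of Section \ref{sec:generic proof}: the matrix $C_{ij}$ has Lorentzian signature $(1,n)$, and all K\"ahler cone generators $D_i$ are time- or light-like vectors lying in the closure of one fixed light-cone. Next, recall that a $\gamma$-brane flux is, by \eqref{eq:basisexp}, an element $m^\gamma = \sum_i m^\gamma_i D_i$ of $H^{(1,1)}(B)$ with all $m^\gamma_i$ of a \emph{fixed sign pattern}? --- no, this is not automatic; the $m^\gamma_i$ are a priori arbitrary rational numbers subject only to the integrality condition \eqref{eq:integralitymalpha}. So the real content is a purely lattice-theoretic fact: for any nonzero integral class $D$ on a smooth projective surface $B$, the self-intersection $D^2 = C(D,D)$ cannot be positive unless $D$ (or $-D$) is (a multiple of) an effective divisor meeting the K\"ahler cone --- and one must rule this out here. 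The cleanest route is: by the Hodge index theorem on $B$, the pairing $C$ restricted to the orthogonal complement of any ample class is negative definite. If $C(m^\gamma,m^\gamma)>0$, then by the reverse Cauchy--Schwarz inequality valid in Lorentzian signature, $m^\gamma$ is time-like, hence (after possibly replacing $m^\gamma$ by $-m^\gamma$) lies in the same light-cone as the K\"ahler generators, so $C(m^\gamma, J)>0$ for the K\"ahler form $J$ on $B$. But then I would feed this into the $\gamma$-brane SUSY equality: the first equation of \eqref{eq:SUSYcondition} for a $\gamma$-brane ($m^\gamma_0=0$) reduces, via \eqref{eq:C_IJKrels} and the column-vector notation, to a relation of the schematic form $C\big(\tfrac b2 + \tfrac{j}{j_0},\, m^\gamma\big)=0$ (this is exactly the $p=1$ computation done for $\mathbb P^2$ in Section \ref{sec:P2}, done here for general $p$). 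Since $\tfrac b2 + \tfrac{j}{j_0}$ is a strictly positive combination of K\"ahler generators by properties (2) and (5), it is an interior time-like vector of the light-cone, and $C$ of two vectors in the same (closed) light-cone is strictly positive unless one of them is null and they are proportional --- contradicting the vanishing. Hence $C(m^\gamma,m^\gamma)\leq 0$. For the equality case $C(m^\gamma,m^\gamma)=0$: a null integral vector paired to zero against the interior time-like vector $\tfrac b2+\tfrac j{j_0}$ must itself be zero (in a Lorentzian space, a null vector orthogonal to a time-like vector vanishes), so $m^\gamma=0$, i.e. $m^\gamma_I=0$ for all $I$ since also $m^\gamma_0=0$ by definition.

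\textbf{Main obstacle.} The delicate point is establishing $C(m^\gamma,m^\gamma)\leq 0$ \emph{without} already knowing the sign pattern of the components $m^\gamma_i$ --- one cannot simply say "all entries of $C$ are non-negative and all $m^\gamma_i$ are positive," because the $\gamma$-fluxes need not be effective. The resolution is to exploit the SUSY equality to force $m^\gamma$ to be $C$-orthogonal to a strictly interior time-like vector, after which the Hodge index theorem (equivalently, signature $(1,n)$ plus the light-cone structure from property (1)) does all the work: the orthogonal complement of a time-like vector is negative semi-definite, and is negative \emph{definite} on integral classes transverse to the null directions. I would want to be a little careful about the boundary-of-K\"ahler-cone subtlety already flagged in Remark \ref{remark:ValueOfji} --- namely that for $dP_n$ one could have $C(j,j)=0$ --- but as noted there, $\tfrac b2+\tfrac j{j_0}$ still has strictly positive pairing against every K\"ahler generator (by properties (2),(3),(5)), hence is genuinely interior to the light-cone, so the argument goes through. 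Everything else is a short computation.
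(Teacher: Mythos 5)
Your proposal is correct and follows essentially the same route as the paper: the $\gamma$-brane SUSY equality forces $C\big(\tfrac b2+\tfrac j{j_0},m^\gamma\big)=0$, the vector $\tfrac b2+\tfrac j{j_0}$ is time-like with respect to the signature-$(1,n)$ pairing $C$, and orthogonality to a time-like vector makes $m^\gamma$ space-like unless it vanishes. The detours through the Hodge index theorem and reverse Cauchy--Schwarz are harmless but unnecessary; the paper's proof is just the direct computation $C\big(\tfrac b2+\tfrac j{j_0},\tfrac b2+\tfrac j{j_0}\big)=\tfrac14\mathcal{K}_{000}+\sum_i\mathcal{K}_{00i}\tfrac{j_i}{j_0}+C\big(\tfrac j{j_0},\tfrac j{j_0}\big)>0$ followed by the same Lorentzian-orthogonality argument.
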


\begin{proof}
A $\gamma$-brane's contribution to the $0^{\text{th}}$-tadpole is proportional
to $C(m^{\gamma},m^{\gamma})$ by \eqref{T0}. In addition, for $\gamma$-branes, the SUSY equality
in \eqref{eq:SUSYcondition} reads
\begin{equation}\label{gammaSUSY}
C\left(\left(\frac{b}{2}+\frac{j}{j_0}\right), m^{\gamma}\right)=0\,,
\end{equation}
as can be seen by setting $m_0^\gamma=0$ and using the intersection relations \eqref{eq:C_IJKrels}.

We recall that $C$ has Minkowski signature $(1,1)$ for $\mathbb{F}_k$ and $(1,n)$ for $dP_n$ and the toric surfaces. 
The vector $\frac{b}{2}+\frac{j}{j_0}$ is time-like, since
% $C\left(\left(\frac{b}{2}+\frac{j}{j_0}\right), \left(\frac{b}{2}+\frac{j}{j_0}\right)\right)>0$:
\begin{align}\label{c(v,v)+ive}
C\left(\left(\frac{b}{2}+\frac{j}{j_0}\right), \left(\frac{b}{2}+\frac{j}{j_0}\right)\right) = \frac{1}{4}\mathcal{K}_{000}+\sum_{i=1}^p\mathcal{K}_{00i}\frac{j_i}{j_0}+C\left(\frac{j}{j_0}, \frac{j}{j_0}\right)>0\,.
\end{align}
Here, the first term on the RHS of (\ref{c(v,v)+ive}) is positive because
$\mathcal{K}_{000}=8$ for $\mathbb{F}_n$, $9-n$ for $dP_n$ and Table \ref{table:toric surface} applies for toric surfaces. 
The second term is positive because $j_I>0$ and for $\mathbb{F}_k$, $\mathcal{K}_{001}=2$, $\mathcal{K}_{002}=2+k$; for $dP_n$, 
$\mathcal{K}_{00i}=2,3$; for toric surfaces, all relevant entries in Table \ref{table:toric surface} are positive. Finally, the third term is positive because it is proportional to the volume
of $B$. By \eqref{gammaSUSY} the vector $m^\gamma$ is orthogonal to a time-like vector, thus, it is 
space-like, i.e. $0>C(m^{\gamma},m^{\gamma})$, unless it is the zero vector, which trivially has $C(m^{\gamma},m^{\gamma})=0$. 

%More specifically, $C$ is a full-rank real
%symmetric matrix so it is diagonalizable. Writing in the diagonal
%primed basis, (\ref{gammaSUSY}) becomes
%
%\begin{align}
%0=C\left(v, m^{\gamma}\right) = |\lambda_1|v_1'm_1^{'\gamma} -\sum_{j=2}^{p}|\lambda_j|v_j'm_j^{'\gamma} \qquad or \qquad 
%m_1^{'\gamma}=\frac{1}{|\lambda_1|v_1'}\sum_{j=2}^p|\lambda_j|v_j'm_j^{'\gamma}
%\end{align}
%
%So
%
%\begin{align}
%& C(m^{\gamma},m^{\gamma}) = |\lambda_1|(m_1^{'\gamma})^2-\sum_{l=2}^{p}|\lambda_l|(m_l^{'\gamma})^2 \nonumber\\
%& = \frac{1}{|\lambda_1|(v_1')^2}\left(\sum_{j=2}^p|\lambda_j|v_j'm_j^{'\gamma}\right)\left(\sum_{k=2}^p|\lambda_k|v_k'm_k^{'\gamma}\right) - \sum_{l=2}^{p}|\lambda_l|(m_l^{'\gamma})^2 \nonumber\\
%& = \frac{1}{|\lambda_1|(v_1')^2}\left[\sum_{l=2}^p|\lambda_l|\left(|\lambda_l|(v_l')^2-|\lambda_1|(v_1')^2\right)(m_l^{'\gamma})^2 + \sum_{j,k=2}^p|\lambda_j\lambda_k|v_j'v_k'm_j^{'\gamma}m_k^{'\gamma}\right] \nonumber\\
%& \text{Now since $C(v,v)=|\lambda_1|(v_i')^2-\sum_{i=2}^p|\lambda_i|(v_i')^2>0$, $\left(|\lambda_l|(v_l')^2-|\lambda_1|(v_1')^2\right)<-\sum_{\substack{i=2 \\ i\neq l}}^p|\lambda_i|(v_i')^2$} \nonumber\\
%& < \frac{1}{|\lambda_1|(v_1')^2}\left[\sum_{l=2}^p|\lambda_l|\left(-\sum_{\substack{i=2 \\ i\neq l}}^p|\lambda_i|(v_i')^2\right)(m_l^{'\gamma})^2 + \sum_{j,k=2}^p|\lambda_j\lambda_k|v_j'v_k'm_j^{'\gamma}m_k^{'\gamma}\right] \nonumber\\
%& = -\frac{1}{|\lambda_1|(v_1')^2}\sum_{\substack{j,l=2 \\ j\neq l}}^p|\lambda_j\lambda_l|\left(v_j'm_l^{'\gamma}-v_l'm_j^{'\gamma}\right)^2
%\end{align}

\end{proof}

\begin{proposition}\label{prop:T0+specialbrane}
Only special branes contribute positively to the $0^{\text{th}}$-tadpole. This is equivalent to the fact, that mixed branes only contribute
negatively to the $0^{\text{th}}$-tadpole.
\end{proposition}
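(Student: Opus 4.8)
The plan is to show that a $\beta$-brane contributing $\geq 0$ to the $0^{\text{th}}$-tadpole must be a special brane; the statement then follows, since every $\beta$-brane is either special or mixed by Definition \ref{def:specialmixedbranes}. The natural tool is the reformulation \eqref{betaSUSYeqWithM} of the SUSY equality together with the $M$-matrix form \eqref{T_0WithM} of the $0^{\text{th}}$-tadpole. Note that the $\beta$-brane contribution to \eqref{T0} is $N^\beta C\!\left(b+\tfrac{m^\beta}{m^\beta_0}, b+\tfrac{m^\beta}{m^\beta_0}\right)(m^\beta_0)^2$, so assuming a non-negative contribution is exactly the hypothesis $C\!\left(b+\tfrac{m^\beta}{m^\beta_0}, b+\tfrac{m^\beta}{m^\beta_0}\right)\geq 0$, i.e.\ the vector $b+\tfrac{m^\beta}{m^\beta_0}$ is time- or light-like in $C$.

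First I would observe that $b+\tfrac{m^\beta}{m^\beta_0} = (\tfrac{b}{2}+\tfrac{m^\beta}{m^\beta_0}) + \tfrac{b}{2}$, and that $\tfrac{b}{2}$ is time-like: indeed $C(b,b)=\sum_{i,j}b_ib_j C_{ij}=\mathcal{K}_{000}>0$ by \eqref{eq:C_IJKrels} and property (3)/the explicit values $\mathcal{K}_{000}=8,9-n,\ldots>0$. Likewise $C(b, \tfrac{b}{2}+\tfrac{m^\beta}{m^\beta_0}) = \tfrac12\mathcal{K}_{000} + \tfrac12\sum_i \mathcal{K}_{00i}\,\tfrac{m^\beta_i}{m^\beta_0}$; more usefully, $C(b, \tfrac b2 + \tfrac j{j_0})>0$ by the same computation as \eqref{c(v,v)+ive}. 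The key step is to feed the hypothesis into the SUSY equality \eqref{betaSUSYeqWithM} (or its predecessor \eqref{StarWithC000}): the SUSY inequality-derived relation \eqref{StarWithC000} reads $0 > \tfrac12\mathcal{K}_{000} + 6\,C\!\left(\tfrac b2+\tfrac{m^\beta}{m^\beta_0}, \tfrac b2 + \tfrac j{j_0}\right)$, so the vector $\tfrac b2 + \tfrac{m^\beta}{m^\beta_0}$ has strictly negative $C$-pairing with the time-like vector $\tfrac b2 + \tfrac j{j_0}$. Since $b+\tfrac{m^\beta}{m^\beta_0}$ is assumed time- or light-like and $C$ has Lorentzian signature $(1,n)$ by property (4), $b+\tfrac{m^\beta}{m^\beta_0}$ lies in one of the two light-cones; I claim it lies in the \emph{opposite} cone to $\tfrac b2 + \tfrac j{j_0}$. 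To see this, write $b + \tfrac{m^\beta}{m^\beta_0} = (\tfrac b2 + \tfrac{m^\beta}{m^\beta_0}) + \tfrac b2$ and pair with $\tfrac b2 + \tfrac j{j_0}$: we get $C(b+\tfrac{m^\beta}{m^\beta_0}, \tfrac b2+\tfrac j{j_0}) = C(\tfrac b2+\tfrac{m^\beta}{m^\beta_0},\tfrac b2+\tfrac j{j_0}) + C(\tfrac b2, \tfrac b2 + \tfrac j{j_0})$, where the first summand is $<-\tfrac1{12}\mathcal{K}_{000}$ by \eqref{StarWithC000} and the second is $<\tfrac12 C(\tfrac b2 + \tfrac j{j_0},\tfrac b2 + \tfrac j{j_0})$-ish; I would choose the bookkeeping so that the sum is manifestly negative (this is the point where I'd be most careful — getting a clean inequality may require also using \eqref{sumstar} or the positivity $b_i,j_i\geq 0$ of properties (2),(5)). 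Granting this, a time/light-like vector with negative pairing against a time-like vector $v$ lies in $-$(cone of $v$), hence $-(b+\tfrac{m^\beta}{m^\beta_0})$ is in the same cone as $\tfrac b2+\tfrac j{j_0}$, i.e.\ in the same cone as every K\"ahler generator $D_i$ by property (1).

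Then the conclusion is immediate: for each K\"ahler generator $D_i$ we have $C(D_i,\,\cdot\,)\geq 0$ on its own closed light-cone (two time/light-like vectors in the same closed cone have non-negative pairing), so $C\!\left(D_i, -(b+\tfrac{m^\beta}{m^\beta_0})\right)\geq 0$, i.e.\ $C\!\left(D_i,\, b+\tfrac{m^\beta}{m^\beta_0}\right)\leq 0$. But $t^\beta_i = 2\,C\!\left(D_i,\, \tfrac b2 + \tfrac{m^\beta}{m^\beta_0}\right)$ — wait, more precisely $t_i^\beta = 2\sum_j C_{ij}(\tfrac{b_j}{2}+\tfrac{m^\beta_j}{m^\beta_0})$ from \eqref{eq:Ti}, which is $2\,C(D_i, \tfrac b2 + \tfrac{m^\beta}{m^\beta_0})$; since $\tfrac b2 + \tfrac{m^\beta}{m^\beta_0}$ and $b + \tfrac{m^\beta}{m^\beta_0}$ differ by $\tfrac b2$, which is in the good cone, we get $C(D_i, \tfrac b2 + \tfrac{m^\beta}{m^\beta_0}) = C(D_i, b+\tfrac{m^\beta}{m^\beta_0}) - C(D_i, \tfrac b2) \leq C(D_i, b+\tfrac{m^\beta}{m^\beta_0}) < 0$, the last inequality being strict because $b+\tfrac{m^\beta}{m^\beta_0}$ is \emph{not} proportional to $D_i$ (otherwise \eqref{StarWithC000} would force $C(D_i,\tfrac b2+\tfrac j{j_0})<0$, contradicting property (1)). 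Hence $t^\beta_i < 0$ for all $i$, so the brane is special by Definition \ref{def:specialmixedbranes}. Contrapositively, a mixed brane — having some $t^\beta_i\geq 0$ — cannot contribute non-negatively to the $0^{\text{th}}$-tadpole, which is the assertion. The main obstacle, as flagged, is the careful sign-bookkeeping in the light-cone argument (ensuring the "opposite cone" claim is airtight given only \eqref{StarWithC000}, \eqref{sumstar}, and properties (1)–(5)); once the cone placement of $b+\tfrac{m^\beta}{m^\beta_0}$ is pinned down, everything else is a one-line consequence of Lorentzian positivity of intersection pairings on a single light-cone.
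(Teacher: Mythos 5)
Your overall strategy (work directly with the sign of $C\!\left(b+\tfrac{m^\beta}{m^\beta_0},\,b+\tfrac{m^\beta}{m^\beta_0}\right)$ and use Lorentzian causality with respect to the time-like vector $\tfrac{b}{2}+\tfrac{j}{j_0}$, rather than the paper's $M_{\{i,k\}}$-matrix rewriting \eqref{T_0WithM}) could in principle work, but the step you yourself flag as delicate is a genuine gap, not a bookkeeping issue. The inequality \eqref{StarWithC000} controls the \emph{wrong} vector: it gives $C\!\left(\tfrac{b}{2}+\tfrac{m^\beta}{m^\beta_0},\tfrac{b}{2}+\tfrac{j}{j_0}\right)<-\tfrac{1}{12}\mathcal{K}_{000}$, a bound by a \emph{constant}, while passing to $b+\tfrac{m^\beta}{m^\beta_0}$ adds $C\!\left(\tfrac{b}{2},\tfrac{b}{2}+\tfrac{j}{j_0}\right)=\tfrac14\mathcal{K}_{000}+\tfrac12\sum_i\mathcal{K}_{00i}\tfrac{j_i}{j_0}$, which is positive and grows with the K\"ahler moduli. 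So your decomposition only yields an upper bound that is positive, and no amount of rearranging \eqref{StarWithC000}, \eqref{sumstar} and the positivity of $b_i,j_i$ will force $C\!\left(b+\tfrac{m^\beta}{m^\beta_0},\tfrac{b}{2}+\tfrac{j}{j_0}\right)<0$: a brane with all $\tilde{t}_i^{\beta}\geq 0$ (equivalently $C\!\left(\tfrac{b}{2}+\tfrac{j}{j_0},\,b+\tfrac{m^\beta}{m^\beta_0}\right)\geq 0$) is perfectly consistent with those inequalities. Excluding (or rather handling) exactly this configuration is what requires the SUSY \emph{equality} \eqref{betaSUSYeqWithM} together with the large-volume growth comparison — LHS quadratic in $j$, RHS bounded linearly via \eqref{sumstar} — which is the content of the paper's Lemma \ref{lem:largej} (and, for the sharper statement that all $\tilde{t}_i^{\beta}<0$, Corollary \ref{cor:T0+then(C00i+ti)-}). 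Your "opposite cone" claim is true in the end, but only as a consequence of that argument, which your write-up never invokes.

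The fix is small but essential: split into cases. If $C\!\left(\tfrac{b}{2}+\tfrac{j}{j_0},\,b+\tfrac{m^\beta}{m^\beta_0}\right)\geq 0$, cite Lemma \ref{lem:largej} directly and conclude the brane is special; if it is $<0$, then your hypothesis $C\!\left(b+\tfrac{m^\beta}{m^\beta_0},b+\tfrac{m^\beta}{m^\beta_0}\right)\geq 0$ places the vector in the past light-cone, and your causality argument (two causal vectors in the same closed cone pair non-negatively, property (1), plus $t_i^\beta=\tilde{t}_i^\beta-\mathcal{K}_{00i}\leq-\mathcal{K}_{00i}<0$) goes through and gives specialness without the $M$-matrices. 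With that patch your route is a legitimate alternative to the paper's proof, which instead uses the positive semi-definiteness of $M_{\{i,k\}}$ in \eqref{T_0WithM} to force all $\tilde{t}_i^{\beta}$ to share a sign before appealing to the same Lemma \ref{lem:largej}; as written, however, the central sign claim is unsupported.
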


\begin{proof}
We recall that the $0^{\text{th}}$-tadpole can be written in the form (\ref{T_0WithM}) for arbitrary choices of $\{i,k\}$, $i\neq k$. 
Focusing on its RHS, we
note that the second term is always negative by the positive semi-definiteness of the matrices $M_{\{i,k\}}$. Furthermore,
the third term is always negative by Proposition \ref{prop:gammaContribute-T0}. Thus, the RHS of \eqref{T_0WithM}
can only be positive, if the first term on the RHS is positive. This implies that all $\tilde{t}^\beta_i=\mathcal{K}_{00i}+t_i^\beta$, cf.~\eqref{eq:tildet_i},  have to be of the same sign:
if not, there exists a pair $\tilde{t}_i^\beta$, $\tilde{t}_k^\beta$ of opposite sign. Writing the RHS of \eqref{T_0WithM} in terms of this 
pair, the first term is negative and the entire RHS of \eqref{T_0WithM} would be negative.  

If all $\tilde{t}_i^\beta$ are negative, all $t_i^{\beta}$ have to be strictly negative since each 
$\mathcal{K}_{00i}$ are strictly positive. By Definition \ref{def:specialmixedbranes}, a $\beta$-brane with this property is a special brane. 
If the $\tilde{t}_i^\beta$ are all positive, then we have
  $\frac{1}{2}\sum_{i=1}^p\tilde{t}_i^{\beta}\left(\frac{b_i}{2}+\frac{j_i}{j_0}\right)=C\left(\frac{b}{2}+\frac{j}{j_0},
    b+\frac{m^{\beta}}{m^{\beta}_0}\right)\geq 0$, and by Lemma \ref{lem:largej} it is also a special brane.
\end{proof}

\begin{corollary}\label{cor:T0+then(C00i+ti)-}
A special brane that contributes positively to the $0^{\text{th}}$-tadpole must have $\tilde{t}_i^{\beta}<0$ for all 
$i$.
\end{corollary}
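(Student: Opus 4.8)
The plan is to read off this corollary directly from the proof of Proposition \ref{prop:T0+specialbrane}, or rather from a slight refinement of its case analysis. Recall that in that proof, one writes the $0^{\text{th}}$-tadpole in the form \eqref{T_0WithM} for an arbitrary pair $\{i,k\}$ with $i\neq k$, and observes that the second term (involving the positive semi-definite $M_{\{i,k\}}$) and the third term (the $\gamma$-brane contributions, by Proposition \ref{prop:gammaContribute-T0}) are both non-positive. Hence a $\beta$-brane can contribute positively to the $0^{\text{th}}$-tadpole only through the first term $\frac12 x_{\{i,k\}}\tilde t_i^\beta\tilde t_k^\beta (m_0^\beta)^2$, which forces $\tilde t_i^\beta$ and $\tilde t_k^\beta$ to have the same sign for every pair $\{i,k\}$. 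In other words, all $\tilde t_i^\beta$ ($i=1,\ldots,p$) have a common sign.

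Next I would rule out the "all positive" alternative for a \emph{special} brane. Suppose $\tilde t_i^\beta \geq 0$ for all $i$. Then, using \eqref{eq:C_IJKrels} and the definitions \eqref{eq:Ti}, \eqref{eq:tildet_i}, one has
\begin{equation*}
\sum_{i=1}^p \tilde t_i^\beta\Big(\tfrac{b_i}{2}+\tfrac{j_i}{j_0}\Big) = 2\,C\Big(\tfrac{b}{2}+\tfrac{j}{j_0},\, b+\tfrac{m^\beta}{m^\beta_0}\Big) \geq 0 ,
\end{equation*}
since $b_i, j_i \geq 0$ by properties (2) and (5). But $t_i^\beta = \tilde t_i^\beta - \mathcal{K}_{00i}$ and each $\mathcal{K}_{00i}>0$ by property (3), so $\tilde t_i^\beta\geq 0$ does not by itself make the brane special; however, having $0\leq C(\frac{b}{2}+\frac{j}{j_0}, b+\frac{m^\beta}{m^\beta_0})$ places us exactly in the hypothesis of Lemma \ref{lem:largej}, so the brane is special — but then by Definition \ref{def:specialmixedbranes} we need $t_i^\beta<0$ for all $i$, i.e. $\tilde t_i^\beta < \mathcal{K}_{00i}$ for all $i$, which is consistent with the "all positive" case and so does not yet yield a contradiction. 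The cleaner route is: for a special brane, $t_i^\beta<0$ for all $i$ by definition, hence $\sum_i t_i^\beta(\frac{b_i}{2}+\frac{j_i}{j_0})<0$; comparing with the identity $2C(\frac{b}{2}+\frac{j}{j_0}, b+\frac{m^\beta}{m^\beta_0}) = \sum_i \mathcal{K}_{00i}(\frac{b_i}{2}+\frac{j_i}{j_0}) + \sum_i t_i^\beta(\frac{b_i}{2}+\frac{j_i}{j_0})$ shows this quantity can have either sign. So instead I directly combine: "all $\tilde t_i^\beta$ of one sign" (from the $M$-matrix argument) with "$t_i^\beta<0$ for all $i$" (definition of special). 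If the common sign of the $\tilde t_i^\beta$ were nonnegative, pick any $i$; then $\tilde t_i^\beta = \mathcal{K}_{00i} + t_i^\beta \geq 0$ is possible, so I instead argue by contradiction at the level of the tadpole sign: were some $\tilde t_i^\beta \geq 0$ while the brane contributes positively, the common-sign conclusion forces \emph{all} $\tilde t_k^\beta \geq 0$, hence $C(\frac{b}{2}+\frac{j}{j_0}, b+\frac{m^\beta}{m^\beta_0})\geq 0$, Lemma \ref{lem:largej} confirms it is special (no contradiction there), but then I use the stronger input that a special brane contributing positively sits in the first column of Table \ref{tab:m0bounds} and one re-examines the SUSY equality \eqref{betaSUSYeqWithM}: its RHS, with the $\tilde t$-version rearrangement, cf.\ the RHS of \eqref{T_0WithM}, combined with positivity of all $\tilde t_i^\beta$, overshoots the bound \eqref{eq:RHSofSUSYatMost}, contradicting that \eqref{betaSUSYeqWithM} holds at large volume. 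Hence all $\tilde t_i^\beta$ must be negative.

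So the key steps, in order, are: (i) invoke the form \eqref{T_0WithM} of the $0^{\text{th}}$-tadpole and the non-positivity of its second and third terms (Proposition \ref{prop:gammaContribute-T0}, positive semi-definiteness of $M_{\{i,k\}}$) to conclude that positive contribution forces $\tilde t_i^\beta\tilde t_k^\beta>0$ for every pair, i.e. all $\tilde t_i^\beta$ share a common sign; (ii) exclude the common sign being $\geq 0$ by feeding "$\tilde t_i^\beta\geq 0\ \forall i$" into the chain $C(\frac{b}{2}+\frac{j}{j_0}, b+\frac{m^\beta}{m^\beta_0})\geq 0 \Rightarrow$ special (Lemma \ref{lem:largej}), then re-deriving the contradiction in the SUSY equality \eqref{betaSUSYeqWithM} exactly as in the proof of Lemma \ref{lem:largej} but now keeping the $\tilde t$-terms on the RHS, so that the RHS is again $O(j_i/j_0)$ while the LHS grows like the volume; (iii) conclude $\tilde t_i^\beta<0$ for all $i$. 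The main obstacle is step (ii): one must be careful that the "all $\tilde t_i^\beta \geq 0$" case is genuinely incompatible with the SUSY equality holding at large volume, which requires re-running the growth estimate of Lemma \ref{lem:largej} rather than merely quoting it, since that lemma's statement is about $t_i^\beta$ of mixed sign, not about $\tilde t_i^\beta$ being uniformly nonnegative. Everything else is bookkeeping with \eqref{eq:C_IJKrels}, \eqref{eq:Ti}, \eqref{eq:tildet_i} and the sign conventions already established.
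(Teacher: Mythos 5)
Your step (i) is exactly the paper's first move (recalling from the proof of Proposition \ref{prop:T0+specialbrane} that a positive contribution to the $0^{\text{th}}$-tadpole forces all $\tilde{t}_i^{\beta}$ to share a common sign), and your overall strategy --- rule out the all-nonnegative alternative by a large-volume growth argument on the SUSY equality \eqref{betaSUSYeqWithM} --- is also the paper's. However, there is a genuine gap precisely at the point you yourself flag: you assert that ``the RHS is again $O(j_i/j_0)$'' in the case $\tilde{t}_i^{\beta}\geq 0$ for all $i$, but you never justify it, and the estimate from Lemma \ref{lem:largej} cannot simply be re-run. That lemma bounds the RHS of \eqref{betaSUSYeqWithM} by choosing a pair $t_i^{\beta},t_k^{\beta}$ of opposite signs and dropping the resulting nonpositive $\tfrac{3}{2}x_{\{i,k\}}t_i^{\beta}t_k^{\beta}$ term; for a \emph{special} brane every such product is strictly positive, so this term cannot be dropped and could a priori grow with the flux, leaving the $O(j_i/j_0)$ claim unsupported. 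Your attempted substitute (that the RHS ``overshoots the bound \eqref{eq:RHSofSUSYatMost}'') is not a valid contradiction either, since \eqref{eq:RHSofSUSYatMost} was derived only under the mixed-brane hypothesis.

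The missing idea, which is the one genuinely new ingredient of the paper's proof of this corollary, is to combine specialness ($t_i^{\beta}<0$ for all $i$) with the assumption $\tilde{t}_i^{\beta}=\mathcal{K}_{00i}+t_i^{\beta}\geq 0$ to conclude $|t_i^{\beta}|\leq \mathcal{K}_{00i}$ for all $i$. This caps the problematic term by a flux- and moduli-independent constant, so that the RHS of \eqref{betaSUSYeqWithM} is at most $\bigl[\tfrac{1}{4}\mathcal{K}_{000}+\tfrac{3}{2}x_{\{i,k\}}\mathcal{K}_{00i}\mathcal{K}_{00k}\bigr](m_0^{\beta})^2$; then the last inequality in \eqref{sumstar} bounds $(m_0^{\beta})^2$ by a linear combination of the $j_i/j_0$, giving the desired $O(j_i/j_0)$ behaviour of the RHS, while $\tilde{t}_i^{\beta}\geq 0$ makes the LHS at least $3C(j,j)$, quadratic in the $j_i$, so the SUSY equality fails at large volume. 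Without the bound $|t_i^{\beta}|\leq\mathcal{K}_{00i}$ (or some replacement for it) your step (ii) does not go through, so as written the proposal does not establish the corollary.
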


\begin{proof}
Recall from the proof of Proposition \ref{prop:gammaContribute-T0} that a special brane which contributes positively 
to the $0^{\text{th}}$-tadpole must have all $\tilde{t}^\beta_i$ of the same sign. If they are all negative, we are done.
Thus, assume all $\tilde{t}^\beta_i\geq 0$. We prove this is not possible using a similar argument as in the proof
of Lemma \ref{lem:largej}. 

Since $\mathcal{K}_{00i}>0$
$\forall i$ and we are considering a special brane, i.e.~all $t_i^\beta<0$, having 
$\tilde{t}^\beta_i=(\mathcal{K}_{00i}+t_i^{\beta})\geq0$ $\forall i$ means
$|t_i^{\beta}|\leq \mathcal{K}_{00i}$ $\forall i$. Now consider the SUSY equality
(\ref{betaSUSYeqWithM}). Since the $M$-matrix is positive
semi-definite, the RHS of (\ref{betaSUSYeqWithM}) is at most
$\left[\frac{1}{4}\mathcal{K}_{000}+\frac{3}{2}x^{\{i,k\}}\mathcal{K}_{00i}\mathcal{K}_{00k}\right](m_0^{\beta})^2$. Also, by the last inequality in (\ref{sumstar}), we have
\beq
6\sum_{i=1}^p T_i\left(\frac{b_i}{2}+\frac{j_i}{j_0}\right)> \mathcal{K}_{000}\sum_{\beta}N_{\beta}(m_0^{\beta})^2\,,
\eeq
i.e.~$(m_0^\beta)^2$ is smaller than a linear combination of $j_i/j_0$, so is
$\left[\frac{1}{4}\mathcal{K}_{000}+\frac{3}{2}x^{\{i,k\}}\mathcal{K}_{00i}\mathcal{K}_{00k}\right](m_0^{\beta})^2$, since the prefactor 
$\left[\frac{1}{4}\mathcal{K}_{000}+\frac{3}{2}x^{\{i,k\}}\mathcal{K}_{00i}\mathcal{K}_{00k}\right]\sim \mathcal{K}_{000}$. However, $\tilde{t}_i^{\beta}>0$
for all $i$ means
$\frac{1}{2}\sum_{i=1}^p\tilde{t}_i^{\beta}\left(\frac{b_i}{2}+\frac{j_i}{j_0}\right)=C\left(\frac{b}{2}+\frac{j}{j_0},
  b+\frac{m^{\beta}}{m^{\beta}_0}\right)\geq 0$, which implies that the LHS of
(\ref{betaSUSYeqWithM}) is at least $3C(j,j)$ which is quadratic in the
$j_i$. 

Thus, in the limit that all $j_I$ are large, the LHS of
(\ref{betaSUSYeqWithM}) will always be greater than its RHS, thus
violating the SUSY equality.\footnote{The precise value of the $j_I$ at which the SUSY equality is violated can be computed as mentioned 
in Remark \ref{remark:ValueOfji}. For example, for $\mathbb{F}_k$, we find that the SUSY equality is violated for 
$j_I\geq 10$ $\forall I$.}
\end{proof}

This concludes the proof of the results in Table 
\ref{tab:TadpoleContributions}.  We prove three more important Lemmas before we finally derive the bounds on $t_i^\beta$. 

For the rest of the proof, we will label special branes that contribute positively to the $0^{\text{th}}$-tadpole by $\beta_s$, and mixed branes by $\beta_m$. We also use the simplified notation 
\begin{equation}
\sum_{\beta_m,\,+} \equiv \sum_{\beta_m, \, \tilde{t}_i^{\beta_m} \geq 0}\,.
\end{equation}
The index $i$ is omitted in this simplified notation when it is clear from the context to which $i$ we are referring. 

\begin{lemma}\label{lemma:ShiftedTadpole}
For any index $i$, we have the following inequality:
\begin{equation} \label{eq:inequalityBetamBetas}
\sum_{\beta_s }N^{\beta_s}
\Big|\tilde{t_i}^{\beta_s}\Big|(m_0^{\beta_s})^2 < \sum_{\beta_m, \, +}N^{\beta_m}\tilde{t}_i^{\beta_m}(m_0^{\beta_m})^2\ + T_i\,.
\end{equation}
\end{lemma}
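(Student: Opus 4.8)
The plan is to use the $i^{\text{th}}$-tadpole \eqref{eq:Ti} as an exact identity and to discard from its right-hand side only terms that are manifestly non-positive, so that what remains is — up to the shift $t_i^\beta\to\tilde t_i^\beta$ — precisely the combination appearing in \eqref{eq:inequalityBetamBetas}. Concretely, \eqref{eq:Ti} reads $\sum_\beta N^\beta t_i^\beta(m_0^\beta)^2 = n_i^{\text{D5}}-T_i\ge -T_i$, the sum running over all $\beta$-branes.

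First I would partition the $\beta$-branes, for the fixed index $i$, into: (a) the special branes $\beta_s$ contributing positively to the $0^{\text{th}}$-tadpole; (b) the mixed branes $\beta_m$ with $\tilde t_i^{\beta_m}\ge 0$; and (c) everything else, i.e.\ special branes that do not contribute positively to the $0^{\text{th}}$-tadpole together with mixed branes having $\tilde t_i^{\beta_m}<0$. The key first step is that every brane in class (c) contributes $\le 0$ to the $i^{\text{th}}$-tadpole: a special brane has $t_i^\beta<0$ for \emph{all} $i$ by Definition \ref{def:specialmixedbranes}, while a mixed brane with $\tilde t_i^{\beta_m}<0$ has $t_i^{\beta_m}=\tilde t_i^{\beta_m}-\mathcal{K}_{00i}<0$ because $\mathcal{K}_{00i}>0$ (property (3)). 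Dropping the class-(c) terms from the exact tadpole identity and using $n_i^{\text{D5}}\ge 0$ then yields $\sum_{\beta_s}N^{\beta_s}t_i^{\beta_s}(m_0^{\beta_s})^2+\sum_{\beta_m,+}N^{\beta_m}t_i^{\beta_m}(m_0^{\beta_m})^2\ge -T_i$.

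The second step replaces each $t_i^\beta$ by $\tilde t_i^\beta=\mathcal{K}_{00i}+t_i^\beta$, gaining a strictly positive surplus $\mathcal{K}_{00i}(m_0^\beta)^2$ per brane. For $\beta_s$ I would invoke Corollary \ref{cor:T0+then(C00i+ti)-}, which gives $\tilde t_i^{\beta_s}<0$, hence $|\tilde t_i^{\beta_s}|=|t_i^{\beta_s}|-\mathcal{K}_{00i}<|t_i^{\beta_s}|$ and therefore $-N^{\beta_s}|\tilde t_i^{\beta_s}|(m_0^{\beta_s})^2>N^{\beta_s}t_i^{\beta_s}(m_0^{\beta_s})^2$; for $\beta_m,+$ branes one simply has $\tilde t_i^{\beta_m}>t_i^{\beta_m}$. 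Adding these improvements to the inequality from step one gives $-\sum_{\beta_s}N^{\beta_s}|\tilde t_i^{\beta_s}|(m_0^{\beta_s})^2+\sum_{\beta_m,+}N^{\beta_m}\tilde t_i^{\beta_m}(m_0^{\beta_m})^2>-T_i$, which is \eqref{eq:inequalityBetamBetas} after rearranging. For strictness: if at least one $\beta_s$ or $\beta_m,+$ brane is present, the $\mathcal{K}_{00i}>0$ surplus makes the inequality strict; if none is present, both sums vanish and the claim reduces to $0<T_i$, which holds since $T_i=12\,\mathcal{K}_{00i}>0$ by \eqref{eq:T_iEvaluated}.

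The main thing to get right — and the only genuinely delicate point — is the class-(c) bookkeeping: one must verify that the three classes exhaust all $\beta$-branes and that every class-(c) brane really contributes $\le 0$ to \emph{this} particular $i^{\text{th}}$-tadpole, which is exactly where positivity of $\mathcal{K}_{00i}$ and the ``for all $i$'' clause in the definition of a special brane are used. The remainder is elementary sign-chasing.
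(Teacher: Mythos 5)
Your proof is correct and follows essentially the same route as the paper: start from the $i^{\text{th}}$-tadpole lower bound \eqref{eq:T_ILowerbounds} with $n_i^{\rm D5}\geq 0$, discard the non-positive contributions (special branes and mixed branes with $\tilde t_i^{\beta_m}<0$), and shift $t_i^\beta\to\tilde t_i^\beta$ using $\mathcal{K}_{00i}>0$ together with Corollary \ref{cor:T0+then(C00i+ti)-} for the sign of $\tilde t_i^{\beta_s}$. The only difference is organizational (your three-class partition versus the paper's sign split plus added correction terms), and your explicit treatment of strictness when no $\beta_s$ or $\beta_m,+$ branes are present is a small improvement over the paper's presentation.
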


\begin{proof}
By \eqref{eq:T_ILowerbounds}, we have a lower bound for the 
$i^{\text{th}}$-tadpole. Thus, we have the following inequality for the 
$i^{\text{th}}$-tadpole:
\begin{align}\label{eq:TiShifted}
T_i \geq & \!\!\! \sum_{\beta, \, t_i^{\beta}<0} \!\!\! N^{\beta}
\Big|t_i^{\beta}\Big|(m_0^{\beta})^2 -\!\!\!\!\! \sum_{\beta_m, \, t_i^{\beta_m} \geq 0}\!\!\!\!\!N^{\beta_m}t_i^{\beta_m}(m_0^{\beta_m})^2
\geq \sum_{\beta_s }N^{\beta_s}
\Big|t_i^{\beta_s}\Big|(m_0^{\beta_s})^2 -\!\!\!\!\! \sum_{\beta_m, \, t_i^{\beta_m} \geq 0}\!\!\!\!\!N^{\beta_m}t_i^{\beta_m}(m_0^{\beta_m})^2 \nonumber\\
> & \sum_{\beta_s }N^{\beta_s}
\Big|t_i^{\beta_s}\Big|(m_0^{\beta_s})^2 -\!\!\!\!\! \sum_{\beta_m, \, t_i^{\beta_m} \geq 0}\!\!\!\!\!N^{\beta_m}t_i^{\beta_m}(m_0^{\beta_m})^2 -\sum_{\beta_s}N^{\beta_s}\mathcal{K}_{00i}(m_0^{\beta_s})^2- \!\!\!\!\!\!\!\! \sum_{\beta_m, \, 
t_i^{\beta_m} \geq 0} \!\!\!\!\!\!\!\! N^{\beta_m}\mathcal{K}_{00i}(m_0^{\beta_m})^2 \nonumber\\
& -\!\!\!\!\!\!\!\!\!\!\!\!\!\!\! \sum_{\beta_m, \, t_i^{\beta_m} < 0, \, \tilde{t}_i^{\beta_m} \geq 0} \!\!\!\!\!\!\!\!\!\!\!\!\!\! N^{\beta_m}\tilde{t}_i^{\beta_m}(m_0^{\beta_m})^2 \nonumber\\
= & \sum_{\beta_s }N^{\beta_s}
\Big|\tilde{t_i}^{\beta_s}\Big|(m_0^{\beta_s})^2 - \!\!\!\!\!\!\sum_{\beta_m, \, t_i^{\beta_m} \geq 0}\!\!\!\!N^{\beta_m}\tilde{t}_i^{\beta_m}(m_0^{\beta_m})^2 -\!\!\!\!\!\!\!\!\!\!\!\!\!\!\! \sum_{\beta_m, \, t_i^{\beta_m} < 0, \, \tilde{t}_i^{\beta_m} \geq 0} \!\!\!\!\!\!\!\!\!\!\!\!\!\! N^{\beta_m}\tilde{t}_i^{\beta_m}(m_0^{\beta_m})^2 \nonumber\\
= & \sum_{\beta_s }N^{\beta_s}
\Big|\tilde{t_i}^{\beta_s}\Big|(m_0^{\beta_s})^2 - \sum_{\beta_m, \,+}N^{\beta_m}\tilde{t}_i^{\beta_m}(m_0^{\beta_m})^2\, ,
\end{align}
where in the first inequality, we %used \eqref{eq:T_ILowerbounds} and 
split terms in the sum of \eqref{eq:T_ILowerbounds} into positive and negative contributions, as indicated in the summation by $t_i^{\beta_m} \geq 0$ and $t_i^{\beta_m} < 0$. In 
the second inequality, in the first term, we only kept those special 
branes in the sum that contribute positively to the 
$0^{\text{th}}$-tadpole, which are labelled by 
$\beta_s$. In the second line, we added 
three more negative terms and 
in the next equality, we combined them into three sums using 
\eqref{eq:tildet_i}, that yield the two sums  in the last 
line.
\end{proof}

\begin{lemma}\label{lemma:1/6}
	For any pair of a special brane that contributes positively to the $0^{\text{th}}$-tadpole and a mixed brane, there exists an index $k$ such that $\tilde{t}_k^{\beta_m}$ is strictly negative and $|\tilde{t}_k^{\beta_m}|>|\tilde{t}_k^{\beta_s}|$. In particular 
	\beq \label{eq:1/6}
		|\tilde{t}_k^{\beta_m}|-|\tilde{t}_k^{\beta_s}|\geq \frac13\,.
	\eeq	
\end{lemma}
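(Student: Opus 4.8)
The plan is to reduce the statement to the single inequality ``$\tilde t_k^{\beta_m}<\tilde t_k^{\beta_s}$ for some index $k$'' and then read off the gap $\tfrac13$ from flux quantization. First I would record, using Corollary~\ref{cor:T0+then(C00i+ti)-}, that $\tilde t_i^{\beta_s}<0$ for every $i$, so $|\tilde t_i^{\beta_s}|=-\tilde t_i^{\beta_s}$ and the assertion ``$\exists k:\ \tilde t_k^{\beta_m}<0$ and $|\tilde t_k^{\beta_m}|>|\tilde t_k^{\beta_s}|$'' is equivalent to ``$\exists k:\ \tilde t_k^{\beta_m}<\tilde t_k^{\beta_s}$'' (which then automatically forces $\tilde t_k^{\beta_m}<0$). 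I would also observe that $|m_0^{\beta_s}|\le 3$ and $|m_0^{\beta_m}|\le 3$: indeed $C\!\left(\tfrac b2+\tfrac j{j_0},\,b+\tfrac{m^{\beta_s}}{m_0^{\beta_s}}\right)=\tfrac12\sum_i\tilde t_i^{\beta_s}\big(\tfrac{b_i}2+\tfrac{j_i}{j_0}\big)<0$ because every $\tilde t_i^{\beta_s}<0$ and every $\tfrac{b_i}2+\tfrac{j_i}{j_0}>0$ by properties (2),(5), while the same strict inequality holds for $\beta_m$ by the contrapositive of Lemma~\ref{lem:largej}; hence both branes sit in the columns of Table~\ref{tab:m0bounds} with the bound $3$ on $|m_0|$. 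Since $\tilde t_k^\beta m_0^\beta=2\big(\mathcal K_{00k}m_0^\beta+\int_{D_k}F^\beta\big)$ is an even integer by property~(3) and \eqref{eq:integralitymalpha}, each of $\tilde t_k^{\beta_s},\tilde t_k^{\beta_m}$ lies in $\tfrac{2}{|m_0^\beta|}\mathbb Z\subseteq\tfrac13\mathbb Z$, so once $\tilde t_k^{\beta_m}<\tilde t_k^{\beta_s}$ is established the positive number $\tilde t_k^{\beta_s}-\tilde t_k^{\beta_m}=|\tilde t_k^{\beta_m}|-|\tilde t_k^{\beta_s}|$ is an element of $\tfrac13\mathbb Z$, hence $\ge\tfrac13$, which is \eqref{eq:1/6}.

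To produce such a $k$ I would compare two weighted sums of the $\tilde t_i$'s. On the $\beta_s$ side, applying \eqref{eq:StarWithLowerBound} to the single stack $\beta_s$, using $N^{\beta_s}(m_0^{\beta_s})^2\ge1$, $t_i^{\beta_s}<0$, $|\tilde t_i^{\beta_s}|=|t_i^{\beta_s}|-\mathcal K_{00i}\le |t_i^{\beta_s}|$, and $T_i=12\mathcal K_{00i}$, gives
\begin{equation}
-\sum_{i=1}^p\tilde t_i^{\beta_s}\Big(\tfrac{b_i}2+\tfrac{j_i}{j_0}\Big)\ \le\ 12\sum_{i=1}^p\mathcal K_{00i}\Big(\tfrac{b_i}2+\tfrac{j_i}{j_0}\Big),
\end{equation}
which is linear in the K\"ahler moduli. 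On the $\beta_m$ side, since $\beta_m$ is mixed there is a pair $\{i,k\}$ with $t_i^{\beta_m}t_k^{\beta_m}<0$ by Remark~\ref{remark1}; writing the SUSY equality \eqref{betaSUSYeqWithM} for $\beta_m$ with this pair, both the $\tfrac32 x_{\{i,k\}}t_i^{\beta_m}t_k^{\beta_m}$ term and the positive–semidefinite $M_{\{i,k\}}$ term on the right are non-positive, so that side is at most $\tfrac14\mathcal K_{000}(m_0^{\beta_m})^2\le\tfrac94\mathcal K_{000}$, whereas the left side equals $3C(j,j)+6j_0^2 C\!\left(\tfrac b2+\tfrac j{j_0},b+\tfrac{m^{\beta_m}}{m_0^{\beta_m}}\right)$ with the second term negative; isolating it yields
\begin{equation}
-\sum_{i=1}^p\tilde t_i^{\beta_m}\Big(\tfrac{b_i}2+\tfrac{j_i}{j_0}\Big)\ =\ 2\Big|C\!\Big(\tfrac b2+\tfrac j{j_0},\,b+\tfrac{m^{\beta_m}}{m_0^{\beta_m}}\Big)\Big|\ \ge\ C\!\Big(\tfrac j{j_0},\tfrac j{j_0}\Big)-\frac{3\mathcal K_{000}}{4j_0^2},
\end{equation}
which is quadratic in the K\"ahler moduli.

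In the large volume regime the quadratic bound on the right of the second display dominates the linear bound on the $\beta_s$ side — the same growth comparison already used in Lemma~\ref{lem:largej}, which for concrete $B$ can be sharpened to explicit numerical thresholds on the $j_I$ as in Remark~\ref{remark:ValueOfji}. This gives $-\sum_i\tilde t_i^{\beta_m}(\tfrac{b_i}2+\tfrac{j_i}{j_0})>-\sum_i\tilde t_i^{\beta_s}(\tfrac{b_i}2+\tfrac{j_i}{j_0})$, i.e. $\sum_i(\tilde t_i^{\beta_s}-\tilde t_i^{\beta_m})(\tfrac{b_i}2+\tfrac{j_i}{j_0})>0$; as all the weights $\tfrac{b_i}2+\tfrac{j_i}{j_0}$ are strictly positive, at least one summand is positive, producing an index $k$ with $\tilde t_k^{\beta_m}<\tilde t_k^{\beta_s}$, and the reduction step finishes the proof. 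The main obstacle is precisely this last comparison: one must verify carefully that throughout the large-volume region $C(j/j_0,j/j_0)$ genuinely exceeds $12\sum_i\mathcal K_{00i}(\tfrac{b_i}2+\tfrac{j_i}{j_0})+\tfrac{3\mathcal K_{000}}{4j_0^2}$, which for the bases $\mathbb F_k$, $dP_n$ and the toric surfaces reduces to a finite check on the intersection data of $B$.
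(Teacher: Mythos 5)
Your reduction of the lemma to ``$\exists k:\ \tilde t_k^{\beta_m}<\tilde t_k^{\beta_s}$'', the bounds $|m_0^{\beta_s}|,|m_0^{\beta_m}|\le 3$, and the quantization step giving the gap $\tfrac13$ are all fine and coincide with what the paper does. The genuine gap is in the existence step. You try to prove the \emph{stronger} statement $\sum_i\bigl(\tilde t_i^{\beta_s}-\tilde t_i^{\beta_m}\bigr)\bigl(\tfrac{b_i}{2}+\tfrac{j_i}{j_0}\bigr)>0$ by bounding the two weighted sums separately, and the resulting comparison is false in large parts of the large-volume regime. Concretely, your special-brane bound reads $-\sum_i\tilde t_i^{\beta_s}\bigl(\tfrac{b_i}{2}+\tfrac{j_i}{j_0}\bigr)\le 12\sum_i\mathcal K_{00i}\bigl(\tfrac{b_i}{2}+\tfrac{j_i}{j_0}\bigr)$, whose right-hand side contains the moduli-independent piece $6\sum_i\mathcal K_{00i}b_i=6\,\mathcal K_{000}$, while your mixed-brane lower bound is $C\bigl(\tfrac{j}{j_0},\tfrac{j}{j_0}\bigr)-\tfrac{3\mathcal K_{000}}{4j_0^2}=\tfrac{1}{j_0^2}C(j,j)-\tfrac{3\mathcal K_{000}}{4j_0^2}$. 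The needed inequality therefore requires $C(j,j)>6\,\mathcal K_{000}\,j_0^2+12\,j_0\sum_i\mathcal K_{00i}j_i+\tfrac34\mathcal K_{000}$, which is a condition on the \emph{direction} in K\"ahler moduli space, not on the overall volume: for example on $\mathbb F_0$ with $j_0=j_1=j_2=\lambda\gg1$ it demands $2\lambda^2>48\lambda^2+\dots$, which fails, and more generally it fails whenever the fiber modulus $j_0$ is comparable to or larger than the base moduli. So this is not ``a finite check on the intersection data of $B$'' as you claim; it is a moduli-dependent inequality that is violated inside the allowed large-volume region, and in fact even the target weighted-sum inequality is not clearly true (it is equivalent to $\Delta_{\rm RHS}<0$ for the two SUSY equalities, which your bounds RHS$_{\beta_s}>\mathcal K_{000}(m_0^{\beta_s})^2$ and RHS$_{\beta_m}\le\tfrac14\mathcal K_{000}(m_0^{\beta_m})^2$ do not force when $|m_0^{\beta_s}|=1$ and $|m_0^{\beta_m}|=3$).

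The paper avoids exactly this trap by arguing by contradiction: assuming \emph{no} index $k$ with $\tilde t_k^{\beta_m}<\tilde t_k^{\beta_s}$ exists, it subtracts the two SUSY equalities \eqref{betaSUSYeqWithM}. On the left the common $3C(j,j)$ cancels, the hypothesis allows dropping the $i-$ terms, and the $\tfrac13$-quantization of the $\tilde t$'s (the same fact you use at the end) gives $\Delta_{\rm LHS}\ge 2\sum_{i+}j_{i+}j_0$, i.e.\ a quantity that grows like $j_i\,j_0$; on the right, $\Delta_{\rm RHS}<\tfrac54\mathcal K_{000}$ is a fixed constant, so at large volume the two SUSY equalities cannot both hold. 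The crucial point is that the small quantized gap is multiplied by $j_0^2$ \emph{in the difference}, whereas bounding each brane's sum on its own, as you do, discards this amplification and leaves you needing a direction-dependent inequality that the large-volume assumption does not supply. To repair your argument you would have to replace the final comparison by the paper's difference argument (or something equivalent that keeps the $j_0^2$ factor and the constant bound on the difference of the right-hand sides).
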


\begin{proof}
Suppose the converse is true, i.e.~for some pair of a special brane 
that contributes positively to the $0^{\text{th}}$-tadpole and a mixed 
brane, there does not exist an index $k$ such that 
$\tilde{t}_k^{\beta_m}$ is strictly negative and 
$|\tilde{t}_k^{\beta_m}|>|\tilde{t}_k^{\beta_s}|$. Then, consider 
the difference of the SUSY equality \eqref{betaSUSYeqWithM} for 
the mixed brane and for the special brane:
\bea\label{LHSdif=RHSdif}
&& \text{LHS of \eqref{betaSUSYeqWithM} for the mixed brane - LHS of \eqref{betaSUSYeqWithM} for the special brane} \nonumber\\
&= &
\text{RHS of \eqref{betaSUSYeqWithM} for the mixed brane - RHS of \eqref{betaSUSYeqWithM} for the special brane}
\eea

We will show that \eqref{LHSdif=RHSdif} will be violated. To simplify 
our notation, we will in the following denote the difference of the LHS 
and RHS in \eqref{LHSdif=RHSdif} by $\Delta_{\text{LHS}}$ and 
$\Delta_{\text{RHS}}$, respectively. First 
consider the difference $\Delta_{\text{LHS}}$. The first term, 
$3C\left(\frac{j}{j_0}, \frac{j}{j_0}\right)j_0^2$, is the same for 
both branes. Thus, by expanding everything out and using \eqref{eq:Ti} 
and \eqref{eq:tildet_i}, we obtain
\bea \label{eq:DeltaLHS}
%& \text{LHS of \eqref{betaSUSYeqWithM} for the mixed brane - LHS of %\eqref{betaSUSYeqWithM} for the special brane} \nonumber\\
\Delta_{\text{LHS}}&=&  6C\left(\frac{b}{2}+\frac{j}{j_0}, b+\frac{m^{\beta_m}}{m^{\beta_m}_0}\right)j_0^2- 6C\left(\frac{b}{2}+\frac{j}{j_0}, b+\frac{m^{\beta_s}}{m^{\beta_s}_0}\right)j_0^2 \nonumber\\
&=&  3\sum_{i=1}^p\tilde{t}_i^{\beta_m}\left(\frac{b_i}{2}+\frac{j_i}{j_0}\right)j_0^2-3\sum_{i=1}^p\tilde{t}_i^{\beta_s}\left(\frac{b_i}{2}+\frac{j_i}{j_0}\right)j_0^2\,.
\eea

By Corollary \ref{cor:T0+then(C00i+ti)-}, since the special brane 
contributes positively to the $0^{\text{th}}$-tadpole, 
$\tilde{t}_i^{\beta_s}<0$ for all $i$. Also notice that the mixed brane 
must have at least one $i$ for which $\tilde{t}_i^{\beta_m} > 0$, 
because by definition, a mixed brane must have at least one $i$ for 
which $t_i^{\beta_m} \geq 0$, and for this $i$, by \eqref{eq:tildet_i} 
and the positivity of $\mathcal{K}_{00i}$, $\tilde{t}_i^{\beta_m} > 0$. 
Labelling those $i$ for which $\tilde{t}_i^{\beta_m} > 0$ as $i+$, and 
those $i$ for which $\tilde{t}_i^{\beta_m} \leq 0$ as $i-$, 
\eqref{eq:DeltaLHS} becomes
\begin{align} \label{eq:LHSSUSYDifference}
%& \text{LHS of \eqref{betaSUSYeqWithM} for the mixed brane - LHS of \eqref{betaSUSYeqWithM} for the special brane} \nonumber\\
% = & 3j_0^2(\sum_{i+}\tilde{t}_{i+}^{\beta_m}\left(\frac{b_{i+}}{2}+\frac{j_{i+}}{j_0}\right)
%-\sum_{i-}|\tilde{t}_{i-}^{\beta_m}|\left(\frac{b_{i-}}{2}+\frac{j_{i-}}{j_0}\right) +\sum_{i+}|\tilde{t}_{i+}^{\beta_s}|\left(\frac{b_{i+}}{2}+\frac{j_{i+}}{j_0}\right)
%+\sum_{i-}|\tilde{t}_{i-}^{\beta_s}|\left(\frac{b_{i-}}{2}+\frac{j_{i-}}{j_0}\right))\nonumber\\
\Delta_{\text{LHS}}= & 3j_0^2\Big[\sum_{i+}\tilde{t}_{i+}^{\beta_m}\Big(\frac{b_{i+}}{2}+\frac{j_{i+}}{j_0}\Big)
+ \sum_{i-}(|\tilde{t}_{i-}^{\beta_s}|-|\tilde{t}_{i-}^{\beta_m}|)\Big(\frac{b_{i-}}{2}+\frac{j_{i-}}{j_0}\Big)  + \sum_{i+}|\tilde{t}_{i+}^{\beta_s}|\Big(\frac{b_{i+}}{2}+\frac{j_{i+}}{j_0}\Big)\Big]\nonumber\\
\geq & 3j_0^2\left[\sum_{i+}\tilde{t}_{i+}^{\beta_m}\left(\frac{b_{i+}}{2}+\frac{j_{i+}}{j_0}\right) 
+ \sum_{i+}|\tilde{t}_{i+}^{\beta_s}|\left(\frac{b_{i+}}{2}+\frac{j_{i+}}{j_0}\right)\right]\,.
\end{align}
Here in the last step we dropped the second sum, which is positive, because by assumption there does not exist an index $k$ such that $\tilde{t}_k^{\beta_m}$ is strictly negative and $|\tilde{t}_k^{\beta_m}|>|\tilde{t}_k^{\beta_s}|$.

Notice, by \eqref{eq:Ti}, $\tilde{t}_i^{\beta_s}$, $\tilde{t}_i^{\beta_m}$ are rational numbers 
$\frac{a}{m_0^{\beta_s}\rule{0pt}{10pt}}$, $\frac{b}{m_0^{\beta_m}\rule{0pt}{10pt}}$ with $a$, $b \in 2\mathbb{Z}$.\footnote{By \eqref{eq:Ti}, we have $\tilde{t}_k^{\beta_s}=2\mathcal{K}_{00k}+2\sum_j C_{kj}\frac{m_j^{\beta_s}}{m_0^{\beta_s}}= \frac{a}{m_0^{\beta_s}}$, 
$\tilde{t}_k^{\beta_m}=2\mathcal{K}_{00k}+2\sum_j C_{kj}\frac{m_j^{\beta_m}}{m_0^{\beta_m}}= \frac{b}{m_0^{\beta_m}}$ for $a,b \in 2\mathbb{Z}$.} By Table \ref{tab:m0bounds}, we have 
$|m_0^{\beta_m}|\leq3$. For the special brane, since $\tilde{t}_i^{\beta_s}<0$ for all $i$, we have 
\beq 
	0>\frac12\sum_{i=1}^p\tilde{t}_i^{\beta_s}\left(\frac{b_i}{2}+\frac{j_i}{j_0}\right)=C\left(\frac{b}{2}+\frac{j}{j_0}, b+\frac{m^{\beta_s}}{m^{\beta_s}_0}\right)\,.
\eeq
Thus, the bound $|m_0^{\beta_s}|\leq 3$ in the third column of Table \ref{tab:m0bounds} applies. This implies both $|\tilde{t}_i^{\beta_s}|>0$, 
$\tilde{t}_{i+}^{\beta_m}>0$ are either integers or a third of integers: 
\begin{equation}
3\tilde{t}_{i+}^{\beta_m} > 1\,, \qquad 3|\tilde{t}_{i}^{\beta_s}|> 1\,. 
\end{equation}
Hence, \eqref{eq:LHSSUSYDifference} becomes 
\bea\label{LHSDifLarge}
%& &\text{LHS of \eqref{betaSUSYeqWithM} for the mixed brane - LHS of \eqref{betaSUSYeqWithM} for the special brane} \nonumber\\
\Delta_{\text{LHS}}&\geq & \sum_{i+}3\tilde{t}_{i+}^{\beta_m}\left(\frac{b_{i+}}{2}+\frac{j_{i+}}{j_0}\right) j_0^2
+ \sum_{i+}3|\tilde{t}_{i+}^{\beta_s}|\left(\frac{b_{i+}}{2}+\frac{j_{i+}}{j_0}\right)j_0^2\geq  2\sum_{i+}\left(\frac{b_{i+}}{2}+\frac{j_{i+}}{j_0}\right) j_0^2
\nonumber\\
&\geq & 2\sum_{i+}j_{i+}j_0\,,
\eea
where in the last step we dropped the  term containing the positive 
$b_i$. We have discussed that at least one index 
$i+$ exists. With $j_I\gg 1$ for all $I$, \eqref{LHSDifLarge} shows 
that the difference between the LHS of \eqref{betaSUSYeqWithM} for the 
two branes is large.

Next, we show that the difference $\Delta_{\text{RHS}}$ between the RHS 
of \eqref{betaSUSYeqWithM} for the two branes is much smaller.
Starting from the RHS of \eqref{T_0WithM} for the special brane we
note the identity
\begin{align} \label{RHS+ive}
& 3\left[\tfrac{1}{2}x_{\{i,k\}}\tilde{t}_i^{\beta}\tilde{t}_k^{\beta}-M_{\{i,k\}}\left(b+\tfrac{m^{\beta}}{m^{\beta}_0}, 
b+\tfrac{m^{\beta}}{m^{\beta}_0}\right)\right](m_0^{\beta})^2
\\
= & \Big[\tfrac{1}{4}\mathcal{K}_{000}+\tfrac{3}{2}x_{\{i,k\}}t_i^{\beta}t_k^{\beta}-3M_{\{i,k\}}\Big(\tfrac{b}{2}
+\tfrac{m^{\beta}}{m^{\beta}_0}, \tfrac{b}{2}+\tfrac{m^{\beta}}{m^{\beta}_0}\Big)\Big](m^{\beta}_0)^2
\!+ \tfrac{3}{2}\sum_{i=1}^pb_i\tilde{t}_i^{\beta}(m_0^{\beta})^2 - \mathcal{K}_{000}(m_0^{\beta})^2\,.\nonumber
\end{align}
Since the special brane contributes positively to the $0^{\text{th}}$-tadpole, the LHS of (\ref{RHS+ive}) is positive. We also recall that $\tilde{t}_i^{\beta}<0$ for all $i$
by Corollary \ref{cor:T0+then(C00i+ti)-}, which implies that the  second last term on
the RHS of (\ref{RHS+ive}) is strictly negative, as $b_i\geq 0$. In addition, the last term on
the RHS is always negative for the bases $B$ we consider. 
Thus, the term in square brackets on the RHS of \eqref{RHS+ive}, which 
is the RHS of (\ref{betaSUSYeqWithM}), must be strictly positive. In 
particular, it must have a bigger magnitude than that of (the next to 
last term and) the last term:
\begin{equation*}
\text{RHS of \eqref{betaSUSYeqWithM} for the special brane}>\mathcal{K}_{000}(m_0^{\beta_s})^2.
\end{equation*}

Next, consider the RHS of \eqref{betaSUSYeqWithM} for the mixed brane. Since it is a mixed brane, we can pick a pair of $t_i^{\beta_m}$, $t_k^{\beta_m}$ of opposite signs to make the second term of the RHS of \eqref{betaSUSYeqWithM} negative. By the positive semi-definiteness of the $M$-matrix, the third term of the RHS of \eqref{betaSUSYeqWithM} is always negative. Thus 
\begin{equation*}
\text{RHS of \eqref{betaSUSYeqWithM} for the mixed brane}\leq \frac{1}{4}\mathcal{K}_{000}(m_0^{\beta_m})^2.
\end{equation*}
Hence, we obtain, using again the bounds on $m_0^\beta$ from Table 
\ref{tab:m0bounds},
\beq\label{RHSDifSmall}
%& &\text{RHS of \eqref{betaSUSYeqWithM} for the mixed brane - RHS of \eqref{betaSUSYeqWithM} for the special brane} \nonumber\\
\Delta_{\text{RHS}}<  \frac{1}{4}\mathcal{K}_{000}(m_0^{\beta_m})^2 - \mathcal{K}_{000}(m_0^{\beta_s})^2
\leq  \frac{1}{4}\mathcal{K}_{000}(3)^2 - \mathcal{K}_{000}(1)^2 
=  \frac{5}{4}\mathcal{K}_{000}\,.
\eeq
By comparison of \eqref{LHSDifLarge} and \eqref{RHSDifSmall}, using the 
property $j_I\gg 1$ for all $I$, we see that we will always have 
\beq
\Delta_{\text{LHS}}> \Delta_{\text{RHS}}\,,
\eeq
which clearly violates  \eqref{LHSdif=RHSdif}. 

Finally we prove \eqref{eq:1/6}. Recall both $\tilde{t}_k^{\beta_s}$, 
$\tilde{t}_k^{\beta_m}$ are either integers or a third 
of an integer. Since $|\tilde{t}_k^{\beta_m}|>|\tilde{t}_k^{\beta_s}|$, 
their difference is at least a non-zero integer divided by their common 
denominator, which is $3$, i.e.~\eqref{eq:1/6} applies. 
\end{proof}

We make two useful definitions for the next lemma before stating it. Recall that the contribution 
of a mixed brane to the $0^{\text{th}}$-tadpole is negative, cf.~Table 
\ref{tab:TadpoleContributions}, and is 
given by the first term in \eqref{T_0WithM}: 
\begin{equation}\label{MixedContributiontoT0}
0 \geq R^{\beta_m} \equiv N^{\beta_m}\left[\frac{1}{2}x_{\{i,k\}}\tilde{t}_i^{\beta_m}\tilde{t}_k^{\beta_m}-M_{\{i,k\}}\left(b+\frac{m^{\beta_m}}{m^{\beta_m}_0}, b+\frac{m^{\beta_m}}{m^{\beta_m}_0}\right)\right](m_0^{\beta_m})^2\,.
\end{equation} 
Similarly, for a special brane that contributes positively to the 
$0^{\text{th}}$-tadpole, its contribution is also given by the first 
term in \eqref{T_0WithM}: 
\begin{equation}\label{SpecialContributiontoT0}
0 \leq S^{\beta_s} \equiv N^{\beta_s} \left[\frac{1}{2}x_{\{i,k\}}\tilde{t}_i^{\beta_s}\tilde{t}_k^{\beta_s}-M_{\{i,k\}}\left(b+\frac{m^{\beta_s}}{m^{\beta_s}_0}, b+\frac{m^{\beta_s}}{m^{\beta_s}_0}\right)\right](m_0^{\beta_s})^2\,. 
\end{equation}
Thus, the total positive contribution to the $0^{\text{th}}$-tadpole, 
and part of the total negative contributions to the $0^{\text{th}}$-
tadpole from mixed branes with $\tilde{t}_i^{\beta_m} \geq 
0$ are
\begin{equation}
\sum_{\beta_s}S^{\beta_s}\geq 0\,, \qquad\qquad \sum_{\beta_m, \, +} R^{\beta_m}\leq 0\,.
\end{equation}

\begin{lemma}\label{lemma:1/6Difference}
Given $h_1,h_2 \in \mathbb{Q}^+$, $0< h_1,h_2 \leq 1$, so that for some index $i$
\begin{equation}\label{h1h2}
h_1 \sum_{\beta_m, \, +}N^{\beta_m}\tilde{t}_i^{\beta_m}(m_0^{\beta_m})^2\ = h_2 \sum_{\beta_s }N^{\beta_s}
\Big|\tilde{t_i}^{\beta_s}\Big|(m_0^{\beta_s})^2 
\end{equation}
holds, then $h_1\sum_{\beta_m, \, +} |R^{\beta_m}| > h_2\sum_{\beta_s}S^{\beta_s}$. In particular,
\begin{equation}
h_1\sum_{\beta_m, \, +} |R^{\beta_m}| - h_2\sum_{\beta_s}S^{\beta_s} \geq \frac{1}{6} \text{\normalfont{min}}\Big(x_{\{i,k\}}\Big) h_1 \sum_{\beta_m, +}  N^{\beta_m} \tilde{t}_i^{\beta_m} (m_0^{\beta_m})^2\,,
\end{equation}
where the minimum and maximum is taken over all pairs $\{i,k\}$ of indices of K\"ahler cone generators in the subcone, but can also be taken 
	across the entire K\"ahler cone.
\end{lemma}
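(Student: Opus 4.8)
The plan is to reduce the asserted inequality between two \emph{sums} to a single elementary inequality holding for each \emph{pair} $(\beta_s,\beta_m)$ of a special brane $\beta_s$ contributing positively to the $0^{\text{th}}$-tadpole and a mixed brane $\beta_m$ with $\tilde{t}_i^{\beta_m}\geq 0$, and then to glue these pairwise bounds using the hypothesis \eqref{h1h2}. The device that makes this work is \emph{normalization}: I would introduce
\[
\hat{S}^{\beta_s}:=\frac{S^{\beta_s}}{N^{\beta_s}|\tilde{t}_i^{\beta_s}|(m_0^{\beta_s})^2}\geq 0\,,\qquad
\hat{R}^{\beta_m}:=\frac{|R^{\beta_m}|}{N^{\beta_m}\tilde{t}_i^{\beta_m}(m_0^{\beta_m})^2}\geq 0\,,
\]
which are well defined because $\tilde{t}_i^{\beta_s}<0$ for all $i$ by Corollary \ref{cor:T0+then(C00i+ti)-}, and because mixed branes with $\tilde{t}_i^{\beta_m}=0$ can be discarded at the outset: by $M_{\{i,k\}}\succeq 0$ and Proposition \ref{prop:T0+specialbrane} they have $R^{\beta_m}\leq 0$, so they add a non-negative quantity to $\sum_{\beta_m,+}|R^{\beta_m}|$ while contributing nothing to \eqref{h1h2} or to the right-hand side of the claim. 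By \eqref{h1h2} the two weighted sums $h_1\sum_{\beta_m,+}N^{\beta_m}\tilde{t}_i^{\beta_m}(m_0^{\beta_m})^2$ and $h_2\sum_{\beta_s}N^{\beta_s}|\tilde{t}_i^{\beta_s}|(m_0^{\beta_s})^2$ equal a common value $P\geq0$.

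\textbf{Pairwise inequality.} I would prove that for every such pair, $\hat{R}^{\beta_m}-\hat{S}^{\beta_s}\geq \tfrac16\,\text{min}(x_{\{i,k\}})$. Apply Lemma \ref{lemma:1/6} to the pair: it produces an index $k$ with $\tilde{t}_k^{\beta_m}<0$ and $|\tilde{t}_k^{\beta_m}|-|\tilde{t}_k^{\beta_s}|\geq \tfrac13$, cf.~\eqref{eq:1/6}. Now evaluate $S^{\beta_s}$ and $R^{\beta_m}$ via \eqref{T_0WithM}, \eqref{SpecialContributiontoT0}, \eqref{MixedContributiontoT0} using exactly this pair $\{i,k\}$ and drop the (non-negative) $M_{\{i,k\}}$-terms. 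For the special brane both $\tilde{t}_i^{\beta_s},\tilde{t}_k^{\beta_s}<0$, so $\tilde{t}_i^{\beta_s}\tilde{t}_k^{\beta_s}=|\tilde{t}_i^{\beta_s}||\tilde{t}_k^{\beta_s}|$ and hence $\hat{S}^{\beta_s}\leq \tfrac12 x_{\{i,k\}}|\tilde{t}_k^{\beta_s}|$. For the mixed brane $|R^{\beta_m}|=-R^{\beta_m}$, and since $\tilde{t}_i^{\beta_m}\geq 0$ while $\tilde{t}_k^{\beta_m}<0$ we have $-\tilde{t}_i^{\beta_m}\tilde{t}_k^{\beta_m}=\tilde{t}_i^{\beta_m}|\tilde{t}_k^{\beta_m}|\geq 0$, giving $\hat{R}^{\beta_m}\geq \tfrac12 x_{\{i,k\}}|\tilde{t}_k^{\beta_m}|$. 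Subtracting and invoking \eqref{eq:1/6},
\[
\hat{R}^{\beta_m}-\hat{S}^{\beta_s}\ \geq\ \tfrac12 x_{\{i,k\}}\big(|\tilde{t}_k^{\beta_m}|-|\tilde{t}_k^{\beta_s}|\big)\ \geq\ \tfrac16 x_{\{i,k\}}\ \geq\ \tfrac16\,\text{min}(x_{\{i,k\}})\,.
\]

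\textbf{Assembling the sums.} Taking the infimum of $\hat{R}^{\beta_m}$ over mixed branes and the supremum of $\hat{S}^{\beta_s}$ over special branes, the pairwise bound becomes $\inf_{\beta_m}\hat{R}^{\beta_m}-\sup_{\beta_s}\hat{S}^{\beta_s}\geq \tfrac16\,\text{min}(x_{\{i,k\}})$. Then
\[
h_1\!\!\sum_{\beta_m,+}\!\!|R^{\beta_m}|=\!\!\sum_{\beta_m,+}\!\!\hat{R}^{\beta_m}\,h_1 N^{\beta_m}\tilde{t}_i^{\beta_m}(m_0^{\beta_m})^2\ \geq\ \big(\inf_{\beta_m}\hat{R}^{\beta_m}\big)P\,,\qquad
h_2\!\sum_{\beta_s}\!S^{\beta_s}=\!\sum_{\beta_s}\!\hat{S}^{\beta_s}\,h_2 N^{\beta_s}|\tilde{t}_i^{\beta_s}|(m_0^{\beta_s})^2\ \leq\ \big(\sup_{\beta_s}\hat{S}^{\beta_s}\big)P\,,
\]
so $h_1\sum_{\beta_m,+}|R^{\beta_m}|-h_2\sum_{\beta_s}S^{\beta_s}\geq (\inf\hat{R}-\sup\hat{S})P\geq \tfrac16\,\text{min}(x_{\{i,k\}})\,P$, which is precisely the claimed bound once one recalls $P=h_1\sum_{\beta_m,+}N^{\beta_m}\tilde{t}_i^{\beta_m}(m_0^{\beta_m})^2$. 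Since $x_{\{i,k\}}\in\mathbb{Q}^+$, the strict inequality $h_1\sum|R^{\beta_m}|>h_2\sum S^{\beta_s}$ follows whenever $P>0$; and the degenerate case $P=0$ forces both $\sum_{\beta_s}S^{\beta_s}=0$ (via \eqref{h1h2}, $\tilde{t}_i^{\beta_s}\neq0$) and the assertion to reduce to $h_1\sum|R^{\beta_m}|\geq 0$.

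\textbf{Main obstacle.} The difficulty — and the reason normalization is indispensable — is that the index $k$ delivered by Lemma \ref{lemma:1/6} depends on the particular pair $(\beta_s,\beta_m)$, so one cannot rewrite \eqref{T_0WithM} for all branes simultaneously with one fixed pair $\{i,k\}$ and compare term by term. Normalizing each $S^{\beta_s}$ and $|R^{\beta_m}|$ by exactly the weights appearing in \eqref{h1h2} decouples the problem: each pairwise comparison may use its own $k$, while the equal-total-weight constraint $P$ is what ties the two sums together. Everything else is sign bookkeeping (ensuring the $M$-terms and the products $\tilde{t}_i\tilde{t}_k$ point the right way so they can be dropped) plus the harmless removal of branes with $\tilde{t}_i^{\beta_m}=0$.
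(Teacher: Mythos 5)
Your proposal is correct and follows essentially the same route as the paper's proof: both hinge on the pairwise comparison of a special and a mixed brane using the pair-dependent index $k$ supplied by Lemma \ref{lemma:1/6}, dropping the positive semi-definite $M_{\{i,k\}}$-terms in \eqref{SpecialContributiontoT0}, \eqref{MixedContributiontoT0}, and gluing the pairwise $\tfrac16 x_{\{i,k\}}$ gaps together via the equal-weight hypothesis \eqref{h1h2}. The only difference is bookkeeping: the paper distributes the pairwise bounds with partitions of unity $f^{\beta_s}$, $g^{\beta_s,\beta_m}$, whereas you normalize $S^{\beta_s}$ and $|R^{\beta_m}|$ by the weights appearing in \eqref{h1h2} and pass to an inf/sup over the (finite) brane sets --- an equivalent device yielding the same bound.
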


\begin{proof}
We introduce a partition of unity
%\footnote{We recall that partitions of unity obey
%$\sum_{\beta_s=1}^n f^{\beta_s}=1$, $\sum_{\beta_m,+} g^{\beta_s,\beta_m}=1$, $f^{\beta_s}, g^{\beta_s,\beta_m} \in \mathbb{Q}$ and $ 0\leq f^{\beta_s}, g^{\beta_s,\beta_m} \leq 1$  for all $\beta_s$, $\beta_m$.} 
$\{f^{\beta_s}\}_{\beta_s}$, and, for every index $\beta_s$, a partition of unity $\{g^{\beta_s,\beta_m}\}_{\beta_m,\,+}$,\footnote{We emphasize that the index $\beta_m$  on $g^{\beta_s,\beta_m}$   is only limited to mixed branes with $\tilde{t}_i^{\beta_m}\geq 0$.} i.e.
\beq\label{eq:fPartition}
\sum_{\beta_s}f^{\beta_s}=1\,, \qquad \sum_{\beta_m,+} g^{\beta_s,\beta_m}=1\,,  \qquad f^{\beta_s},\,\, g^{\beta_s,\beta_m} \in \mathbb{Q}^+\,, \qquad 0 < f^{\beta_s}, g^{\beta_s,\beta_m} \leq 1\,,
\eeq
defined by the property
\beq\label{eq:fPartition1}
f^{\beta_s}h_1N^{\beta_m}\tilde{t}_i^{\beta_m}(m_0^{\beta_m})^2
 = g^{\beta_s,\beta_m}h_2N^{\beta_s} \left\lvert \tilde{t}_i^{\beta_s} \right\lvert (m_0^{\beta_s})^2
\,. 
\eeq

Inserting unity as  $1=\sum_{\beta_s} f^{\beta_s}=\sum_{\beta_m,+}, g^{\beta_s,\beta_m}$,  we obtain the obvious identity 
\begin{align}\label{eq:1/6DifferencePart1}
& h_1\sum_{\beta_m, +}|R^{\beta_m}| - h_2 \sum_{\beta_s}S^{\beta_s} =  \Big(\sum_{\beta_s} f^{\beta_s} \Big) h_1 \sum_{\beta_m, +}|R^{\beta_m}| - h_2 \sum_{\beta_s}\Big(\sum_{\beta_m,+} g^{\beta_s,\beta_m} \Big) S^{\beta_s}\nonumber\\
= & \sum_{\beta_s} \sum_{\beta_m, +} \Big(f^{\beta_s} h_1 |R^{\beta_m}| - g^{\beta_s,\beta_m} h_2 S^{\beta_s} \Big)\,.
\end{align}
For each summand in the sum of the last line of \eqref{eq:1/6DifferencePart1}, we have 
\begin{align}\label{eq:ChoiceOfik}
f^{\beta_s} h_1 |R^{\beta_m}\!| \!-\! g^{\beta_s,\beta_m} h_2 S^{\beta_s} 
\!\!=\! & f^{\beta_s} h_1 N^{\beta_m} \Big\lvert \tfrac{1}{2}x_{\{i,k\}} \underbrace{\tilde{t}_i^{\beta_m}}_{\geq 0} \underbrace{\tilde{t}_k^{\beta_m}}_{<0}-M_{\{i,k\}}\Big(b+\tfrac{m^{\beta_m}}{m^{\beta_m}_0}, b+\tfrac{m^{\beta_m}}{m^{\beta_m}_0}\Big) \Big\lvert (m_0^{\beta_m})^2 \nonumber\\
- g^{\beta_s,\beta_m}&  h_2 N^{\beta_s} \Big[\frac{1}{2}x_{\{i,k\}} \underbrace{\tilde{t}_i^{\beta_s}}_{<0} \underbrace{\tilde{t}_k^{\beta_s}}_{<0}-M_{\{i,k\}}\Big(b+\tfrac{m^{\beta_s}}{m^{\beta_s}_0}, b+\tfrac{m^{\beta_s}}{m^{\beta_s}_0}\Big)\Big](m_0^{\beta_s})^2.
\end{align}
Here, the pair $\{i,k\}$ is chosen so that the index $i$ is the one for which \eqref{h1h2} holds, and the index $k$ is chosen such that the inequality $|\tilde{t}_k^{\beta_m}|-|\tilde{t}_k^{\beta_s}|\geq \frac13$ of Lemma \ref{lemma:1/6} holds for the pair $(\beta_s,\beta_m)$ of special and mixed brane in \eqref{eq:ChoiceOfik}. We emphasize that the choice of this index $k$ depends on the brane pair $(\beta_s,\beta_m)$ and thus might be different for each summand in \eqref{eq:1/6DifferencePart1}. Next, we drop the positive semi-definite $M$-matrix terms in \eqref{eq:ChoiceOfik} to get
\begin{align}\label{eq:EachSummandDone}
f^{\beta_s} h_1 |R^{\beta_m}|\! - \! g^{\beta_s,\beta_m} h_2 S^{\beta_s} 
\!\!\geq & \frac{1}{2}x_{\{i,k\}}\!\! \Big[\! f^{\beta_s} h_1 N^{\beta_m} \tilde{t}_i^{\beta_m} (m_0^{\beta_m})^2 \Big\lvert \tilde{t}_k^{\beta_m}\Big\lvert 
\!-\!  g^{\beta_s,\beta_m} h_2 N^{\beta_s} \Big\lvert \tilde{t}_i^{\beta_s} \Big\lvert (m_0^{\beta_s})^2  \Big\lvert \tilde{t}_k^{\beta_s}\Big\lvert \Big] \nonumber\\
= & \frac{1}{2}x_{\{i,k\}} \left( f^{\beta_s} h_1 N^{\beta_m} \tilde{t}_i^{\beta_m} (m_0^{\beta_m})^2 \right) \underbrace{\left(\Big\lvert \tilde{t}_k^{\beta_m}\Big\lvert -  \Big\lvert \tilde{t}_k^{\beta_s}\Big\lvert \right)}_{\geq 1/3} \nonumber\\
\geq & \frac{1}{6}\text{min}(x_{\{i,k\}}) f^{\beta_s} h_1 N^{\beta_m} \tilde{t}_i^{\beta_m} (m_0^{\beta_m})^2\,, 
\end{align}
where we have used that the coefficients of $|\tilde{t}_k^{\beta_m}|$,
$|\tilde{t}_k^{\beta_s}|$ in the first line are equal by  \eqref{eq:fPartition1}. In addition,
we have removed the aforementioned implicit dependence of the index $k$ on $(\beta_s,\beta_m)$ by taking the minimum over all $\{i,k\}$.

Thus, plugging \eqref{eq:EachSummandDone} into \eqref{eq:1/6DifferencePart1} we obtain 
\beq
h_1\sum_{\beta_m, +}|R^{\beta_m}| - h_2\sum_{\beta_s} S^{\beta_s} 
\geq \frac{1}{6} \text{min}\Big(x_{\{i,k\}}\Big) h_1 \sum_{\beta_m, +}  N^{\beta_m} \tilde{t}_i^{\beta_m} (m_0^{\beta_m})^2\,,
\eeq
where we performed the sum over $\beta_s$ and used  $\sum_{\beta_s} 
f^{\beta_s}=1$, cf.~\eqref{eq:fPartition}.
\end{proof}

Now we are finally ready to show that every $t_i^{\beta}$ has an upper 
bound.
\begin{theorem}\label{thm:t_iBound}
	For all $i$ and $\beta$, $t^\beta_i$ are bounded from above as
	\beq \label{eq:tbetabound}
		t_i^{\beta}\leq \frac{6T_0+3 T_i\cdot \text{\normalfont max}
		(x_{\{i,k\}})\cdot \text{\normalfont{max}}(T_l)}{
		\text{\normalfont min}(x_{\{i,k\}})}\,,
	\eeq
	where the minimum and maximum is taken over all pairs $\{i,k\}$ of indices of K\"ahler cone generators in the subcone, but can also be taken 
	across the entire K\"ahler cone.
\end{theorem}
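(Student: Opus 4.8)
The plan is to bound $t_i^{\beta}$ above by combining the information from Lemmas \ref{lemma:ShiftedTadpole}, \ref{lemma:1/6}, and \ref{lemma:1/6Difference} with the $0^{\text{th}}$-tadpole in the form \eqref{T_0WithM}. Since $t_i^{\beta}$ of special branes is bounded above by $0$ by definition, and the $t_i^{\beta}$ of special branes with $0\geq C(\tfrac{b}{2}+\tfrac{j}{j_0},b+\tfrac{m^{\beta}}{m^{\beta}_0})$ together with $\gamma$-branes are handled separately (or contribute no positive amount), the only branes whose $t_i^{\beta}$ can be large and positive are the mixed branes $\beta_m$ with $\tilde t_i^{\beta_m}\geq 0$. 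So it suffices to bound $\sum_{\beta_m,+}N^{\beta_m}\tilde t_i^{\beta_m}(m_0^{\beta_m})^2$ from above for each fixed $i$; then, since each individual $N^{\beta_m}(m_0^{\beta_m})^2\geq 1$, an upper bound on the sum gives an upper bound on each $\tilde t_i^{\beta_m}$ and hence on each $t_i^{\beta_m}=\tilde t_i^{\beta_m}-\mathcal{K}_{00i}$, which is what \eqref{eq:tbetabound} asserts.

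The key mechanism is the following. First I would fix an index $i$ and apply Lemma \ref{lemma:ShiftedTadpole}: $\sum_{\beta_s}N^{\beta_s}|\tilde t_i^{\beta_s}|(m_0^{\beta_s})^2 < \sum_{\beta_m,+}N^{\beta_m}\tilde t_i^{\beta_m}(m_0^{\beta_m})^2 + T_i$. I then introduce the rescalings $h_1,h_2\in\mathbb{Q}^+$ with $0<h_1,h_2\leq 1$ chosen so that the equality \eqref{h1h2} of Lemma \ref{lemma:1/6Difference} holds for this $i$; the point of Lemma \ref{lemma:ShiftedTadpole} is precisely that such $h_1,h_2$ exist with the stated bounds (the two sides are within $T_i$ of each other, so one can dial down the larger side by at most that proportion). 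Lemma \ref{lemma:1/6Difference} then yields
\begin{equation}
h_1\sum_{\beta_m,+}|R^{\beta_m}| - h_2\sum_{\beta_s}S^{\beta_s} \geq \frac{1}{6}\,\text{min}(x_{\{i,k\}})\,h_1\sum_{\beta_m,+}N^{\beta_m}\tilde t_i^{\beta_m}(m_0^{\beta_m})^2\,.
\end{equation}
On the other hand, $\sum_{\beta_s}S^{\beta_s}$ is the total \emph{positive} contribution to the $0^{\text{th}}$-tadpole and $\sum_{\beta_m,+}R^{\beta_m}$ is part of the negative contribution, so by \eqref{T_0WithM} (setting $n_0^{\text{D5}}=0$ and discarding the remaining nonpositive contributions) one gets $\sum_{\beta_s}S^{\beta_s}-\sum_{\beta_m,+}|R^{\beta_m}|\leq -T_0$ up to the nonnegative leftover pieces; rearranged, $\sum_{\beta_m,+}|R^{\beta_m}| - \sum_{\beta_s}S^{\beta_s}\leq$ (something controlled by $T_0$ and the dropped terms). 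Since $h_1,h_2\leq 1$, the left side of the Lemma \ref{lemma:1/6Difference} inequality is bounded above by $\sum_{\beta_m,+}|R^{\beta_m}|$ plus a correction controlled by $T_0$; chasing the constants through $h_1\geq$ (explicit lower bound from Lemma \ref{lemma:ShiftedTadpole} in terms of $T_i$ and $\text{max}(T_l)$, using $|m_0^{\beta_m}|\leq\text{max}(T_l)$ from Table \ref{tab:m0bounds}), and using $\text{min}(x_{\{i,k\}})>0$, produces the explicit bound \eqref{eq:tbetabound} on $\sum_{\beta_m,+}N^{\beta_m}\tilde t_i^{\beta_m}(m_0^{\beta_m})^2$ and hence on each $t_i^{\beta}$.

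The main obstacle, and the step requiring the most care, is the bookkeeping that turns the chain ``Lemma \ref{lemma:ShiftedTadpole} $\to$ choice of $h_1,h_2$ $\to$ Lemma \ref{lemma:1/6Difference} $\to$ $0^{\text{th}}$-tadpole bound'' into the precise numerical coefficients $6T_0$, $3T_i\,\text{max}(x_{\{i,k\}})\,\text{max}(T_l)$, and the division by $\text{min}(x_{\{i,k\}})$. In particular one must verify that $h_1$ admits a lower bound of the form $\bigl(1+T_i/(\text{stuff})\bigr)^{-1}$ with ``stuff'' expressible via $\text{max}(T_l)$ and $\text{min}(x_{\{i,k\}})$ — this is where the bound $|m_0^{\beta}|\leq\text{max}(T_i)$ on $\beta$-branes and the integrality $N^{\beta}\tilde t_i^{\beta}(m_0^{\beta})\in\mathbb{Z}$ enter — and that the leftover nonpositive contributions to \eqref{T_0WithM} (from $\gamma$-branes, from special branes with $0\geq C(\cdots)$, and from mixed branes with $\tilde t_i^{\beta_m}<0$) really can be discarded in the right direction. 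Everything else is a direct assembly of previously established inequalities; no new geometric input beyond properties (1)--(5) and the positive semi-definiteness of the $M_{\{i,k\}}$ is needed.
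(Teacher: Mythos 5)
Your setup matches the paper's: fix $i$, reduce the problem to bounding $\sum_{\beta_m,+}N^{\beta_m}\tilde t_i^{\beta_m}(m_0^{\beta_m})^2$, feed Lemma \ref{lemma:ShiftedTadpole} into the choice of $h_1,h_2$, apply Lemma \ref{lemma:1/6Difference}, and play the result against the $0^{\text{th}}$-tadpole bound $\sum_{\beta_m,+}|R^{\beta_m}|-\sum_{\beta_s}S^{\beta_s}\leq T_0$ (note your intermediate line ``$\sum_{\beta_s}S^{\beta_s}-\sum_{\beta_m,+}|R^{\beta_m}|\leq -T_0$'' has the inequality reversed, though your rearranged version is the correct one). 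However, there is a genuine gap at the step that actually produces the second term $3\,T_i\,\mathrm{max}(x_{\{i,k\}})\,\mathrm{max}(T_l)$ in the numerator. The delicate case is when the special-brane sum exceeds the mixed-brane sum, i.e.\ $h_1=1$, $h_2<1$ with $(1-h_2)\sum_{\beta_s}N^{\beta_s}|\tilde t_i^{\beta_s}|(m_0^{\beta_s})^2<T_i$. There the combination $\sum_{\beta_m,+}|R^{\beta_m}|-h_2\sum_{\beta_s}S^{\beta_s}$ controlled by Lemma \ref{lemma:1/6Difference} differs from the tadpole quantity by the leftover $(1-h_2)\sum_{\beta_s}S^{\beta_s}$, and this leftover is \emph{not} controlled by $T_0$ or by the dropped nonpositive terms. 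Bounding it requires an upper bound on $|\tilde t_k^{\beta_s}|$ for some second index $k$, which the paper obtains from a separate argument (\eqref{eq:Digression}): summing the $i^{\text{th}}$-tadpole lower bounds against $\tfrac{b_i}{2}+\tfrac{j_i}{j_0}$ and comparing coefficients yields an index $k$ with $T_k\geq\sum_{\beta_s}N^{\beta_s}|\tilde t_k^{\beta_s}|(m_0^{\beta_s})^2$ (\eqref{eq:IndexkForExcessSpecialBrane}); combined with $(1-h_2)\sum_{\beta_s}N^{\beta_s}|\tilde t_i^{\beta_s}|(m_0^{\beta_s})^2<T_i$ this gives $(1-h_2)\sum_{\beta_s}S^{\beta_s}<\tfrac12\mathrm{max}(x_{\{i,k\}})\,T_i\,T_k\leq\tfrac12\mathrm{max}(x_{\{i,k\}})\,T_i\,\mathrm{max}(T_l)$, whence the final bound.

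Your proposed substitute for this step --- a lower bound on $h_1$ of the form $\bigl(1+T_i/(\text{stuff})\bigr)^{-1}$ obtained from $|m_0^\beta|\leq\mathrm{max}(T_i)$ and integrality --- is not the right mechanism and does not hold in general: in the case where the mixed sum dominates, $h_1$ can be arbitrarily small (there may be few or no positively-contributing special branes), and no lower bound on it is needed there, since the paper handles that case by splitting $\sum_{\beta_m,+}|R^{\beta_m}|=(1-h_1)\sum_{\beta_m,+}|R^{\beta_m}|+h_1\sum_{\beta_m,+}|R^{\beta_m}|$ and bounding the $(1-h_1)$ piece directly via $|R^{\beta_m}|\geq\tfrac16\mathrm{min}(x_{\{i,k\}})N^{\beta_m}\tilde t_i^{\beta_m}(m_0^{\beta_m})^2$ (using $|\tilde t_k^{\beta_m}|\geq\tfrac13$), giving the $6T_0/\mathrm{min}(x_{\{i,k\}})$ part; in the opposite case $h_1=1$, so a lower bound on $h_1$ is vacuous and does nothing to tame $(1-h_2)\sum_{\beta_s}S^{\beta_s}$. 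Without the digression argument bounding the special branes' $\tilde t_k$-weighted sum, the constant-chasing you describe cannot reproduce \eqref{eq:tbetabound}, so as written the proof is incomplete at its most critical point.
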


\begin{proof}
We derive the above bound for $t_i^{\beta}$ for an arbitrary index $i$. 
By Lemma \ref{lemma:ShiftedTadpole}, we either have $\sum_{\beta_s }N^{\beta_s}
\Big|\tilde{t_i}^{\beta_s}\Big|(m_0^{\beta_s})^2 \leq \sum_{\beta_m, \, +}N^{\beta_m}\tilde{t}_i^{\beta_m}(m_0^{\beta_m})^2$, or $\sum_{\beta_m, \, +}N^{\beta_m}\tilde{t}_i^{\beta_m}(m_0^{\beta_m})^2 < \sum_{\beta_s }N^{\beta_s}
\Big|\tilde{t_i}^{\beta_s}\Big|(m_0^{\beta_s})^2 < \sum_{\beta_m, \, +}N^{\beta_m}\tilde{t}_i^{\beta_m}(m_0^{\beta_m})^2 + T_i$. We consider each case separately:

\noindent \textbf{Case 1:} $\sum_{\beta_s }N^{\beta_s}
\Big|\tilde{t_i}^{\beta_s}\Big|(m_0^{\beta_s})^2 \leq \sum_{\beta_m, \, +}N^{\beta_m}\tilde{t}_i^{\beta_m}(m_0^{\beta_m})^2$.

In other words, we have a relation as in \eqref{h1h2} with $h_1\leq 1$, 
$h_2=1$,
\begin{equation}\label{eq:h1Definition}
h_1 \sum_{\beta_m, \, +}N^{\beta_m}\tilde{t}_i^{\beta_m}(m_0^{\beta_m})^2\ = \sum_{\beta_s }N^{\beta_s}
\Big|\tilde{t_i}^{\beta_s}\Big|(m_0^{\beta_s})^2\,.
\end{equation}
Starting with the first inequality in \eqref{eq:T_ILowerbounds} and 
employing \eqref{MixedContributiontoT0},
\eqref{SpecialContributiontoT0}, we obtain
\begin{align}\label{1/6DifferenceSpecialCase}
T_0 \geq & \sum_{\beta_m, +}|R^{\beta_m}| - \sum_{\beta_s} S^{\beta_s}
=(1-h_1)\sum_{\beta_m, +}|R^{\beta_m}|+h_1\sum_{\beta_m, +}|R^{\beta_m}| - \sum_{\beta_s} S^{\beta_s} \nonumber\\
\geq & (1-h_1)\sum_{\beta_m, +}|R^{\beta_m}|+\frac{1}{6} \text{min}(x_{\{i,k\}}) h_1 \sum_{\beta_m, +}  N^{\beta_m} \tilde{t}_i^{\beta_m} (m_0^{\beta_m})^2 \nonumber\\
\geq&(1-h_1)\sum_{\beta_m,\,+}N^{\beta_m}\frac{1}{2}x_{\{i,k\}} \tilde{t}_i^{\beta_m} \underbrace{|\tilde{t}_k^{\beta_m}|}_{\geq 1/3}(m_0^{\beta_m})^2+\frac{1}{6} \text{min}(x_{\{i,k\}}) h_1 \sum_{\beta_m, +}  N^{\beta_m} \tilde{t}_i^{\beta_m} (m_0^{\beta_m})^2\nonumber\\
\geq&(1-h_1)\frac{1}{6}\text{min}(x_{\{i,k\}})\sum_{\beta_m,\,+}N^{\beta_m} \tilde{t}_i^{\beta_m} (m_0^{\beta_m})^2+h_1 \frac{1}{6} \text{min}(x_{\{i,k\}}) \sum_{\beta_m, +}  N^{\beta_m} \tilde{t}_i^{\beta_m} (m_0^{\beta_m})^2 \nonumber\\
=&\frac{1}{6}\text{min}(x_{\{i,k\}})\sum_{\beta_m,\,+}N^{\beta_m} \tilde{t}_i^{\beta_m} (m_0^{\beta_m})^2\,,
\end{align}
where in the first inequality we only kept negative contributions to the 
$0^{\text{th}}$-tadpole from mixed branes with 
$\tilde{t}_i^{\beta_m}\geq 0$ (see Table \ref{tab:TadpoleContributions}). 
In the second line we used Lemma \ref{lemma:1/6Difference}. In the 
third line we plugged in the definition \eqref{MixedContributiontoT0} 
of $R^{\beta_m}$, where we picked our choice of the pair 
$\{i,k\}$ so that $i$ is the same 
index $i$ that we want to derive a bound for $t_i^{\beta}$, and  
$k$ such that $|\tilde{t}_k^{\beta_m}|\geq \frac{1}{3}$ \footnote{Indeed, 
since a non-zero $\tilde{t}_k^{\beta_m}$ is at least a third of an 
integer, we only have to argue that a $k$ with a non-zero $\tilde{t}_k^{\beta_m}$ exists. But this is true since otherwise  $C\left(\frac{b}{2}+\frac{j}{j_0}, b+\frac{m^{\beta_m}}{m^{\beta_m}_0}\right)=\frac12\sum_{i=1}^p\tilde{t}_i^{\beta_m}\left(\frac{b_i}{2}+\frac{j_i}{j_0}\right)\geq 0$, which by Lemma \ref{lem:largej} 
implies that this brane would be a special, not a mixed brane.}, and dropped the M-matrix term. The remaining two lines of \eqref{1/6DifferenceSpecialCase} are just algebra. Thus, we have the following bound on $t_i$:
\begin{align}\label{eq:SpecialCaseBound}
T_0 \geq & \frac{1}{6} \text{min}(x_{\{i,k\}}) \sum_{\beta_m,+}N^{\beta_m} \tilde{t}_i^{\beta_m} (m_0^{\beta_m})^2, 
 \,\,\,   \Longrightarrow \,\,\,  t_i^{\beta} < \sum_{\beta_m,+}N^{\beta_m} \tilde{t}_i^{\beta_m} (m_0^{\beta_m})^2 \leq \frac{6T_0}{\text{min}(x_{\{i,k\}})}\,.
\end{align}

\noindent \textbf{Case 2:} $\sum_{\beta_m, \, +}N^{\beta_m}\tilde{t}_i^{\beta_m}(m_0^{\beta_m})^2 < \sum_{\beta_s }N^{\beta_s}
\Big|\tilde{t_i}^{\beta_s}\Big|(m_0^{\beta_s})^2 < \sum_{\beta_m, \, +}N^{\beta_m}\tilde{t}_i^{\beta_m}(m_0^{\beta_m})^2 + T_i$. 

In this case we are in a special case of \eqref{h1h2} with $h_1=1$, $h_2\leq 1$ and
\begin{equation}\label{eq:ExcessSpecial<Ti}
\sum_{\beta_m, \, +}N^{\beta_m}\tilde{t}_i^{\beta_m}(m_0^{\beta_m})^2\ = h_2 \sum_{\beta_s }N^{\beta_s}
\Big|\tilde{t_i}^{\beta_s}\Big|(m_0^{\beta_s})^2 \qquad (1-h_2) \sum_{\beta_s} N^{\beta_s} \Big\lvert \tilde{t}_i^{\beta_s}\Big\lvert (m_0^{\beta_s})^2 < T_i \,.
\end{equation}
Analogous to \eqref{1/6DifferenceSpecialCase} of Case 1, we obtain
\begin{align}\label{1/6DifferenceGeneralCase}
T_0 \geq & \sum_{\beta_m, +}|R^{\beta_m}| - \sum_{\beta_s} S^{\beta_s}
=\sum_{\beta_m, +}|R^{\beta_m}|- h_2\sum_{\beta_s} S^{\beta_s} - (1-h_2)\sum_{\beta_s} S^{\beta_s} \nonumber\\
\geq & \frac{1}{6} \text{min}\Big(x_{\{i,k\}}\Big) \sum_{\beta_m, +}  N^{\beta_m} \tilde{t}_i^{\beta_m} (m_0^{\beta_m})^2 - (1-h_2)\sum_{\beta_s} S^{\beta_s}\,.
\end{align}

We digress to consider the following inequality:
\begin{align} \label{eq:Digression}
& \sum_{i}^p \sum_{\beta_s} N^{\beta_s} \tilde{t_i}^{\beta_s} (m_0^{\beta_s})^2\Big( \frac{b_i}{2}+\frac{j_i}{j_0} \Big) = 
\sum_{\beta_s} \sum_{i}^p N^{\beta_s} \Big(t_i^{\beta_s}+\mathcal{K}_{00i}\Big) (m_0^{\beta_s})^2\Big( \frac{b_i}{2}+\frac{j_i}{j_0} \Big) \\
\geq & \sum_{\beta_s} \sum_{i}^p N^{\beta_s} t_i^{\beta_s} (m_0^{\beta_s})^2\Big( \frac{b_i}{2}+\frac{j_i}{j_0} \Big) 
\geq \sum_{\beta} \sum_{i}^p N^{\beta} t_i^{\beta} (m_0^{\beta})^2\Big( \frac{b_i}{2}+\frac{j_i}{j_0} \Big) 
\geq \sum_{i}^p (-T_i)\Big( \frac{b_i}{2}+\frac{j_i}{j_0} \Big)\, ,\nonumber
\end{align}
where in the first equality we used \eqref{eq:tildet_i}, in the second 
inequality, we extended the sum across $\beta_s$ to the sum across all 
$\beta$ because each summand is negative by 
\eqref{eq:StarWithLowerBound}, and in the last inequality we used 
\eqref{eq:StarWithLowerBoundPart1}. Comparing coefficients of 
$\frac{b_i}{2}+\frac{j_i}{j_0}$ between the first and last term in 
\eqref{eq:Digression}, we note that there has to exist an index $k$ such that
\begin{equation}\label{eq:IndexkForExcessSpecialBrane}
T_k \geq \sum_{\beta_s} N^{\beta_s} \Big\lvert \tilde{t_k}^{\beta_s} \Big\lvert (m_0^{\beta_s})^2 \geq \Big\lvert \tilde{t_k}^{\beta_s} \Big\lvert\, .
\end{equation} 

If the index $i$ for which we want to bound $t_i^{\beta}$ coincides with such an index $k$, we have an obvious bound on $t_i^{\beta}$
\begin{equation}
T_i \geq \sum_{\beta_s} N^{\beta_s} \Big\lvert \tilde{t_i}^{\beta_s} \Big\lvert (m_0^{\beta_s})^2 > \sum_{\beta_m,+}N^{\beta_m}\tilde{t}_i^{\beta_m}(m_0^{\beta_m})^2 > t_i^{\beta}\,,
\end{equation}
where in the first inequality, we used \eqref{eq:IndexkForExcessSpecialBrane} with $k=i$, and in the second inequality we used the assumption that $\sum_{\beta_s}N^{\beta_s} \left\lvert \tilde{t}_i^{\beta_s} \right\lvert (m_0^{\beta_s})^2 > \sum_{\beta_m,+}N^{\beta_m}\tilde{t}_i^{\beta_m}(m_0^{\beta_m})^2$. 

Thus we only need to consider $i\neq k$ with $k$ satisfying \eqref{eq:IndexkForExcessSpecialBrane}. Then, the last term on the second line of \eqref{1/6DifferenceGeneralCase} becomes
\begin{align}\label{Bound2Part1}
& (1-h_2)\sum_{\beta_s}S^{\beta_s} 
\leq (1-h_2) \sum_{\beta_s} N^{\beta_s}\frac{1}{2}x_{\{i,k\}} \Big\lvert \tilde{t}_i^{\beta_s}\Big\lvert (m_0^{\beta_s})^2 \Big\lvert \tilde{t}_k^{\beta_s}\Big\lvert \nonumber\\
\leq & \frac{1}{2}x_{\{i,k\}} \underbrace{(1-h_2) \sum_{\beta_s} N^{\beta_s} \Big\lvert \tilde{t}_i^{\beta_s}\Big\lvert (m_0^{\beta_s})^2}_{< T_i} \cdot {T_k} 
< \frac{1}{2}\text{max}(x_{\{i,k\}})\cdot T_i \cdot T_k
\end{align}
where in the first inequality we plugged in the definition 
\eqref{SpecialContributiontoT0} of $S^{\beta_s}$ and picked the pair 
$\{i,k\}$ such that $i$  is the index for which we want to show 
boundedness for $t_i^{\beta}$, $k$ is the index such that 
\eqref{eq:IndexkForExcessSpecialBrane} is satisfied and dropped the 
negative $M$-matrix term. In the second inequality we used \eqref{eq:IndexkForExcessSpecialBrane}   
for $\tilde{t}_k^{\beta_s}$, as well as the second inequality in \eqref{eq:ExcessSpecial<Ti}. Combining \eqref{1/6DifferenceGeneralCase} and \eqref{Bound2Part1},  we obtain
\beq\label{tiFinalBoundPart1}
T_0 >\sum_{\beta_m, +}|R^{\beta_m}| - \sum_{\beta_s}S^{\beta_s} 
\geq \frac{1}{6} \text{min}(x_{\{i,k\}}) \sum_{\beta_m,+}N^{\beta_m} \tilde{t}_i^{\beta_m} (m_0^{\beta_m})^2 - \frac{1}{2}\text{max}\big(x_{\{i,k\}}\big)T_i \cdot \underbrace{T_k}_{\leq \text{max}(T_l)}
\eeq
and arrive at the final bound
\beq\label{tiFinalBoundPart2}
t_i^{\beta} <  \sum_{\beta_m,+}N^{\beta_m} \tilde{t}_i^{\beta_m}(m_0^{\beta_m})^2 < \frac{6T_0+3 T_i\cdot \text{max}(x_{\{i,k\}}) \cdot\text{max}(T_l)}{\text{min}(x_{\{i,k\}})}.
\eeq
\end{proof}

%%%%%%%%%%%%%%%%%%%%%%%%%%%%%%%%%%%%%%%%%%%%%%%%%%%%%%%%%%%%%%%%%%%%%%%%%%%%%%%%%%%%%%%%%%%%%%%
%%%%%%%%%%%%%%%%%%%%%%%%%%%%%%%%%%%%%%%%%%%%%%%%%%%%%%%%%%%%%%%%%%%%%%%%%%%%%%%%%%%%%%%%%%%%%%%
\subsubsection{Bounds on $n_I^{D5}$}
\label{sec:BoundsD5}
%%%%%%%%%%%%%%%%%%%%%%%%%%%%%%%%%%%%%%%%%%%%%%%%%%%%%%%%%%%%%%%%%%%%%%%%%%%%%%%%%%%%%%%%%%%%%%%
%%%%%%%%%%%%%%%%%%%%%%%%%%%%%%%%%%%%%%%%%%%%%%%%%%%%%%%%%%%%%%%%%%%%%%%%%%%%%%%%%%%%%%%%%%%%%%%

In this section, we employ the results from the previous Section \ref{sec:BoundsBeta} to derive bounds on the numbers $n_I^{\rm D5}$
of D5-branes. These bounds are formulated in two theorems.
\begin{theorem}
\label{th:boundniD5}
For all $i$ we have the following bound on $n_i^{\rm D5}$:
		\beq \label{eq:nD5ibound}
		n_i^{D5} < \frac{6T_0}{\text{\normalfont{min($x_{\{i,k\}}$)}}}+T_i\,,
	\eeq
	where the minimum is taken over all pairs $\{i,k\}$ of indices of K\"ahler cone generators in the subcone, but can also be taken 
	across the entire K\"ahler cone.
\end{theorem}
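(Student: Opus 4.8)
The plan is to read the bound off directly from the $i^{\text{th}}$-tadpole \eqref{eq:Ti}, written as $n_i^{D5}-T_i=\sum_\beta N^\beta t_i^\beta(m_0^\beta)^2$, by controlling its right-hand side with the apparatus already built for Theorem \ref{thm:t_iBound}. First I would split the brane sum by the sign of $t_i^\beta$: only $\beta$-branes enter (the $\gamma$-branes have $m_0^\gamma=0$), and by Definition \ref{def:specialmixedbranes} and Table \ref{tab:TadpoleContributions} the special branes and the mixed branes with $t_i^{\beta_m}<0$ contribute non-positively, so the positive contributions come only from mixed branes with $t_i^{\beta_m}\geq 0$. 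Writing $n_i^{D5}-T_i=P-Q$ with $P\geq0$ this positive part and $Q\geq0$ the remainder, I would bound each piece through the shift $\tilde t_i^\beta=\mathcal{K}_{00i}+t_i^\beta$ of \eqref{eq:tildet_i}. Since $\mathcal{K}_{00i}>0$, each mixed brane counted in $P$ has $\tilde t_i^{\beta_m}>t_i^{\beta_m}\geq0$, hence belongs to the index set $\{\beta_m,+\}$, so that $P\leq\sum_{\beta_m,+}N^{\beta_m}\tilde t_i^{\beta_m}(m_0^{\beta_m})^2=:S_m$, with strict inequality whenever some mixed brane has $t_i^{\beta_m}\geq0$. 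Dually, every special brane $\beta_s$ contributing positively to the $0^{\text{th}}$-tadpole has $\tilde t_i^{\beta_s}<0$ by Corollary \ref{cor:T0+then(C00i+ti)-}, so $|t_i^{\beta_s}|=\mathcal{K}_{00i}+|\tilde t_i^{\beta_s}|\geq|\tilde t_i^{\beta_s}|$ and therefore $Q\geq\sum_{\beta_s}N^{\beta_s}|\tilde t_i^{\beta_s}|(m_0^{\beta_s})^2=:S_s$.

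These estimates give simultaneously $n_i^{D5}-T_i\leq P\leq S_m$ and $n_i^{D5}-T_i=P-Q\leq S_m-S_s$, and I would finish with the dichotomy that organizes the proof of Theorem \ref{thm:t_iBound}. If $S_m\leq S_s$, then $n_i^{D5}-T_i\leq S_m-S_s\leq0$, so $n_i^{D5}<T_i+6T_0/\text{min}(x_{\{i,k\}})$ holds trivially. If instead $S_s<S_m$, this is the situation treated as ``Case~1'' in the proof of Theorem \ref{thm:t_iBound}, where \eqref{eq:SpecialCaseBound} --- obtained there by feeding the $0^{\text{th}}$-tadpole lower bound of \eqref{eq:T_ILowerbounds} into Lemma \ref{lemma:1/6Difference}, see \eqref{1/6DifferenceSpecialCase} --- yields $S_m\leq 6T_0/\text{min}(x_{\{i,k\}})$; combining this with $n_i^{D5}-T_i\leq P$ and the strict bound $P<S_m$ gives $n_i^{D5}<T_i+6T_0/\text{min}(x_{\{i,k\}})$, while if no mixed brane has $t_i^{\beta_m}\geq0$ then $P=0$ and $n_i^{D5}\leq T_i$, so the strict bound holds in either event.

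The one step I expect to require genuine care --- as opposed to the routine bookkeeping of the split $n_i^{D5}-T_i=P-Q$ and of the shift $t_i\to\tilde t_i$ --- is the regime $S_m\leq S_s$. A priori one might fear that the bound on $n_i^{D5}$ must inherit the larger quantity $\bigl(6T_0+3T_i\,\text{max}(x_{\{i,k\}})\,\text{max}(T_l)\bigr)/\text{min}(x_{\{i,k\}})$ which appears in Theorem \ref{thm:t_iBound} through its ``Case~2''. The resolution is precisely the observation used above: when $S_m\leq S_s$ the negative contribution $Q$ to the $i^{\text{th}}$-tadpole already dominates the positive contribution $P$, so the $0^{\text{th}}$-tadpole is not needed at all and one obtains the even stronger $n_i^{D5}\leq T_i$. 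All the integrality facts required along the way ($\mathcal{K}_{00i}\in\mathbb{Z}_{\geq0}$ and $t_i^\beta m_0^\beta\in\mathbb{Z}$) and the bounds on $m_0^\beta$ have already been established in Section \ref{sec:BoundsBeta}, so nothing new enters there.
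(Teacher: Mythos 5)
Your proof is correct and follows essentially the same route as the paper: the paper likewise trades $t_i^\beta$ for $\tilde t_i^\beta$ (by adding the positive $\sum_\beta N^\beta\mathcal{K}_{00i}(m_0^\beta)^2$), splits into the cases where the resulting sum is non-positive (giving $n_i^{D5}<T_i$) or positive (which is exactly Case 1 of Theorem \ref{thm:t_iBound}), and then invokes the $\tfrac16\min(x_{\{i,k\}})$ estimate of \eqref{1/6DifferenceSpecialCase}/\eqref{eq:SpecialCaseBound} to obtain the $6T_0/\min(x_{\{i,k\}})$ term. Your $P$--$Q$ bookkeeping and your direct use of $S_m\le 6T_0/\min(x_{\{i,k\}})$ (rather than the paper's $(1-h_1)S_m$ version) are only cosmetic variations of the same argument.
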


\begin{proof}
From \eqref{eq:Ti}, we obtain
\begin{equation}\label{eq:niD5titildeRelationship}
n_i^{D5} = \sum_{\beta}N^{\beta}t_i^{\beta}(m_0^{\beta})^2 + T_i < \sum_{\beta}N^{\beta}t_i^{\beta}(m_0^{\beta})^2 + T_i + \sum_{\beta}N^{\beta}\mathcal{K}_{00i}(m_0^{\beta})^2 =  \sum_{\beta}N^{\beta}\tilde{t}_i^{\beta}(m_0^{\beta})^2 + T_i\,,
\end{equation}
where in the last equality we used \eqref{eq:tildet_i}. If $\sum_{\beta}N^{\beta}\tilde{t}_i^{\beta}(m_0^{\beta})^2 \leq 0$, then we have the obvious bound $n_i^{D5}<T_i$. Conversely if $0<\sum_{\beta}N^{\beta}\tilde{t}_i^{\beta}(m_0^{\beta})^2$, we have 
\begin{equation}\label{eq:niD5Split}
0 < \sum_{\beta}N^{\beta}\tilde{t}_i^{\beta}(m_0^{\beta})^2  \leq \sum_{\beta_m,+}N^{\beta_m}\tilde{t}_i^{\beta_m}(m_0^{\beta_m})^2 - \sum_{\beta_s}N^{\beta_s}|\tilde{t}_i^{\beta_s}|(m_0^{\beta_s})^2\,,
\end{equation}
where we dropped negative terms in the last inequality. Thus, we are in case 1 in the proof of Theorem \ref{thm:t_iBound}, i.e.~$\sum_{\beta_s}N^{\beta_s} \left\lvert \tilde{t}_i^{\beta_s} \right\lvert (m_0^{\beta_s})^2 \leq \sum_{\beta_m,+}N^{\beta_m}\tilde{t}_i^{\beta_m}(m_0^{\beta_m})^2$ , and can use results derived previously for that case. Using the fraction $h_1$ defined in \eqref{eq:h1Definition}, \eqref{eq:niD5Split} becomes
\begin{equation}\label{eq:niD5Split1}
\sum_{\beta}N^{\beta}\tilde{t}_i^{\beta}(m_0^{\beta})^2  \leq \sum_{\beta_m,+}N^{\beta_m}\tilde{t}_i^{\beta_m}(m_0^{\beta_m})^2 - \sum_{\beta_s}N^{\beta_s}|\tilde{t}_i^{\beta_s}|(m_0^{\beta_s})^2 = 
(1-h_1)\sum_{\beta_m,+}N^{\beta_m}\tilde{t}_i^{\beta_m}(m_0^{\beta_m})^2\,. 
\end{equation}
By the third line of \eqref{1/6DifferenceSpecialCase}, we obtain
\begin{align}
T_0 \geq & (1-h_1)\sum_{\beta_m,\,+}N^{\beta_m}\frac{1}{2}x^{\{i,k\}} \tilde{t}_i^{\beta_m} \underbrace{|\tilde{t}_k^{\beta_m}|}_{\geq 1/3}(m_0^{\beta_m})^2+ \frac{1}{6} \text{min}\Big(x_{\{i,k\}}\Big) h_1 \sum_{\beta_m, +}  N^{\beta_m} \tilde{t}_i^{\beta_m} (m_0^{\beta_m})^2\nonumber\\
\geq & \frac{1}{6}\text{min}(x_{\{i,k\}})(1-h_1)\sum_{\beta_m,\,+}N^{\beta_m} \tilde{t}_i^{\beta_m}(m_0^{\beta_m})^2\,
\end{align}
by dropping the second term on the RHS of the first line.
By rearranging and combining with \eqref{eq:niD5Split1}, we arrive at 
\begin{equation}
\sum_{\beta}N^{\beta}\tilde{t}_i^{\beta}(m_0^{\beta})^2 \leq (1-h)\sum_{\beta_m,\,+}N^{\beta_m} \tilde{t}_i^{\beta_m}(m_0^{\beta_m})^2 \leq \frac{6T_0}{\text{min}(x_{\{i,k\}})}\,, 
\end{equation}
which in combination with \eqref{eq:niD5titildeRelationship} gives the desired bound \eqref{eq:nD5ibound}.
%\begin{equation}
%n_i^{D5} < \frac{6T_0}{\text{min}(x^{\{i,k\}})} + T_i.
%\end{equation}
\end{proof}

\begin{remark}
\normalfont{We note also, that  
the first inequality of \eqref{sumstar} forbids $(n_i^{D5}-T_i)\geq 0$ 
for all $i$, i.e.~although each $n_i^{\text{D5}}$ is bounded above by \eqref{eq:nD5ibound}, 
together they are further constrained by this condition.}
\end{remark}

\begin{theorem}	
\label{th:boundn0D5}
	We have the following bound on $n_0^{\rm D5}$:
\beq \label{eq:n0D5bound}
		n_0^{D5} \leq \frac{1}{2}\text{\normalfont{max($x_{\{i,k\}}$)}}
		\cdot\text{\normalfont{min($T_i$)}} \cdot\text{\normalfont{max}}(T_i) +T_0\,,
	\eeq
	where the minimum and maximum is taken over all pairs $\{i,k\}$ of indices of K\"ahler cone generators in the subcone. The maximum 
	can also be taken across the entire K\"ahler cone.
\end{theorem}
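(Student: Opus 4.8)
The plan is to read the bound off directly from the $0^{\text{th}}$-tadpole in the $M$-matrix form \eqref{T_0WithM}. By Propositions \ref{prop:gammaContribute-T0} and \ref{prop:T0+specialbrane} (i.e.\ the entries of Table \ref{tab:TadpoleContributions}), $\gamma$-branes and mixed branes contribute non-positively to the $0^{\text{th}}$-tadpole, so the only positive contributions come from special branes that contribute positively, which we label $\beta_s$ and whose contributions are the $S^{\beta_s}$ of \eqref{SpecialContributiontoT0}. Hence \eqref{T_0WithM} gives $n_0^{D5}-T_0 \leq \sum_{\beta_s} S^{\beta_s}$, and it remains to bound $\sum_{\beta_s} S^{\beta_s}$.

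Next I would drop the positive semi-definite $M_{\{i,k\}}$-term in \eqref{SpecialContributiontoT0} and use Corollary \ref{cor:T0+then(C00i+ti)-} (all $\tilde{t}_i^{\beta_s}<0$, so $\tilde{t}_i^{\beta_s}\tilde{t}_k^{\beta_s}=|\tilde{t}_i^{\beta_s}|\,|\tilde{t}_k^{\beta_s}|$) to obtain, for an arbitrary pair $\{i,k\}$ of distinct generators of the simplicial subcone, $\sum_{\beta_s} S^{\beta_s} \leq \tfrac{1}{2} x_{\{i,k\}} \sum_{\beta_s} N^{\beta_s} |\tilde{t}_i^{\beta_s}|\,|\tilde{t}_k^{\beta_s}| (m_0^{\beta_s})^2$. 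Then I would invoke the ``digression'' \eqref{eq:Digression}--\eqref{eq:IndexkForExcessSpecialBrane} from the proof of Theorem \ref{thm:t_iBound}: there is an index $k_0$ with $\sum_{\beta_s} N^{\beta_s} |\tilde{t}_{k_0}^{\beta_s}| (m_0^{\beta_s})^2 \leq T_{k_0}$, hence $|\tilde{t}_{k_0}^{\beta_s}| \leq T_{k_0} \leq \text{max}(T_l)$ for every $\beta_s$ (using $N^{\beta_s}(m_0^{\beta_s})^2\geq 1$). Choosing $k=k_0$ and pulling this bounded factor out of the sum leaves $n_0^{D5}-T_0 \leq \tfrac{1}{2}\,\text{max}(x_{\{i,k\}})\,\text{max}(T_l)\sum_{\beta_s} N^{\beta_s} |\tilde{t}_i^{\beta_s}| (m_0^{\beta_s})^2$ for any index $i\neq k_0$.

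Finally I would take $i=i^\ast$ to be the generator of the subcone minimizing $T_i$, so $T_{i^\ast}=\text{min}(T_i)$, and establish $\sum_{\beta_s} N^{\beta_s} |\tilde{t}_{i^\ast}^{\beta_s}| (m_0^{\beta_s})^2 \leq T_{i^\ast}$; inserting this into the previous display yields precisely \eqref{eq:n0D5bound}. This last estimate is the main obstacle: the ``digression'' only guarantees a single index $k_0$ with $\sum_{\beta_s} N^{\beta_s}|\tilde{t}_{k_0}^{\beta_s}|(m_0^{\beta_s})^2\leq T_{k_0}$, so to control the same quantity for the distinct index $i^\ast$ one must return to the tadpole condition \eqref{eq:Ti}, \eqref{eq:T_ILowerbounds} for the index $i^\ast$, separate its special-brane part from its mixed-brane part, and re-run the Case~1/Case~2 dichotomy of the proof of Theorem \ref{thm:t_iBound} (together with Theorem \ref{th:boundniD5}) to show that the positive mixed-brane contributions to that tadpole cannot push $\sum_{\beta_s} N^{\beta_s} |\tilde{t}_{i^\ast}^{\beta_s}| (m_0^{\beta_s})^2$ above $T_{i^\ast}$; the degenerate case $i^\ast=k_0$ is disposed of by a minor variant of the same argument applied to a second subcone generator. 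It is exactly this step that forces the minimum in \eqref{eq:n0D5bound} to be taken over the generators of the simplicial subcone rather than over the whole K\"ahler cone, since the tadpole bookkeeping only controls the quantities $\tilde{t}_{i}^{\beta_s}$ and $T_i$ attached to that subcone's generators.
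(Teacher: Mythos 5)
There is a genuine gap, and it comes from discarding the mixed-brane contribution too early. In your first step you throw away the negative terms $-\sum_{\beta_m,+}|R^{\beta_m}|$ and reduce the problem to bounding the \emph{full} special-brane sum, which forces you in the last step to need $\sum_{\beta_s}N^{\beta_s}|\tilde{t}_{i^\ast}^{\beta_s}|(m_0^{\beta_s})^2\leq T_{i^\ast}$ for the \emph{specific} index $i^\ast$ minimizing $T_i$. That estimate is not available: the digression \eqref{eq:Digression}--\eqref{eq:IndexkForExcessSpecialBrane} only guarantees the existence of \emph{some} index $k$ with this property, and for a general index the tadpole \eqref{eq:Ti}, \eqref{eq:T_ILowerbounds} only gives $\sum_{\beta_s}N^{\beta_s}|\tilde{t}_{i}^{\beta_s}|(m_0^{\beta_s})^2\leq T_i+\sum_{\beta_m,+}N^{\beta_m}\tilde{t}_i^{\beta_m}(m_0^{\beta_m})^2$, where the mixed-brane term is only controlled at the scale $\sim 6T_0/\min(x_{\{i,k\}})$ (this is exactly why Case 2 of Theorem \ref{thm:t_iBound} is nontrivial). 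Re-running the Case 1/Case 2 dichotomy, as you propose, cannot show that the mixed branes ``cannot push'' the special-brane sum above $T_{i^\ast}$ — in Case 2 at index $i^\ast$ it genuinely can exceed $T_{i^\ast}$ — so your route either fails or only yields a weaker bound with an extra $T_0/\min(x_{\{i,k\}})$-type term, not \eqref{eq:n0D5bound}.

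The paper's proof avoids this precisely by keeping the term $-\sum_{\beta_m,+}|R^{\beta_m}|$ in \eqref{eq:n0D5BoundPart1} and then splitting $\sum_{\beta_m,+}|R^{\beta_m}|-\sum_{\beta_s}S^{\beta_s}=\bigl(\sum_{\beta_m,+}|R^{\beta_m}|-h_2\sum_{\beta_s}S^{\beta_s}\bigr)-(1-h_2)\sum_{\beta_s}S^{\beta_s}$ in Case 2: the first bracket is nonnegative by Lemma \ref{lemma:1/6Difference} (with $h_1=1$), so only the \emph{excess} fraction $(1-h_2)\sum_{\beta_s}S^{\beta_s}$ must be bounded, and it is the excess — not the full sum — that the Case-2 condition \eqref{eq:ExcessSpecial<Ti} bounds by $T_i$, with the second factor $|\tilde{t}_k^{\beta_s}|\leq T_k$ supplied by \eqref{eq:IndexkForExcessSpecialBrane}; the pair $\{i,k\}$ is then chosen (with the $i=k$ degeneracy handled as in \eqref{eq:SRbound}) so that the product is at most $\min(T_l)\cdot\max(T_l)$. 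Your opening reduction and the use of the digression for the $k$-factor match the paper, but without the cancellation supplied by Lemma \ref{lemma:1/6Difference} the final step of your argument cannot be completed.
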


\begin{proof}
Using \eqref{T_0WithM}, we obtain
\begin{align}\label{eq:n0D5BoundPart1}
n_0^{D5} = & \sum_{\beta}N^{\beta}\left[\frac{1}{2}x_{\{i,k\}}\tilde{t}_i^\beta \tilde{t}_k^{\beta}-M_{\{i,k\}}\left(b+\frac{m^{\beta}}{m^{\beta}_0}, b+\frac{m^{\beta}}{m^{\beta}_0}\right)\right](m_0^{\beta})^2
+\sum_{\gamma}N^{\gamma}C(m^{\gamma},m^{\gamma})+T_0 \nonumber\\
\leq & \sum_{\beta_s}S^{\beta_s} - \sum_{\beta_m,+}|R^{\beta_m}| + T_0
\end{align}
where we dropped some negative contributions of the first term on the RHS of the first line as well the negative $\gamma$-brane 
contribution and used  $S^{\beta_s}$, $R^{\beta_m}$ as defined in \eqref{SpecialContributiontoT0}, \eqref{MixedContributiontoT0}, 
respectively. We see that the coarsest bound on $n_0^{D5}$ occurs when $\sum_{\beta_s}S^{\beta_s} - \sum_{\beta_m,+}|R^{\beta_m}|$ is maximized. By \eqref{1/6DifferenceSpecialCase}, since its last line is positive, this expression is always negative in case 1 of Theorem \ref{thm:t_iBound}. To maximize it, we look at case 2 of Theorem \ref{thm:t_iBound}. Starting from \eqref{1/6DifferenceGeneralCase} in case 2 of Theorem \ref{thm:t_iBound}, we obtain

\begin{align}\label{eq:SRbound}
& \sum_{\beta_m, +}|R^{\beta_m}| - \sum_{\beta_s} S^{\beta_s} 
\geq \frac{1}{6} \text{min}\Big(x_{\{i,k\}}\Big) \sum_{\beta_m, +}  N^{\beta_m} \tilde{t}_i^{\beta_m} (m_0^{\beta_m})^2 - (1-h_2)\sum_{\beta_s} S^{\beta_s} \nonumber\\
\geq & - (1-h_2)\sum_{\beta_s} S^{\beta_s}
\geq -(1-h_2) \sum_{\beta_s} N^{\beta_s}\frac{1}{2}x_{\{i,k\}} \Big\lvert \tilde{t}_i^{\beta_s}\Big\lvert (m_0^{\beta_s})^2 \Big\lvert \tilde{t}_k^{\beta_s}\Big\lvert \nonumber\\
\geq & -\frac{1}{2}\text{max}(x_{\{i,k\}}) \underbrace{(1-h_2) \sum_{\beta_s} N^{\beta_s} \Big\lvert \tilde{t}_i^{\beta_s}\Big\lvert (m_0^{\beta_s})^2}_{< T_i} \cdot {T_k} 
> -\frac{1}{2}\text{max}(x_{\{i,k\}})\cdot T_i \cdot T_k \nonumber\\
\geq & -\frac{1}{2}\text{max}(x_{\{i,k\}})\cdot \text{min}(T_l) \cdot \text{max}(T_l) \,,
\end{align}
where in the second inequality, we dropped the positive first term. In the third inequality, we plugged in the definition 
\eqref{SpecialContributiontoT0} of $S^{\beta_s}$ and picked the pair 
$\{i,k\}$ such that $k$ is an index so that 
\eqref{eq:IndexkForExcessSpecialBrane} is satisfied, and $i$ is the particular index such that $T_i=\text{min}(T_l)$ if this $i\neq k$. If 
$i=k$, pick any other index as $i$, and drop the $M$-matrix term. In the fourth inequality we used the second inequality in 
\eqref{eq:ExcessSpecial<Ti}. In the last inequality, we note that if we have used the first way of choosing the pair $\{i,k\}$, then 
$T_i=\text{min}(T_l)$ and $T_k\leq \text{max}(T_l)$; if we have used the second way of choosing the pair $\{i,k\}$, then 
$T_i \leq \text{max}(T_l)$ and $T_k = \text{min}(T_l)$. Combining this result with \eqref{eq:n0D5BoundPart1}, we get the desired bound 
\eqref{eq:n0D5bound} on $n_0^{D5}$.
\end{proof}

%%%%%%%%%%%%%%%%%%%%%%%%%%%%%%%%%%%%%%%%%%%%%%%%%%%%%%%%%%%%%%%%%%%%%%%%%%%%%%%%%%%%%%%%%%%%%%%
%%%%%%%%%%%%%%%%%%%%%%%%%%%%%%%%%%%%%%%%%%%%%%%%%%%%%%%%%%%%%%%%%%%%%%%%%%%%%%%%%%%%%%%%%%%%%%%
\subsubsection{Bounds on $\gamma$-branes}
\label{sec:BoundsGamma}
%%%%%%%%%%%%%%%%%%%%%%%%%%%%%%%%%%%%%%%%%%%%%%%%%%%%%%%%%%%%%%%%%%%%%%%%%%%%%%%%%%%%%%%%%%%%%%%
%%%%%%%%%%%%%%%%%%%%%%%%%%%%%%%%%%%%%%%%%%%%%%%%%%%%%%%%%%%%%%%%%%%%%%%%%%%%%%%%%%%%%%%%%%%%%%%

Finally, we derive a bound on the number of $\gamma$-brane configurations, i.e.~we bound the flux quanta $m^\gamma$.

The contribution of $\gamma$-branes to the $0^{\text{th}}$-tadpole is fixed by \eqref{T_0WithM} as
\beq \label{eq:gammaRel}
	-\sum_\gamma N^\gamma C(m^\gamma, m^\gamma)=T_0-n_0^{\rm D5}+\sum_{\beta}N^{\beta}\left[\frac{1}{2}x_{\{i,k\}}\tilde{t}_i^\beta \tilde{t}_k^{\beta}-M_{\{i,k\}}\left(b+\frac{m^{\beta}}{m^{\beta}_0}, b+\frac{m^{\beta}}{m^{\beta}_0}\right)\right](m_0^{\beta})^2\,.
\eeq
As by Proposition \ref{prop:gammaContribute-T0}, the LHS of this equation is positive, a solution to it only exists if the 
right hand side is also positive. Thus, this is the equation of an ellipsoid and the vector $m^\gamma$ of discrete flux quanta 
is given by the finite number of integral points on this ellipsoid. 
We denote the positive RHS of \eqref{eq:gammaRel} by $r^2$ with $r\in\mathbb{R}$. 

Consequently, the question of boundedness of 
$m^\gamma$ translates into showing boundedness of $r^2$. By \eqref{eq:gammaRel}
we have
\bea
	r^2\!\!&\!\!=\!\!&\!-\sum_\gamma N^\gamma C(m^\gamma, m^\gamma)\!=\!T_0\!-\!n_0^{\rm D5}+\sum_{\beta}N^{\beta}\Big[\tfrac{1}{2}x_{\{i,k\}}\tilde{t}_i^\beta \tilde{t}_k^{\beta}-M_{\{i,k\}}\Big(b+\tfrac{m^{\beta}}{m^{\beta}_0}, b+\tfrac{m^{\beta}}{m^{\beta}_0}\Big)\Big](m_0^{\beta})^2\,\nonumber\\
	&\leq &T_0+\sum_{\beta_s}S^{\beta_s}- \sum_{\beta_m,+}|R^{\beta_m}|\leq T_0+ \frac{1}{2}\text{max}\big(x_{\{i,k\}}\big) \cdot\text{min}(T_i)\cdot \text{max}(T_i) \,,
\eea
where we set $n_0^{\rm D5}=0$ and dropped some negative terms in the sum over $\beta$ to obtain the second line and used
\eqref{eq:SRbound} for the last inequality.

This argument and also Proposition \ref{prop:gammaContribute-T0} require  that the matrix $C$
is of negative signature $(0,n)$ when restricted to the subspace of vectors obeying \eqref{gammaSUSY}. As we have argued before, for the 
bases $B=\mathbb{F}_k$, $dP_n$, $n>1$ and the toric surfaces the matrix $C$ is of Minkowski signature and the vector $\frac{b}{2}+\frac{j}{j_0}$ is time-like. Thus, the above argument applies.

%%%%%%%%%%%%%%%%%%%%%%%%%%%%%%%%%%%%%%%%%%%%%%%%%%%%%%%%%%%%%%%%%%%%%%%%%%%%%%%%%%%%%%%%%%%%%%%
%%%%%%%%%%%%%%%%%%%%%%%%%%%%%%%%%%%%%%%%%%%%%%%%%%%%%%%%%%%%%%%%%%%%%%%%%%%%%%%%%%%%%%%%%%%%%%%
\section{Conclusions}
\label{sec:conclusion}
%%%%%%%%%%%%%%%%%%%%%%%%%%%%%%%%%%%%%%%%%%%%%%%%%%%%%%%%%%%%%%%%%%%%%%%%%%%%%%%%%%%%%%%%%%%%%%%
%%%%%%%%%%%%%%%%%%%%%%%%%%%%%%%%%%%%%%%%%%%%%%%%%%%%%%%%%%%%%%%%%%%%%%%%%%%%%%%%%%%%%%%%%%%%%%%

We have studied Type IIB compactifications on smooth Calabi-Yau
elliptic fibrations over almost Fano twofold bases $B$ with
magnetized D9-branes and D5-branes. We have proven that the tadpole
cancellation and SUSY conditions imply that there are only finitely
many such configurations.  We have derived explicit and calculable
bounds on all flux quanta (Table \ref{tab:m0bounds}, Theorem
\ref{thm:t_iBound}, Section \ref{sec:BoundsGamma}) as well as the number of D5-branes 
(Theorem \ref{th:boundniD5}, Theorem \ref{th:boundn0D5}), which are independent on the continuous moduli of the
compactification, in particular the K\"ahler moduli, as long as
the supergravity approximation is valid.

The presented proof applies for any geometry that meets the geometric conditions
listed at the beginning of Section \ref{sec:generic proof}. We have shown 
explicitly in Section \ref{sec:B2geometries} and Appendix \ref{app:MoriKaehlerFano} that these
geometric conditions are obeyed for the twofold bases $B$ given by the Hirzebruch surfaces 
$\mathbf{F}_k$, $k=0,1,2$, the generic del Pezzos $dP_n$, $n=0,\ldots, 8$ as 
well all toric varieties associated to the 16 reflexive two-dimensional 
polytopes. This in particular required showing the positive semi-definiteness of
the matrices $M_{\{i,k\}}$ defined in \eqref{M}. To this end we studied the 
K\"ahler cones of the generic $dP_n$ and explicitly constructed their 
K\"ahler cone generators, which are listed in Table \ref{tab:KCdPn} and reveal 
useful geometric properties of these K\"ahler cones.

Physically, we have proven that there  exists a finite
number of four-dimensional $\mathcal{N}=1$ supergravity theories realized by these
compactifications. Most notably, there arise only finitely many gauge
sectors in these theories with finitely many different chiral spectra. The details
of these gauge sectors are determined by  the bounded number of branes in a stack and the bounded magnetic flux quanta.
Concretely, this means that the ranks of the gauge groups are bounded,
that only certain matter representations with certain chiral indices
exist (which is always true in weakly coupled Type IIB) and that for fixed gauge group there exist only a finite set
of possible multiplicities for the matter fields. These finiteness
properties, and more broadly similar results elsewhere in the
landscape, are particularly interesting when contrasted to generic
quantum field theories.

While we have shown finiteness of these compactifications and provided
explicit bounds, we have not explicitly constructed all
of these  compactifications. It would
be interesting to systematically construct this finite set of
configurations  and extract generic features
of the four-dimensional effective
theories in this corner of the landscape. In addition, we have not
systematically explored  the bases $B$ for
which the proof applies, i.e. ~there may exist additional
algebraic surfaces satisfying  the geometric conditions of Section \ref{sec:generic proof}. 
Other points of interest would be to determine whether a simple
modification of our proof exists for blow-ups of singular elliptic
fibrations or elliptically fibered Calabi-Yau manifolds which do not
satisfy the supergravity approximation; in the latter case the
supersymmetry conditions receive corrections of various types. Of most interest would be to find a general proof for
a general Calabi-Yau threefold $X$. It seems plausible that there
are even more general proof techniques which utilize SUSY and tadpole
cancellation conditions to prove finiteness for a general $X$. For example, some of the arguments in the
proof presented here, e.g.~the ones used to eliminate the dependence of the SUSY conditions 
\eqref{eq:SUSYcondition} on the K\"ahler moduli, should still apply for general Calabi-Yau 
manifolds $X$.
In addition, string dualities of the
considered Type IIB configurations extend our finiteness proof to the
dual theories, for example to the heterotic string on certain elliptic
fibrations with specific vector bundles and to F-theory on certain
elliptic $K3$-fibered fourfolds. It is very important to work
out  the details of the duality maps and the analogs of
the bounds we found in the dual theories. 

The presented proof is based
on tadpole and supersymmetry conditions at weak coupling and large
volume of $X$. It is crucial for a better 
understanding of the string landscape to understand string
consistency conditions away from large volume and weak
coupling. This requires the understanding of perturbative and
non-perturbative corrections\footnote{See \cite{Grimm:2012rg,GarciaEtxebarria:2012zm,Grimm:2013gma,Grimm:2013bha} for recent 
computations of corrections to $\mathcal{N}=1$ couplings in M-/F-theory compactifications.} both in $\alpha'$ and in $g_S$;
for example, the supersymmetry conditions receive $\alpha'$-corrections from worldsheet instantons. 
Avenues towards a better understanding might be provided by applications of $\mathcal{N}=1$
mirror symmetry, i.e.~mirror symmetry, and $S$-duality.

 It is particularly interesting that the
  finiteness results we have proven and similar results elsewhere in
  the landscape do not have known analogs in generic quantum field
  theories. Such differences are one of the hallmarks of string
  compactifications, and it seems reasonable to expect that similar
  finiteness results can be proven for even the most general string
  compactifications, in particular those at small volume and strong
  coupling.  This would have profound implications for our picture of
  the landscape: while it is larger than originally thought, our
  results provide further evidence that it may, in fact, be finite.

 \acknowledgments We thank Mike Douglas, Antonella Grassi,
Albrecht Klemm, Dave Morrison, Hernan Piragua and Wati Taylor for 
useful conversations and correspondence. This research is supported in part 
supported by the DOE grant DE-SC0007901 (M.C. and D.K.), Dean’s Funds for 
Faculty Working Group (M.C. and D.K.), the Fay R. and Eugene L. Langberg Endowed 
Chair (M.C.), the Slovenian Research Agency (ARRS) (M.C.) and the NSF grant
PHY11-25915 (J.H.). J.H. thanks J.L. Halverson for her encouragement.

\appendix

%%%%%%%%%%%%%%%%%%%%%%%%%%%%%%%%%%%%%%%%%%%%%%%%%%%%%%%%%%%%%%%%%%%%%%%%%%%%%%%%%%%%%%%%%%%%%%%
%%%%%%%%%%%%%%%%%%%%%%%%%%%%%%%%%%%%%%%%%%%%%%%%%%%%%%%%%%%%%%%%%%%%%%%%%%%%%%%%%%%%%%%%%%%%%%%
\section{K\"ahler Cones of del Pezzo Surfaces \& their $M_{\{i,k\}}$-Matrices}
\label{app:MoriKaehlerFano}
%%%%%%%%%%%%%%%%%%%%%%%%%%%%%%%%%%%%%%%%%%%%%%%%%%%%%%%%%%%%%%%%%%%%%%%%%%%%%%%%%%%%%%%%%%%%%%%
%%%%%%%%%%%%%%%%%%%%%%%%%%%%%%%%%%%%%%%%%%%%%%%%%%%%%%%%%%%%%%%%%%%%%%%%%%%%%%%%%%%%%%%%%%%%%%%

In this Appendix we discuss in detail the structure of the K\"ahler cone of the del Pezzo surfaces $dP_n$ for $n\leq 8$. 
We are interested in the extremal rays, i.e. the generators, of these in general non-simplicial cones, and the existence of 
coverings of these cones by simplicial subcones so that conditions (1)-(3) listed at the beginning of Section \ref{sec:generic proof} are 
obeyed.

First, we expand the K\"ahler cone generators $D_i$ of $dP_n$ in the basis \eqref{eq:H2dPn} of $H^2(dP_n,\mathbb{Z})$
\beq \label{eq:v_i}
	D_i=(v_i)^1H+\sum_{j=1}^{n}(v_i)^j E_j\,,
\eeq
which maps every $D_i$ to a vector $v_i$ in $\mathbb{Z}^{n+1}$. With this definition, we obtain the matrices \eqref{M} in this basis as
\beq\label{eq:Malternative}
	M_{\{i,k\}}=\eta\cdot[x_{\{i,k\}}(v_i\cdot v_k^T+v_k\cdot v_i^T)-\eta]\cdot \eta\,,
\eeq
where $i\neq k$, $v^T$ denotes the transpose of a vector, '$\cdot$' denotes the matrix product and $\eta=\text{diag}(1,-1,\ldots,-1)$ is the 
standard Minkowski matrix in $n+1$ dimensions. We note that in order to check positive semi-definiteness of the matrices in 
\eqref{eq:Malternative}, it suffices to prove it for the matrices $\eta\cdot M_{\{i,k\}}\cdot\eta$, which is the matrix in the square brackets 
in \eqref{eq:Malternative}.

Next, we need the explicit form for the K\"ahler generators of $dP_n$. 
We present these by listing the corresponding vectors $v_i$ defined 
via \eqref{eq:v_i}. We explicitly solve \eqref{eq:KaehlerConedPn} over the integers to obtain the K\"ahler cone generators. For the simplicial cases $dP_0$, $dP_1$, $dP_2$ we obtain \eqref{eq:simplicialKCdPn} as discussed earlier. In the non-simplicial cases $dP_n$, $n>2$, we summarize the generators in Table \ref{tab:KCdPn}.

Here, the second column contains the schematic form of the vectors 
$v_i$, with each row containing all vectors of the same particular 
form. In each row, the explicit expressions for the $v_i$ are obtained 
by inserting the values listed in the third column for the place holder 
variables in the entries of $v_i$ in that row and by permuting the 
underlined entries of the vector $v_i$. The number of different vectors
in each row is given in the fourth column, where the two factors are
given by the number of elements in the list in the third column and the
number of permutations of the entries, respectively. The fifth column
contains a list of the Minkowski length of all vectors in a given row. 
We note that this column precisely contains the self-intersection of 
the curves associated to the $D_i$. All are either $0$ or $1$ and it 
can be checked that the intersections of the $v_i$ with 
$c_1(dP_n)=3H-\sum_i E_i\equiv 
(3,-1,\ldots,-1)$ are precisely $2$ or $3$, respectively, as 
required by \eqref{eq:KaehlerConedPn}.  

For example, in the second row of Table \ref{tab:KCdPn}, all vectors 
$v_i$ are of the form $v_i=(a,b,b,b)$ by the second column. By the 
third column, there are two different vectors of this type, namely 
$v_1=(2,-1,-1,-1)$ and $v_2=(1,0,0,0)$. Thus, there are precisely $2$ 
vectors as indicated in the fourth column and the Minkowski length of 
the two vectors is $1$, $1$, respectively, as in the last column of the second row.

We note that the K\"ahler cone generators and their grouping as in 
Table \ref{tab:KCdPn} can be understood by representation theory, 
recalling that the Weyl group naturally acts on $H_2(dP_n,\mathbb{Z})$.
\begin{table}[H]
\vspace{-0.2cm}
\hspace{-0.9cm}
\small
\begin{tabular}{|p{0.3cm}||>{\centering}p{3.3cm}|p{7.8cm}|>{\centering}p{0.98cm}|p{2cm}|}
\hline
 & \multicolumn{2}{c|}{K\"ahler cone generators $v_i\phantom{\sum^{A}_B}\hspace{-0.6cm}$}& $\#$ &$\eta(v_i,v_i)$ \\
\hline
\hline
\multirow{3}{*}{\!\!$dP_3$\!\!} & $(a,b,b,b)$ & $(a,b)\in \{(2,\text{-}1),(1,0)\}$& $2\cdot 1$ & $\{1,1\}$  \\
& $(c,\underline{d,e,e})\phantom{\int_{A_{\Sigma_k}}}\hspace{-0.77cm}$ & $(c,d,e)\in\{(1,\text{-}1,0)\}$& $3$ & $0$ \\
\cline{2-5}
& \multicolumn{2}{c|}{\phantom{mmmmmmmmmmm}\,\,\,Total number of K\"ahler generators \,\,\,\,\,\,\,\,\,\,\,$=\phantom{\int_{A_1}^O}\hspace{-0.55cm}$} &$5$  & \\
\hline \hline
\multirow{3}{*}{\!\!$dP_4$\!\!} & $(a,b,b,b,b)$ & $(a,b)\in \{(2,\text{-}1),(1,0)\}$& $2\cdot 1$ & $\{0,1\}$  \\
& $(c,\underline{d,e,e,e})\phantom{\int_{A_{\Sigma_k}}}\hspace{-0.77cm}$ & $(c,d,e)\in\{(2,0,\text{-}1),(1,\text{-}1,0)\}$& $2\cdot 4$ & $\{1,0\}$ \\
\cline{2-5}
& \multicolumn{2}{c|}{\phantom{mmmmmmmmmmm}\,\,\,Total number of K\"ahler generators \,\,\,\,\,\,\,\,\,\,\,$=\phantom{\int_{A_1}^O}\hspace{-0.55cm}$} &$10$  &\\
\hline 
\hline
\multirow{5}{*}{\!\!$dP_5$\!\!} & $(a,b,b,b,b,b)$ & $(a,b)\in\{(1,0)\}$& $1$ & $1$ \\
& $(c,\underline{d,e,e,e,e})$ & $(c,d,e)\in\{(3,\text{-}2,\text{-}1),(2,0,\text{-}1),(1,\text{-}1,0)\}$& $3\cdot 5$ & $\{0,0,1\}$ \\
& $(f,\underline{g,g,g,h,h})\phantom{\int_{A_{\Sigma_k}}}\hspace{-0.77cm}$ & $(f,g,h)\in\{(2,\text{-}1,0)\}$& $10$ & $0$ \\
\cline{2-5}
& \multicolumn{2}{c|}{\phantom{mmmmmmmmmmm}\,\,\,Total number of K\"ahler generators \,\,\,\,\,\,\,\,\,\,\,$=\phantom{\int_{A_1}^O}\hspace{-0.55cm}$} &$26$ &  \\
\hline 
\hline
\multirow{6}{*}{\!\!$dP_6$\!\!} & $(a,b,b,b,b,b,b)$ & $(a,b)\in\{(1,0),(5,-2)\}$& $2\cdot1$ & $\{1,1\}$  \\
& $(c,\underline{d,e,e,e,e,e})$ & $(c,d,e)\in\{(1,\text{-}1,0),(3,\text{-}2,\text{-}1)\}$& $2\cdot 6$ & $\{0,0\}$ \\
& $(f,\underline{g,g,g,g,h,h})$ & $(f,g,h)\in\{(2,\text{-}1,0)\}$& $15$ & $0$ \\
& $(i,\underline{j,j,j,k,k,k})$ & $(i,j,k)\in\{(2,\text{-}1,0),(4,\text{-}2,\text{-}1)\}$& $2\cdot 20$ & $\{1,1\}$ \\
& $(l,\underline{m,n,n,n,n,o})\phantom{\int_{A_{\Sigma_k}}}\hspace{-0.77cm}$ & $(l,m,n,o)\in\{(3,\text{-}2,\text{-}1,0)\}$& $30$ & $1$ \\
\cline{2-5}
& \multicolumn{2}{c|}{\phantom{mmmmmmmmmmm}\,\,\,Total number of K\"ahler generators \,\,\,\,\,\,\,\,\,\,\,$=\phantom{\int_{A_1}^O}\hspace{-0.55cm}$} &$99$ & \\
\hline 
\hline
\multirow{6}{*}{\!\!$dP_7$\!\!} & $(a,b,b,b,b,b,b,b)$ & $(a,b)\in\{(8,\text{-}3),(1,0)\}$& $2\cdot1$ & $\{1,1\}$ \\
& $(c,\underline{d,e,e,e,e,e,e})$ & $(c,d,e)\in\{(5,0,\text{-}2),(5,\text{-}1,\text{-}2),(4,\text{-}3,\text{-}1),(1,\text{-}1,0)\}$& $4\cdot 7$ &  $\{1,0,1,0\}$  \\
& $(f,\underline{g,g,g,h,h,h,h})$ & $(f,g,h)\in\{(7,\text{-}2,\text{-}3),(4,\text{-}2,\text{-}1),(2,0,\text{-}1),(2,\text{-}1,0)\}$& $4\cdot 35$ &  $\{1,0,0,1\}$  \\
& $(i,\underline{j,k,l,l,l,l,l})$ & $(i,j,k,l)\in\{(3,0,\text{-}2,\text{-}1)\}$& $42$ & $0$ \\
& $(m,\underline{n,o,o,p,p,p,p})$ & $(m,n,o,p)\in\{(6,\text{-}1,\text{-}3,\text{-}2),(3,\text{-}2,0,\text{-}1)\}$& $2\cdot 105$ & $\{1,1\}$ \\
& $(q,\underline{r,s,s,s,t,t,t})\phantom{\int_{A_{\Sigma_k}}}\hspace{-0.77cm}$ & $(q,r,s,t)\in\{(5,\text{-}3,\text{-}2,\text{-}1),(4,0,\text{-}2,\text{-}1)\}$& $2\cdot 140$ & $\{1,1\}$ \\
\cline{2-5}
& \multicolumn{2}{c|}{\phantom{mmmmmmmmmmm}\,\,\,Total number of K\"ahler generators \,\,\,\,\,\,\,\,\,\,\,$=\phantom{\int_{A_1}^O}\hspace{-0.55cm}$} &$702$ & \\
\hline 
\hline
\multirow{6}{*}{\!\!$dP_8$\!\!} & 
$(a,b,b,b,b,b,b,b,b)$ & $(a,b)\in\{(17,\text{-}6),(1,0)\}$& $2\cdot1$ & $\{1,1\}$  \\
& $(c,\underline{d,e,e,e,e,e,e,e})$ & $(c,d,e)\in\{(11,\text{-}3,\text{-}4),(10,\text{-}6,\text{-}3),(8,\text{-}1,\text{-}3),(8,0,\text{-}3),$ $\phantom{mmmmm}\,\,\,(4,\text{-}3,\text{-}1),(1,-\text{1},0)\}$& $6\cdot 8$ &  $\{0,1,0,1,0,0\}$  \\
& $(f,\underline{g,g,h,h,h,h,h,h})$ & $(f,g,h)\in\{(13,\text{-}6,\text{-}4),(5,0,\text{-}2)\}$& $2\cdot 28$ &  $\{1,1\}$  \\
& $(i,\underline{j,j,j,k,k,k,k,k})$ & $(i,j,k)\in\{(16,\text{-}5,\text{-}6),(2,\text{-}1,0)\}$& $2\cdot 56$ & $\{1,1\}$ \\
& $(l,\underline{m,n,o,o,o,o,o,o})$ & $(l,m,n,o)\!\in\!\{(14,\text{-}3,\text{-}6,\text{-}5),(7,\text{-}4,\text{-}3,\text{-}2),(5,\text{-}1,0,\text{-}2),$ $\phantom{mmmmmmm}(4,\text{-}3,0,\text{-}1)\}$& $4\cdot 56$ & $\{1,0,0,1\}$ \\
& $(p,\underline{q,q,q,q,r,r,r,r})$ & $(p,q,r)\!\in\!\{(10,\text{-}4,\text{-}3),(9,\text{-}4,\text{-}2),(2,\text{-}1,0)\}$& $3\cdot 70$ & $\{0,1,0\}$ \\
& $(s,\underline{t,u,u,v,v,v,v,v})$ & $(s,t,u,v)\in\!\{(10,\!\text{-}2,\!\text{-}5,\!\text{-}3),(10,\text{-}1,\text{-}3,\text{-}4),(9,\text{-}2,\text{-}4,\text{-}3),$ $\phantom{mmmmmm}\,\,\,\,(8,\text{-}5,\text{-}3,\text{-}2),(8,\text{-}4,\text{-}1,\text{-}3),(3,\text{-}2,0,\text{-}1)\}$& $6\cdot 168$ & $\{1,1,0,1,1,0\}$ \\
& $(w,\underline{x,y,y,y,z,z,z,z})$ & $(w,x,y,z)\in\{(15,\text{-}4,\text{-}6,\text{-}5),(12,\text{-}4,\text{-}3,\text{-}5),\phantom{mmmmm}$ $\phantom{mmmmmmm}\,(12,\text{-}2,\text{-}5,\text{-}4),(11,\text{-}6,\text{-}4,\text{-}3),$ $\phantom{mmmmmmm}\,(8,\text{-}4,\text{-}2,\text{-}3),(7,\text{-}1,\text{-}2,\text{-}3),(7,0,\text{-}2,\text{-}3),$ $\phantom{mmmmmmm}\,(6,\text{-}4,\text{-}1,\text{-}2),(6,\text{-}2,\text{-}3,\text{-}1),(5,\text{-}3,\text{-}2,\text{-}1),$ $\phantom{mmmmmmm}\,(4,0,\text{-}2,\text{-}1),(3,\text{-}2,0,\text{-}1)\}$& ${12\cdot 280}$ & $\{1,1,1,1,0,0,$ $\phantom{.\,}1,1,1,0,0,1\}$ \\
& $(\tilde{a},\underline{\tilde{b},\tilde{b},\tilde{c},\tilde{c},\tilde{d},\tilde{d},\tilde{d},\tilde{d}})$ & $(\tilde{a},\tilde{b},\tilde{c},\tilde{d})\in\{(6,\text{-}1,\text{-}3,\text{-}2)\}$& $420$ & $0$ \\
& $\phantom{\int_{A_1}^O}\hspace{-0.55cm}(\tilde{e},\underline{\tilde{f},\tilde{f},\tilde{g},\tilde{g},\tilde{g},\tilde{h},\tilde{h},\tilde{h}})$ & $(\tilde{e},\tilde{f},\tilde{g},\tilde{h})\in\{(14,\text{-}6,\text{-}5,\text{-}4),(4,0,\text{-}1,\text{-}2)\}$& $2\cdot 560$ & $\{1,1\}$ \\
& $\phantom{\int_{A_1}^O}\hspace{-0.55cm}(\tilde{i},\underline{\tilde{j},\tilde{k},\tilde{l},\tilde{l},\tilde{m},\tilde{m},\tilde{m},\tilde{m}})$ & $(\tilde{i},\tilde{j},\tilde{k},\tilde{l},\tilde{m})\in\{(12,\text{-}6,\text{-}5,\text{-}3,\text{-}4),(6,0,\text{-}1,\text{-}3,\text{-}2)\}$& $2\cdot 840$ & $\{1,1\}$ \\
& $(\tilde{n},\underline{\tilde{o},\tilde{p},\tilde{q},\tilde{q},\tilde{q},\tilde{r},\tilde{r},\tilde{r}})$ & $(\tilde{n},\tilde{o},\tilde{p},\tilde{q},\tilde{r})\in\{(13,\text{-}3,\text{-}6,\text{-}5,\text{-}4),(9,\text{-}5,\text{-}4,\text{-}3,\text{-}2),\phantom{mm}$ $\phantom{mmmmmmm}\,\,\,\,(9,\text{-}2,\text{-}1,\text{-}4,\text{-}3),(5,\text{-}3,0,\text{-}2,\text{-}1)\}$& $4\cdot 1120$ & $\{1,1,1,1\}$ \\
& $(\tilde{s},\underline{\tilde{t},\tilde{u},\tilde{u},\tilde{v},\tilde{v},\tilde{w},\tilde{w},\tilde{w}})$ & $(\tilde{s},\tilde{t},\tilde{u},\tilde{v},\tilde{w})\in\{(11,\text{-}2,\text{-}3,\text{-}5,\text{-}4),(10,\text{-}5,\text{-}2,\text{-}3,\text{-}4),\phantom{mm}$ $\phantom{mmmmmmmm}(8,\text{-}1,\text{-}4,\text{-}3,\text{-}2),(7,\text{-}4,\text{-}3,\text{-}1,\text{-}2)\}$& $4\cdot 1680$ & $\{1,1,1,1\}$\normalsize \\
\cline{2-5}
& \multicolumn{2}{c|}{\phantom{mmmmmmmmmmm}\,\,\,Total number of K\"ahler generators \,\,\,\,\,\,\,\,\,\,\,$=\phantom{\int_{A_1}^O}\hspace{-0.55cm}$} &$19440$ &\\
\hline 
\end{tabular}
\caption{K\"ahler cone generators for $dP_n$. The underlined entries of the $v_i$ are permuted.}
\label{tab:KCdPn}
\end{table}
For instance the K\"ahler cone generators of $dP_n$, $n=2,\ldots, 6$ 
form the representations $\mathbf{3}$, 
$(\bar{\mathbf{3}}\otimes \mathbf{1})\oplus (\mathbf{1}\otimes 
\mathbf{2})$, 
$\mathbf{5}\oplus\bar{\mathbf{5}}$, $\mathbf{16}\oplus\mathbf{10}$ and 
$\mathbf{78}\oplus\mathbf{27}$ under the corresponding groups $A_1$, 
$A_2\times A_1$, $A_4$, $D_5$ and $E_6$, respectively. Here the first 
representation in all direct sums is formed by all generators with 
Minkowski length $1$ and the second one is formed by generators 
with Minkowski length $0$. These results can be 
worked out explicitly by computing the Dynkin labels of the generators 
in Table \ref{tab:KCdPn} for the canonical roots $\alpha_i$, which 
are the $-2$-curves in $H_2(dP_n,\mathbb{Z})$ orthogonal to $c_1(dP_n)$. 
Thus, the zero weight vector is identified with $c_1(dP_n)$. For $dP_7$
only the union of the generators of the K\"ahler and Mori cone have
a representation theoretical decomposition as $\mathbf{912}\oplus 
\mathbf{133}$ (some of the weights of the $\mathbf{912}$ 
have higher multiplicities yielding only $576$ different weights), where the first representation contains the length $1$ 
and the second one the length $0$ K\"ahler cone generators.

Next, we make one important observation. As one can check explicitly from 
Table \ref{tab:KCdPn} and \eqref{eq:simplicialKCdPn}, for every del 
Pezzo $dP_n$ with $n>1$, the first Chern class
$c_1(dP_n)\equiv (3,-1,\ldots,-1)$ is proportional to the sum of all
K\"ahler cone generators $v_i$
\beq \label{eq:c1inCenter}
	c_1(dP_n)\equiv (3,-1,\ldots,-1)=\frac{1}{A_n\cdot N}\sum_{i=1}^N v_i
\eeq
where $N$ denotes the total number of K\"ahler cone generators of $dP_n$, cf.~Table \ref{tab:KCdPn}.
The positive proportionality factor $A_n$ depends on $n$ and reads
\beq \label{eq:Ac1}
	A_3=\frac{2}{5}\,,\quad A_4=\frac{1}{2}\,,\quad A_5=\frac{17}{26}\,,\quad A_6=\frac{10}{11}\,,\quad A_7=\frac{55}{39}\,,\quad A_8=\frac{26}{9}
\eeq
for $dP_3$, $dP_4$, $dP_5$, $dP_6$, $dP_7$ and $dP_8$, respectively.
This means that $c_1(dP_n)$ 
is in the center of the K\"ahler cone of all del Pezzo surfaces with 
$n>1$. 

This implies that we can  find a cover of the K\"ahler cone by
simplicial subcones so that properties (1)-(3)  at the beginning of section \ref{sec:generic proof} are satisfied. We present two such 
covers:
\\

\noindent{\textbf{Cover 1:}}
Intersect the K\"ahler cone 
with a hyperplane that is normal to $c_1(dP_n)$ and passes through $c_1(dP_n)$.
This yields an $n$-dimensional polytope with vertices corresponding to the generators of the K\"ahler cone. 
Triangulate this polytope with star being 
$c_1(dP_n)$. This triangulation induces a decomposition of the K\"ahler cone into simplicial
subcones. As the generators of one simplicial subcone, take $c_1(dP_n)$ and 
those generators $v_i$ of the K\"ahler cone that go through the vertices of an $n$-dimensional cone of the triangulated 
polytope.

In this covering of the K\"ahler cone, properties (2) and (3) are satisfied: we obviously have $b_i$ all positive, because $c_1(dP_n)$ is one of the 
generators in every simplicial subcone. From \eqref{eq:basisexp} we get $b_i=0$ for all $D_i\neq c_1(dP_n)$
and $b_K=1$, where $K$ denotes the index such that $D_K=c_1(dP_n)$. In addition, we have $C_{KK}=\mathcal{K}_{000}=9-n$ and 
$C_{iK}=\mathcal{K}_{00i}=2,\,3$
for $i\neq K$ by \eqref{eq:C_IJKrels} and \eqref{eq:C00iC000dPn} and $T_K=12\int_B c_1^2=12\mathcal{K}_{000}=12(9-n)$ by 
\eqref{eq:T_iEvaluated} and \eqref{eq:C00iC000dPn}. We discuss why property (1) is satisfied later.\\

\noindent{\textbf{Cover 2:}}
Although the above cover 1 obeys all the required properties listed at the beginning of Section \ref{sec:generic proof}, it slightly
increases the bounds because it increases $\text{max}(T_i)$ for $n\leq 6$ in which case $\text{max}(T_i)=T_K=12(9-n)$ is larger 
than the $T_i$ found in \eqref{eq:T_iEvaluateddPn}.

Thus, we provide the following alternative cover which exists if the K\"ahler cone is sufficiently symmetric, in addition to $c_1(dP_n)$ 
being its center. Take a vertex of the polytope 
constructed in cover 1. Construct the line through that vertex and the star, i.e.~$c_1(dP_n)$. This line has to intersect the boundary of
the polytope at another point. This point lies on a certain facet of this polytope. Take the vertices of this facet together with the original 
vertex we have started with to define a simplicial subcone of the K\"ahler cone.  Notice that this subcone contains $c_1(dP_n)$ and the 
cone formed by the vertices of this facet and $c_1(dP_n)$, i.e.~a subcone in cover 1. Repeat this procedure for all vertices of the polytope.
If the K\"ahler cone is sufficiently symmetric, each facet will be hit exactly once. Thus, each subcone in cover 1 is contained in a 
corresponding subcone defined in this way.  Consequently, since cover 1 covers the K\"ahler cone completely, so does cover 2.

This cover also satisfies conditions (1)-(3) at the beginning of Section \ref{sec:generic proof}. We again leave the discussion of condition (1) for later. 
Conditions (2) and (3) are satisfied since $c_1(dP_n)$ is contained in each subcone, which implies 
$b_i\geq 0$ for all $i$, and by \eqref{eq:C00iC000dPn} all $\mathcal{K}_{00i}$ are positive integers.
In addition, the advantage of this cover is that all generators of all simplicial subcones are generators of the K\"ahler cone. Thus in all bounds derived in this work, we have that $\text{max}(T_i)$ is given precisely by \eqref{eq:T_iEvaluateddPn}. Given the fact that the generators
of the K\"ahler cone sit in representations of  Lie algebras, which implies that the K\"ahler cone is  symmetric, and that 
$c_1(dP_n)$ lies in its center, we expected this cover 2 to exist.
\\

Finally, we discuss why condition (1), i.e. the positive semi-definiteness of the matrices $M_{\{i,k\}}$ in \eqref{M}, is satisfied in both Cover 1 and Cover 2. We notice the following fact: for both covers, in order to show that the matrices \eqref{M} are positive semi-definite,
we only have to prove that these matrices written in the form \eqref{eq:Malternative} are positive semi-definite for all possible choices of two vectors $v_i$, $v_j$ of Table \ref{tab:KCdPn}. This is clear for Cover 2, because the generators of all simplicial subcones are generators of the K\"ahler cone. For Cover 1, in every simplicial subcone, all matrices $M_{\{i,j\}}$ with $i,j\neq K$ involve only the generators $v_i$, $v_j$. 
Thus, we only have to consider the matrices $M_{\{i,K\}}$ with $i\neq K$ (recall that we only have to show positive 
semi-definiteness of the matrices $M_{\{i,j\}}$ for $i\neq j$). For these we use
\begin{lemma}
In Cover 1, let $K$ be the index corresponding to $c_1(dP_n)$, then the matrices $M_{\{i,K\}}$ for all $i\neq K$ are positive semi-definite, if all matrices
$M_{\{i,j\}}$ for all pairs of generators $v_i$, $v_j$ of the K\"ahler cone are positive semi-definite.
\end{lemma}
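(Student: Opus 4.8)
The plan is to reduce the ``new'' pair matrix $M_{\{i,K\}}$ to a non-negative combination of the pair matrices $M_{\{i,l\}}$ built from genuine K\"ahler cone generators, exploiting the centrality relation \eqref{eq:c1inCenter} of Cover~1, which states that $v_K:=c_1(dP_n)$ equals the average $\tfrac{1}{A_n N}\sum_{l=1}^{N}v_l$ over all K\"ahler cone generators. Since $\eta^{2}=\unit$, congruence by $\eta$ preserves signature, so it is equivalent to prove that
\[
\eta\cdot M_{\{i,K\}}\cdot\eta=x_{\{i,K\}}\big(v_iv_K^{T}+v_Kv_i^{T}\big)-\eta
\]
is positive semi-definite, and we may choose $x_{\{i,K\}}\in\mathbb{Q}^{+}$ freely. (If $c_1(dP_n)$ happened to be a K\"ahler cone generator the statement would already be contained in the hypothesis; but for $n\le8$ this never occurs, as $c_1^{2}=9-n$ is incompatible with both alternatives in \eqref{eq:KaehlerConedPn}, so $K$ is genuinely a new index.)

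First I would insert \eqref{eq:c1inCenter}, peel off the $l=i$ summand -- which contributes the positive semi-definite rank-one term $2v_iv_i^{T}$ -- and for each $l\neq i$ rewrite, via \eqref{eq:Malternative},
\[
v_iv_l^{T}+v_lv_i^{T}=\frac{1}{x_{\{i,l\}}}\big(\eta M_{\{i,l\}}\eta+\eta\big),
\]
where $x_{\{i,l\}}>0$ is a value making $M_{\{i,l\}}$ positive semi-definite (this is exactly the hypothesis, read through $M\succeq0\Leftrightarrow\eta M\eta\succeq0$). Collecting the multiples of the indefinite matrix $\eta$ gives
\[
\eta\,M_{\{i,K\}}\,\eta=\frac{x_{\{i,K\}}}{A_n N}\Big[2v_iv_i^{T}+\sum_{l\neq i}\frac{1}{x_{\{i,l\}}}\eta M_{\{i,l\}}\eta\Big]+\Big(\frac{x_{\{i,K\}}}{A_n N}\sum_{l\neq i}\frac{1}{x_{\{i,l\}}}-1\Big)\eta.
\]
The crucial move is to kill the last term by the choice
\[
x_{\{i,K\}}:=\frac{A_n N}{\sum_{l\neq i}1/x_{\{i,l\}}}\in\mathbb{Q}^{+},
\]
which is well defined and strictly positive because all $x_{\{i,l\}}>0$ and the index set $\{l\neq i\}$ is non-empty. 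Then $\eta M_{\{i,K\}}\eta$ becomes a non-negative linear combination of $v_iv_i^{T}$ and of the $\eta M_{\{i,l\}}\eta$, each positive semi-definite; hence $\eta M_{\{i,K\}}\eta\succeq0$, and therefore $M_{\{i,K\}}\succeq0$.

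I do not expect a genuine obstacle: the lemma is essentially a bookkeeping consequence of the centrality relation \eqref{eq:c1inCenter}. The only point worth flagging is the residual multiple of the indefinite matrix $\eta$ in the displayed identity, which would a priori spoil positivity; the content of the argument is simply that the scaling freedom in $x_{\{i,K\}}$ is precisely enough to annihilate it, after which positive semi-definiteness follows term by term from the hypothesis together with the trivial $v_iv_i^{T}\succeq0$.
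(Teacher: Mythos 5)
Your proof is correct and follows essentially the same route as the paper: insert the centrality relation \eqref{eq:c1inCenter}, trade each $v_iv_l^T+v_lv_i^T$ for $\eta M_{\{i,l\}}\eta$ using the hypothesis, and exploit the scaling freedom in $x_{\{i,K\}}$ so that $M_{\{i,K\}}$ becomes a non-negative linear combination of positive semi-definite matrices. The only cosmetic difference is that you peel off the $l=i$ summand as the manifestly positive semi-definite term $2v_iv_i^T$ and pick $x_{\{i,K\}}$ to cancel the residual multiple of $\eta$ exactly, whereas the paper keeps the full sum over $j$ and fixes $x_{\{i,K\}}$ by the averaging condition \eqref{eq:xiK}; your bookkeeping is if anything slightly tidier, since $M_{\{i,i\}}$ is never defined in the setup.
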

\begin{proof}
Using  the first Chern class $c_1(dP_n)\equiv(3,-1,\ldots,-1)$ and $\lambda_j=\frac{1}{A_n\cdot N}$, we 
obtain
\begin{equation} \label{eq:Mikc1}
M_{\{i,K\}}=x_{\{i,K\}}(v_i\cdot c_1(dP_n)^t+c_1(dP_n)\cdot v_i^t)-\eta=\sum_{j=1}^N \lambda_{j} x_{\{i,K\}}(v_i\cdot v_j^t+v_j\cdot v_i^t)-\eta\,,
\end{equation} 
where we used \eqref{eq:c1inCenter}.
Choose $x_{\{i,K\}}$ for every $i$ so that the following equality is satisfied
\begin{equation} \label{eq:xiK}
\sum_{j=1}^N \lambda_j \frac{x_{\{i,K\}}}{x_{\{i,j\}}}=x_{\{i,K\}}\sum_{j=1}^N  \frac{\lambda_j}{x_{\{i,j\}}}=x_{\{i,K\}}\frac{1}{A_n}\Big\langle\frac{1}{x_{\{i,j\}}}\Big\rangle_j \stackrel{!}{=}1\,,
\end{equation}
where $\big\langle\frac{1}{x_{\{i,j\}}}\big\rangle_j$ denotes the average of $\frac{1}{x_{\{i,j\}}}$ with $i$ kept fixed and $j$ varied over all 
K\"ahler cone generators. Then, \eqref{eq:Mikc1} can be written as
\begin{eqnarray} \label{eq:MiKFinal}
M_{\{i,K\}}&=&\sum_{j=1}^{N}\lambda_j \frac{x_{\{i,K\}}}{x_{\{i,j\}}}x_{\{i,j\}}(v_i\cdot v_j^t+v_j\cdot v_i^t)-\eta=\sum_{j=1}^N \lambda_j \frac{x_{\{i,K\}}}{x_{\{i,j\}}}(x_{\{i,j\}}(v_i\cdot v_j^t+v_j\cdot v_i^t)-\eta)\nonumber\\
&=&\sum_{j=1}^N \lambda'_j(x_{\{i,j\}}(v_i\cdot v_j^t+v_j\cdot v_i^t)-\eta)=\sum_{j=1}^N \lambda'_jM_{\{i,j\}}\,,
\end{eqnarray} 
where we set $\lambda_j'=\lambda_j \frac{x_{\{i,K\}}}{x_{\{i,j\}}}$. We note that $M_{\{i,K\}}$ is defined in terms of generators of the 
K\"ahler 
cone and $\lambda_j'\geq 0$ for all $j=1,\ldots, N$. Thus, if all the $M_{\{i,j\}}$ are positive semi-definite, then
$M_{\{i,K\}}$ will be automatically positive semi-definite  because it is just a positive linear combination of the $M_{\{i,j\}}$ by 
\eqref{eq:MiKFinal}. A positive linear combination of positive semi-definite matrices is again positive semi-definite.
\end{proof}

Thus, it only remains to show positive semi-definiteness of the matrices $M_{\{i,k\}}$ defined in \eqref{eq:Malternative} for any choice of two K\"ahler 
cone generators of $dP_n$ from Table \ref{tab:KCdPn}. We note that the 
K\"ahler cone generators of $dP_n$ are obtained by
permutations of the vectors in Table \ref{tab:KCdPn}. Most of these 
permutations simply interchange the rows and columns of the matrices  
\eqref{eq:Malternative}, which does not affect their eigenvalues. Thus,
we only have to consider matrices  \eqref{eq:Malternative} that do not differ only by such a permutation. 
We provide an efficient algorithm making use of this permutation
symmetry to generate all  matrices 
\eqref{eq:Malternative} with different sets of eigenvalues. Recall that to check 
positive-semi-definiteness for any $M_{\{i,k\}}$, it suffices to check positive-semi-definiteness for 
$\tilde{M}_{\{i,k\}}$, defined as
\begin{equation}\label{eq:tildeM}
\tilde{M}_{\{i,k\}}=x_{\{i,k\}}(v_i\cdot v_k^T+v_k\cdot v_i^T)-\eta \,.
\end{equation}

For each $\tilde{M}_{\{i,k\}}$ define $(v_i, v_k)$ as the pair of K\"ahler cone generators in its definition \eqref{eq:tildeM}. By definition of $M_{\{i,k\}}$, we have $i\neq k$ in $(v_i, v_k)$. For each $dP_n$, we define an equivalence relation on the set of all pairs $(v_i, v_k)$ and show if $(v_i, v_k)\sim (v'_i, v'_k)$ and $x_{\{i,k\}}=x'_{\{i,k\}}$, the corresponding matrices $\tilde{M}_{\{i,k\}}$ and $\tilde{M'}_{\{i,k\}}$ have the same sets of eigenvalues. 

\begin{definition}\label{Definition:EquivalenceRelation}
For each $dP_n$, let $\{(v_i, v_k)\}$, $i\neq k$, be the set of all pairs of its K\"ahler cone generators. The symmetric group $S_n$ of degree $n$ acts
on the K\"ahler cone generator $v_i \in \mathbb{Z}^{1+n}$  by permuting its last $n$ components, cf. the second column of Table \ref{tab:KCdPn}. Define an equivalence relation $\sim$ on  $\{(v_i, v_k)\}$ by $(v_i, v_k)\sim (v'_i, v'_k)$ if $(v'_i, v'_k)=(\sigma(v_i), \sigma(v_k))$, for some $\sigma \in S_n$.
\end{definition}

\begin{lemma}\label{lemma:EquivClassSameEigenvalue}
Suppose $(v_i, v_k)\sim (v'_i, v'_k)$. Let $\tilde{M}_{\{i,k\}}$ and $\tilde{M'}_{\{i,k\}}$ be the matrix defined by $(v_i, v_k)$ and $(v'_i, v'_k)$, 
respectively, with $x_{\{i,k\}}=x'_{\{i,k\}}$, in $\eqref{eq:tildeM}$. Then $\tilde{M}_{\{i,k\}}$ and $\tilde{M'}_{\{i,k\}}$ have the same set of  eigenvalues. 
\end{lemma}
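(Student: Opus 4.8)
The plan is to realize the equivalence $(v_i,v_k)\sim(v_i',v_k')$ as conjugation by a permutation matrix that commutes with $\eta$, so that $\tilde M'_{\{i,k\}}$ and $\tilde M_{\{i,k\}}$ become orthogonally similar and therefore share their characteristic polynomial.

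First I would fix the permutation $\sigma\in S_n$ supplied by Definition~\ref{Definition:EquivalenceRelation}, so that $v_i'=\sigma(v_i)$ and $v_k'=\sigma(v_k)$, and encode the action of $\sigma$ on $\mathbb{Z}^{1+n}$ as the block matrix $P_\sigma=\diag(1,Q_\sigma)$, where $Q_\sigma$ is the $n\times n$ permutation matrix associated to $\sigma$. Then, viewing the $v$'s as column vectors, $v_i'=P_\sigma v_i$ and $v_k'=P_\sigma v_k$, so that
\[
v_i'\,(v_k')^T+v_k'\,(v_i')^T=P_\sigma\bigl(v_i v_k^T+v_k v_i^T\bigr)P_\sigma^T .
\]

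The key — and essentially the only — point to check is that $P_\sigma$ commutes with $\eta=\diag(1,-1,\dots,-1)$. This holds precisely because $\eta$ has its first diagonal entry distinct and its remaining $n$ diagonal entries all equal, so permuting the last $n$ coordinates leaves it invariant: $P_\sigma\eta P_\sigma^T=\eta$, i.e.\ $P_\sigma\eta=\eta P_\sigma$. Combining this with the displayed identity and the hypothesis $x'_{\{i,k\}}=x_{\{i,k\}}$, I would then compute
\[
\tilde M'_{\{i,k\}}
=x_{\{i,k\}}P_\sigma\bigl(v_i v_k^T+v_k v_i^T\bigr)P_\sigma^T-\eta
=P_\sigma\Bigl[x_{\{i,k\}}\bigl(v_i v_k^T+v_k v_i^T\bigr)-\eta\Bigr]P_\sigma^T
=P_\sigma\,\tilde M_{\{i,k\}}\,P_\sigma^T .
\]
Since $P_\sigma$ is a permutation matrix, $P_\sigma^T=P_\sigma^{-1}$, hence $\tilde M'_{\{i,k\}}$ is similar to $\tilde M_{\{i,k\}}$; in particular the two matrices have the same characteristic polynomial and therefore the same set of eigenvalues, which is the assertion of the Lemma.

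There is no genuine obstacle here; the only thing one must be careful about is that the $S_n$-action is the restricted one of Definition~\ref{Definition:EquivalenceRelation} (permuting only the last $n$ entries), since that is exactly what makes $\eta$ invariant — permuting all $1+n$ coordinates would destroy the argument. This is also the observation that lets the subsequent algorithm test positive semi-definiteness of $\tilde M_{\{i,k\}}$ on only one representative pair $(v_i,v_k)$ per $\sim$-class of Table~\ref{tab:KCdPn}.
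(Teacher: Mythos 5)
Your proposal is correct and follows essentially the same route as the paper: realize the $S_n$-action as conjugation by a permutation matrix fixing the first coordinate, note that this matrix commutes with $\eta$, and conclude $\tilde M'_{\{i,k\}}=P\,\tilde M_{\{i,k\}}\,P^{T}$ so the two matrices are similar and share their characteristic polynomial. The only cosmetic difference is that the paper reduces to a single transposition $P_{jl}$ via WLOG, whereas you work directly with the full block permutation matrix $\diag(1,Q_\sigma)$ — both arguments are identical in substance.
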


\begin{proof}
Let $\sigma \in S_n$ so that $(v'_i, v'_k)=(\sigma(v_i), \sigma(v_k))$. Denote the permutation matrix that permutes the $j^{\text{th}}$ and $l^{\text{th}}$ rows/columns by $P_{jl}$. Since any $\sigma\in S_n$ can be written as a product of such permutation matrices, we can WLOG assume $\sigma=P_{jl}$. Then we have
\begin{align}
\tilde{M'}_{\{i,k\}} & = x_{\{i,k\}}(P_{jl} v_i v_k^T P_{jl}^T + P_{jl} v_k v_i^T P_{jl}^T)-\eta 
=  x_{\{i,k\}}(P_{jl} v_i v_k^T P_{jl}^T + P_{jl} v_k v_i^T P_{jl}^T)-P_{jl} \eta P_{jl}^T \nonumber\\
& = P_{jl} [x_{\{i,k\}}(v_i v_k^T  + v_k v_i^T )-\eta ]P_{jl}^T 
= P_{jl} \tilde{M}_{\{i,k\}} P_{jl}^T \,.
\end{align}
This implies that the characteristic polynomials of $\tilde{M}_{\{i,k\}}$ and $\tilde{M'}_{\{i,k\}}$ are the same,
\begin{align}
\text{det}\Big(\tilde{M'}_{\{i,k\}}-\lambda I\Big) & = \text{det}\Big(P_{jl} \tilde{M}_{\{i,k\}} P_{jl}^T -\lambda P_{jl} I P_{jl}^T\Big) 
= \text{det}\Big(P_{jl} \Big(\tilde{M}_{\{i,k\}} -\lambda I \Big)P_{jl}^T \Big) \nonumber\\
& = \text{det}(P_{jl}) \text{det}\Big(\tilde{M}_{\{i,k\}} -\lambda I \Big)\text{det}(P_{jl}^T) = \text{det}\Big(\tilde{M}_{\{i,k\}} -\lambda I \Big)\,.
\end{align}
\end{proof}
\vspace{-0.2cm}
Lemma \ref{lemma:EquivClassSameEigenvalue} shows that for each equivalence class $[(v_i,v_k)]$, we just need to pick any representative $(v_i,v_k)$ 
and check if there exists an $x_{\{i,k\}}\in \mathbb{Q^+}$ such that $(v_i,v_k)$ and $x_{\{i,k\}}$ defines a positive semi-definite matrix 
$\tilde{M}_{\{i,k\}}$ according to \eqref{eq:tildeM}. If such an $x_{\{i,k\}}$ exists, any $\tilde{M'}_{\{i,k\}}$ with $(v'_i, v'_k) \sim (v_i, v_k)$ will be 
automatically positive semi-definite for $x'_{\{i,k\}}=x_{\{i,k\}}$. For each $dP_n$, in order to find all different equivalence classes, we start by picking 
an arbitrary pair $(v_i,v_k)$ from Table \ref{tab:KCdPn} and carry out 
the following algorithm: 
\begin{itemize}
\item[\textbf{(1)}] Fix $v_i$ and only permute the entries of $v_k$. Indeed, if $v'_i=\sigma(v_i), v'_k=\tau(v_k)$, then $(v'_i,v'_k)\sim (v_i, \sigma^{-1}\tau(v_k))$. Let $\tau'=\sigma^{-1}\tau$, then we have $[(v'_i,v'_k)]=[(v_i,\tau'(v_k))]$.

\item[\textbf{(2)}] Only permute those entries in $v_k$ for which the corresponding entries in $v_i$ are 
different from each other. Permuting two entries in $v_k$ when the 
corresponding two entries in the fixed vector $v_i$ are the same is equivalent 
to the action of permuting these two entries for both vectors. Thus, the resulting pair of vectors $(v_i,v'_k)\sim (v_i,v_k)$.
\end{itemize}
Pick a different pair $(w_i,w_k)$ of K\"ahler cone generators from Table \ref{tab:KCdPn} and repeat (1), (2).

For example, consider $dP_8$. Suppose we begin by picking  
$v_i=(a,b,b,b,b,b,b,b,b)$ and $v_k=(s,t,u,u,v,v,v,v,v)$ from the second column 
of Table \ref{tab:KCdPn}. By (1) 
above, we can fix $v_i$ and only consider permutations in the last eight entries 
of $v_k$. By (2), however, we do not need to consider any permutation in the 
last eight entries in $v_k$, because the last eight entries in the fixed vector 
$v_i$ are the same; they are all equal to $b$. Thus, there is only one equivalence class $[(v_i,v_k)]$. From the third column of 
Table \ref{tab:KCdPn}, there are two sets of different values for 
$v_i=(a,b,b,b,b,b,b,b,b)$, and six sets of different values for 
$v_k=(s,t,u,u,v,v,v,v,v)$. Thus there will be $2\cdot 6=12$ different  
$\tilde{M}'_{\{i,k\}}$ matrices to check for positive semi-definiteness. Next pick a different pair of $(w_i,w_k)$ and repeat this process.

We obtain that the matrices  \eqref{eq:Malternative} are positive
semi-definite for any choice of two K\"ahler cone generators in
Table \ref{tab:KCdPn} and $x_{\{i,k\}}$ of the form
\beq \label{eq:xik}
	x_{\{i,k\}}=\frac{1}{a}\,\qquad\text{for} \qquad a\in\{1,2,\ldots,19\}\,.
\eeq
More precisely, for $dP_2$ and $dP_3$ all $x_{\{i,k\}}=1$, for $dP_4$ and $dP_5$ we have  $x_{\{i,k\}}=1,\,\frac{1}{2}$, for 
$dP_6$ we have $x_{\{i,k\}}=\frac{1}{a}$ with $a\in\{1,2,\ldots,4\}$, for $dP_7$ we find $x_{\{i,k\}}=\frac{1}{a}$ with $a\in\{1,2,\ldots,7\}$ and for $dP_8$ all values in \eqref{eq:xik} are assumed.

%%%%%%%%%%%%%%%%%%%%%%%%%%%%%%%%%%%%%%%%%%%%%%%%%%%%%%%%%%%%%%%%%%%%%%%%%%%%%
\section{Geometric Data of almost Fano Twofolds for computing Explicit Bounds}
\label{app:explicitdata}
%%%%%%%%%%%%%%%%%%%%%%%%%%%%%%%%%%%%%%%%%%%%%%%%%%%%%%%%%%%%%%%%%%%%%%%%%%%%%

In this appendix, we summarize the geometric data of Hirzebruch
surfaces $\mathbb{F}_k$, $k=0,1,2$, the del Pezzo surfaces 
$dP_n$, $n=2,\ldots, 8$, and the toric varieties associated to the 16 reflexive 
polytopes that is necessary to explicitly compute the various
bounds derived in this paper.

We begin with the bases $\mathbb{F}_k$ and $dP_n$.
The following results in Table \ref{tab:cover12} are 
derived employing \eqref{eq:T_iEvaluatedFk}, \eqref{eq:T_iEvaluateddPn}, the two 
covers of the K\"ahler cones of $dP_n$ constructed in Appendix 
\ref{app:MoriKaehlerFano}, \eqref{eq:Ac1} and the values of $x_{\{i,k\}}$ listed
below \eqref{eq:xik}. 

First, we list the maximal and minimal values of 
$x_{\{i,k\}}$ and $T_i$ for the bases $\mathbb{F}_k$ and $dP_2$ that have
a simplicial K\"ahler cone. For the non-simplicial K\"ahler cones, we obtain
different results for the two different covers of their K\"ahler cones. We note
that for both cover 1 and 2 the values below \eqref{eq:xik} apply. Indeed, this 
is precisely what we get in the second and third column under cover 2. However, 
for cover 1, these numbers have to be multiplied by appropriate $A_n$ in 
\eqref{eq:Ac1}. Indeed, by \eqref{eq:xiK} we have $x_{\{i,K\}}=A_n(\langle x_{\{i,k\}}\rangle_j)^{-1}$. By 
\eqref{eq:Ac1}, we have $A_n\leq 1$ for $n\leq 6$, i.e.~the minimum value of 
$x_{\{i,K\}}$ is bounded by $A_n\cdot\text{min}(x_{\{i,K\}})$, but the maximum 
is unaffected, as indicated in the first four rows of the second and third 
column in Table \ref{tab:cover12} under cover 1. For $dP_7$ and $dP_8$, 
we have  $A_n>1$, thus $x_{\{i,K\}}\leq A_n\text{max}(x_{\{i,k\}})=A_n$ and 
the minimum is unaffected, as displayed in the last two rows of the second and 
third column in Table \ref{tab:cover12} for cover 1. 

\begin{table}[H]
\begin{center}
\begin{tabular}{|c|c|c|c|c|}
\hline 
& $\text{max}(x_{\{i,k\}})$ & $\text{min}(x_{\{i,k\}})$ & $\text{max}(T_i)$\rule{0pt}{12pt} &
 $\text{min}(T_{i})$\\
 \hline
 $\mathbb{F}_k$ &  $1$&  $1$ & $24+12k$ & $24$\rule{0pt}{12pt} \\
 $dP_2$ &  $1$&  $1$ & $36$ & $24$ \\
 \hline
 \hline
 &\multicolumn{4}{c|}{Cover 1 of K\"ahler cone of $dP_n$\rule{0pt}{12pt}} \\
 \hline
$dP_3$ &  $1$& $A_3\leq$ & $72$ & $24$, $36$\rule{0pt}{12pt} \\
$dP_4$ & $1$& $\frac{1}{2} A_4\leq$ & $60$ & $24$, $36$\rule{0pt}{12pt} \\
$dP_5$ & $1$& $\frac{1}{2} A_5\leq$ & $48$ & $24$, $36$\rule{0pt}{12pt} \\
$dP_6$ & $1$& $\frac{1}{4}A_6\leq$ & $36$ & $24$, $36$\rule{0pt}{12pt} \\
$dP_7$ & $A_7\geq$& $\frac{1}{7}$ & $36$ & $24$\rule{0pt}{12pt}\\
$dP_8$ & $A_8\geq$& $\frac{1}{19}$ & $36$ & $12$\rule{0pt}{12pt}\\ 
\hline
\hline
&\multicolumn{4}{c|}{Cover 2 of K\"ahler cone of $dP_n$\rule{0pt}{12pt}} \\
 \hline
$dP_3$ &  $1$ & $1$ & $36$ & $24$, $36$\rule{0pt}{12pt}\\
$dP_4$ &  $1$ & $\frac{1}{2}$ & $36$ & $24$, $36$\rule{0pt}{12pt}\\
$dP_5$ &  $1$ & $\frac{1}{2}$ & $36$ & $24$, $36$\rule{0pt}{12pt}\\
$dP_6$ &  $1$ & $\frac{1}{4}$ & $36$ & $24$, $36$\rule{0pt}{12pt}\\
$dP_7$ &  $1$ & $\frac{1}{7}$ & $36$ & $24$, $36$\rule{0pt}{12pt}\\
$dP_8$ &  $1$ & $\frac{1}{19}$ & $36$ & $24$, $36$\rule{0pt}{12pt}\\ \hline
\end{tabular}
\caption{Key geometrical data for the computation of the explicit bounds
derived in the proof.}
\label{tab:cover12}
\end{center}
\end{table}

In addition, without 
knowing every simplicial subcone in the two covers explicitly, we can not
determine the explicit value $\text{min}(T_i)$ for both covers. Therefore,
depending on the chosen subcone, employing \eqref{eq:T_iEvaluateddPn}, we either
obtain $24$ or $36$ as indicated in the last column of Table \ref{tab:cover12}.
However, in the case of cover 1 we have $T_K=24$, $12$ for $dP_7$ and $dP_8$, 
respectively. Since by construction, the first Chern class $c_1(dP_n)$ is
in every subcone, we know that $\text{min}(T_i)=T_K=24$, $12$ for $dP_7$ and 
$dP_8$, respectively.

Finally, in Table \ref{table:toric surface} we display
the relevant topological data of the toric varieties constructed from the 16 
reflexive two-dimensional polytopes which is relevant to
our finiteness proof in section \ref{sec:generic proof}. We confirmed that the first Chern class 
$c_1(B)$ is inside the K\"ahler cone in all these cases, i.e.~Cover 1 constructed in Appendix \ref{app:MoriKaehlerFano} 
exists for these  non-simplicial K\"ahler cones. As explained there, in this cover the conditions (2) and (3) listed at the beginning of Section \ref{sec:generic proof}
are obeyed. We also checked that the matrices \eqref{eq:Malternative} are all positive semi-definite for $x_{\{i,k\}}$ of the form $x_{\{i,k\}}=\frac{1}{a}$ with $a\in\{1,\ldots,6\}$, i.e.~condition (1) listed in Section \ref{sec:generic proof} is also satisfied. 

\begin{table}
\centering
\scalebox{.8}{\begin{tabular}{c|c|c|c|c}
Polytope& $\int c_2$ & $\int c_1^2$ & $|\text{K.C. Gens}|$ & List of $T_{i}=12\,\int_{D_i} c_1$ \\ \hline 
$ 2 $ & $ 4 $ & $ 8 $ & $ 2 $ & $ (24, 24) $ \\
$ 3 $ & $ 4 $ & $ 8 $ & $ 2 $ & $ (24, 36) $ \\
$ 4 $ & $ 4 $ & $ 8 $ & $ 2 $ & $ (24, 48) $ \\
$ 5 $ & $ 5 $ & $ 7 $ & $ 3 $ & $ (24, 24, 36) $ \\
$ 6 $ & $ 5 $ & $ 7 $ & $ 3 $ & $ (24, 36, 48) $ \\
$ 7 $ & $ 6 $ & $ 6 $ & $ 5 $ & $ (24, 24, 24, 36, 36) $ \\
$ 8 $ & $ 6 $ & $ 6 $ & $ 4 $ & $ (24, 36, 24, 48) $ \\
$ 9 $ & $ 6 $ & $ 6 $ & $ 5 $ & $ (24, 36, 24, 48, 36) $ \\
$ 10 $ & $ 6 $ & $ 6 $ & $ 4 $ & $ (24, 48, 72, 36) $ \\
$ 11 $ & $ 7 $ & $ 5 $ & $ 7 $ & $ (24, 36, 48, 24, 36, 72, 48) $ \\
$ 12 $ & $ 7 $ & $ 5 $ & $ 8 $ & $ (24, 24, 36, 36, 48, 48, 24, 36) $ \\
$ 13 $ & $ 8 $ & $ 4 $ & $ 10 $ & $ (24, 48, 36, 72, 48, 36, 24, 72, 48, 48) $ \\
$ 14 $ & $ 8 $ & $ 4 $ & $ 13 $ & $ (24, 24, 36, 48, 36, 48, 24, 72, 36, 48, 72, 48, 36) $ \\
$ 15 $ & $ 8 $ & $ 4 $ & $ 12 $ & $ (24, 36, 24, 48, 36, 48, 36, 48, 48, 24, 24, 36) $ \\
$ 16 $ & $ 9 $ & $ 3 $ & $ 21 $ & $ (24, 24, 36, 72, 48, 36, 48, 36, 48, 36, 48, 72, 72, 72, 36, 48, 24, 36, 72, 48, 72) $ \\
\end{tabular}}
\caption{Displayed are some of the relevant data for the smooth almost Fano 
toric surfaces obtained from fine star triangulations of the two-dimensional 
reflexive polytopes in Figure \ref{fig:2dpoly}. }
\label{table:toric surface}
\end{table}

\section{An analytic proof of positive semi-definiteness of the $M_{\{i,k\}}$-Matrices}
\label{app:CveticTheorem}

In this section we provide an alternative general proof of positive 
semi-definiteness of the $M_{\{i,k\}}$-matrices, in comparison to 
the numerical proof given in Appendix \ref{app:MoriKaehlerFano} for 
specific $B$. 

We recall that to check 
positive-semi-definiteness for any $M_{\{i,k\}}$ defined in 
\eqref{eq:Malternative} it suffices to check positive 
semi-definiteness for the matrix $\tilde{M}_{\{i,k\}}$ defined in 
\eqref{eq:tildeM}. The advantage of the following general proof is 
that it 
predicts a precise value of $x_{\{i,k\}}$ for which each 
$M_{\{i,k\}}$ is positive semi-definite. Thus, we do not have to 
search for the existence of such an $x_{\{i,k\}}$ numerically. To be 
precise, we will show that we can always choose
 \begin{equation} \label{eq:xikStressEnergy}
x_{\{i,k\}}=\frac{1}{C_{ik}}
\end{equation}
to make each $M_{\{i,k\}}$ positive semi-definite. We note, however, 
that  such a choice may not produce the best bounds (since the 
various bounds derived depend on $x_{\{i,k\}}$). Hence, in order to
minimize the various bounds we may still want to numerically  
find alternative values for $x_{\{i,k\}}$, for which the matrices 
\eqref{eq:Malternative}, \eqref{eq:tildeM} are  also
positive semi-definite.

The correctness of the value \eqref{eq:xikStressEnergy} can
be motivated physically as follows. Consider a system of
two particles with masses $m=1$ with the Lorentz-invariant 
Lagrangian
\beq\label{lagr}
L^{i,k}=p^i\cdot p^{k}\,,\qquad i\neq k\,,
\eeq
where $p^i$ for every $i,k=1,\ldots, N$ are the particle momenta.
Due to space-time invariance, the  respective Noether currents  are 
stress-energy tensors,
\beq
T^{i,k}_{\mu,\nu}=p_\mu^i\, p_\nu^{k}+p_\mu^{k}\, p_\nu^i- L^{i,k}\eta_{\mu\nu}    \,,\qquad \ i\ne k\, \,.
\label{stress}
\eeq
With the identification $\frac{1}{x^{i,k}}\equiv L^{i,k}$, these  stress-energy tensors are precisely  the matrices
\eqref{eq:tildeM} multiplied by $\frac{1}{x^{i,k}}$. By the positive 
energy theorem in
general relativity the $T^{i,k}_{\mu,\nu}$ are 
positive semi-definite for every chosen pair of
time- or light-like $(n+1)$-vectors $p^i$, $p^k$.

In the following, we prove explicitly that the the matrices in
\eqref{eq:tildeM}, i.e.~the stress energy tensors 
\eqref{stress}, are indeed positive semi-definite for 
time- or light-like $(n+1)$-vectors $p^i$, $p^k$.
To this end, we will need the following general fact:
\begin{lemma}\label{lem:M+A^TMA+}
For any $n\times n$ matrix $M$ and any invertible $n\times n$ matrix $A$, $M$ is positive semi-definite if and only if $A^TMA$ is positive semi-definite.
\end{lemma}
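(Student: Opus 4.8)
The plan is to prove the standard linear-algebra fact that congruence by an invertible matrix preserves positive semi-definiteness, in both directions. First I would recall the definition: an $n\times n$ matrix $M$ is positive semi-definite if $v^T M v \geq 0$ for every vector $v\in\mathbb{R}^n$. The key observation is that applying the invertible matrix $A$ simply reparametrizes the set of test vectors: since $A$ is invertible, the map $v\mapsto Av$ is a bijection of $\mathbb{R}^n$ onto itself.

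For the forward direction, suppose $M$ is positive semi-definite. Then for any $v\in\mathbb{R}^n$, set $w=Av$, and compute
\begin{equation}
v^T(A^TMA)v = (Av)^T M (Av) = w^T M w \geq 0\,,
\end{equation}
using positive semi-definiteness of $M$ applied to the vector $w$. Hence $A^TMA$ is positive semi-definite. Note this direction does not even require $A$ to be invertible.

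For the converse, suppose $A^TMA$ is positive semi-definite. Given an arbitrary $u\in\mathbb{R}^n$, since $A$ is invertible we may write $u=Av$ with $v=A^{-1}u$. Then
\begin{equation}
u^T M u = (Av)^T M (Av) = v^T(A^TMA)v \geq 0\,,
\end{equation}
so $M$ is positive semi-definite. This completes the argument.

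There is no real obstacle here — the statement is elementary and the only subtlety is remembering to use invertibility of $A$ precisely where it is needed (the converse direction, to guarantee that every $u$ arises as $Av$ for some $v$). In the application that follows in this appendix, $A$ will typically be $\eta$ or a similar invertible matrix, so one only needs this clean equivalence to reduce checking positive semi-definiteness of $M_{\{i,k\}}$ in \eqref{eq:Malternative} to checking it for $\tilde M_{\{i,k\}}$ in \eqref{eq:tildeM}.
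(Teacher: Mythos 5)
Your proof is correct. Note that the paper itself states this lemma as a ``general fact'' and gives no proof of it, so there is nothing to contrast with: your quadratic-form argument --- the forward direction by the substitution $w=Av$, the converse using invertibility of $A$ so that every test vector $u$ arises as $Av$ --- is the standard and complete way to establish it, and it correctly isolates where invertibility is actually needed.
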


Using of Lemma \ref{lem:M+A^TMA+}, we can prove positive semi-definiteness of $\tilde{M}_{\{i,k\}}$ by instead proving positive semi-definiteness of $A^T\tilde{M}_{\{i,k\}}A$, where $A$ is a suitably chosen invertible matrix so that $A^T\tilde{M}_{\{i,k\}}A$ takes a simpler form than $\tilde{M}_{\{i,k\}}$. We will discuss how to choose $A$ shortly. First, recall from Table \ref{tab:KCdPn} that each K\"ahler cone generator $v_i$ is either time-like or light-like with Minkowski inner product $\eta(v_i,v_i)$ either 1 or 0, and all the K\"ahler cone generators belong to the same light cone (the future-directed light cone). We choose $A$ as follows:\\\

\noindent \textbf{Case 1.} Suppose $\tilde{M}_{\{i,k\}}$, defined in 
\eqref{eq:tildeM}, has at least one of its $v_i, v_k$ 
with Minkowski inner product 1. WLOG say $\eta(v_i,v_i)=1$. Then 
there is a matrix $A\in O(1,n)$ such that 
\begin{equation}\label{eq:ADefinition}
A^Tv_i=(1,0,...,0)^T\,.
\end{equation}
We note that this is just a Lorentz transformation to the rest 
frame.
Pick this matrix as the invertible matrix $A$ in Lemma \ref{lem:M+A^TMA+}.\\\

\noindent \textbf{Case 2.} Suppose $\tilde{M}_{\{i,k\}}$, defined in 
\eqref{eq:tildeM}, has both of its $v_i, v_k$ with 
Minkowski inner product 0. Then there exists a Lorentz 
transformation $A'\in O(1,n)$ such that 

\begin{equation}\label{eq:A'Definition}
A^{'T}v_i=(a_0, a_0, 0,...,0)^T, \qquad v_k^TA'=(b_0, b_1, b_2,0,...,0)\,,
\end{equation}
where $a_0, b_0, b_1, b_2 \in \mathbb{Q}$ and $b_0^2-b_1^2-b_2^2=0.$ Pick $A'$ as the invertible matrix in Lemma \ref{lem:M+A^TMA+}.\\\

The above mentioned matrices in $O(1,n)$ exist because of the following general lemma:

\begin{lemma}\label{lem:ExistenceOfAinLorentzGroup}
For any vector $v \in \mathbb{R}^{1,n}$ which Minkowski inner product $\eta(v,v)=1$, there exists a matrix $A\in O(1,n)$ such that $A^Tv=(1,0,...,0)^T$. For any pair of vector $v_i, v_k \in \mathbb{R}^{1,n}$ both with Minkowski inner product $\eta(v_i,v_i)=\eta(v_k,v_k)=0$, there exists a matrix $A'\in O(1,n)$ such that $A^{'T}v_i=(a_0, a_0, 0,...,0)^T$, $v_k^TA'=(b_0, b_1, b_2,0,...,0)$ where $a_0, b_0, b_1, b_2 \in \mathbb{R}$ and $b_0^2-b_1^2-b_2^2=0.$
\end{lemma}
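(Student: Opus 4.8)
The plan is to prove the two assertions separately, in both cases by exhibiting the required Lorentz transformation as a composition of an explicit boost and spatial rotations. For the first assertion, let $v\in\mathbb{R}^{1,n}$ with $\eta(v,v)=1$, and write $v=(v^0,\vec{v})$ with $\vec v\in\mathbb{R}^n$. If $\vec v=0$ then $v=(\pm 1,0,\dots,0)$ and $A=\pm\,\mathrm{diag}(1,1,\dots,1)$ (or a reflection) already works; otherwise first apply a spatial rotation $R\in SO(n)\subset O(1,n)$ so that $R^{T}\vec v = (|\vec v|,0,\dots,0)$, reducing to the $(1+1)$-dimensional problem for the vector $(v^0,|\vec v|)$ with $(v^0)^2-|\vec v|^2=1$. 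Then the boost $A_{\mathrm{boost}}$ with rapidity $\theta$ determined by $\cosh\theta=v^0$, $\sinh\theta=|\vec v|$ maps $(v^0,|\vec v|)$ to $(1,0)$; composing, $A=R\,A_{\mathrm{boost}}$ (arranged so that $A^{T}v=(1,0,\dots,0)^{T}$) does the job. One should note $v^0\ge 1>0$ is automatic for a unit time-like vector on the chosen light cone, so $\theta$ is well defined; this handles the sign/future-directedness subtlety.

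For the second assertion, take $v_i,v_k$ both light-like. The key geometric observation is that two null vectors span a subspace of Minkowski type at most $(1,1)$ together with a possible null direction — concretely, $\mathrm{span}(v_i,v_k)$ is contained in some $(1+2)$-dimensional Minkowski subspace $\mathbb{R}^{1,2}\subset\mathbb{R}^{1,n}$. So first apply a spatial rotation $R\in SO(n)$ taking the relevant spatial directions of $v_i$ and $v_k$ into the first two spatial coordinates, i.e.\ arranging that both $R^{T}v_i$ and $v_k^{T}R$ have vanishing components in slots $4,\dots,n+1$; this reduces the statement to $\mathbb{R}^{1,2}$. Within $\mathbb{R}^{1,2}$, normalize by a boost so that $v_i$ becomes proportional to the standard null vector $(a_0,a_0,0)$; the residual little-group freedom — boosts and rotations in $\mathbb{R}^{1,2}$ fixing the line through $(1,1,0)$ — can then be used to bring $v_k$ to the form $(b_0,b_1,b_2,0,\dots,0)$ with $b_0^2-b_1^2-b_2^2=0$, which is automatic since $A'\in O(1,n)$ preserves $\eta$ and $\eta(v_k,v_k)=0$. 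Collecting the rotations and boosts into a single $A'\in O(1,n)$ completes the proof.

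The main obstacle is the bookkeeping in Case 2: one must be careful that the first normalization (fixing $v_i\mapsto(a_0,a_0,0,\dots,0)$) does not force $v_k$ out of the $\mathbb{R}^{1,2}$ block, and that the stabilizer of a null line in $O(1,2)$ is rich enough to clean up $v_k$ to the claimed three-parameter form — this is where the explicit description of the little group (the ISO$(1)$-type subgroup of $O(1,2)$) is used. Everything else — the rotation reductions, the $(1+1)$ boost, the final composition — is routine linear algebra, and the rational refinement stated in the body of the paper (Case 1 and Case 2 with $a_0,b_0,b_1,b_2\in\mathbb{Q}$ for K\"ahler cone generators in $\mathbb{Z}^{1+n}$) follows because all the transformations involved can be chosen with rational, indeed algebraic-of-controlled-degree, entries when the input vectors are integral.
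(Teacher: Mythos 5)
Your proposal is correct in substance but follows a different route from the paper. The paper's proof is a Gram--Schmidt construction: for the first assertion it completes $v$ itself to a Minkowski-orthonormal basis $\{e_1=v,e_2,\ldots,e_{n+1}\}$, assembles these as columns of $B$ and takes $A=\eta B$, so no boost or case distinction on the sign of $v^0$ is ever needed; for the second assertion it uses only \emph{spatial} (Euclidean) rotations, first aligning the spatial part of $v_i$ with the first spatial axis (which already gives $(a_0,a_0,0,\ldots,0)$ because $|\vec v_i|=a_0$ for a null vector) and then a second rotation fixing the first two coordinates to compress the residual spatial part of $v_k$ into one slot. Your rotation-plus-boost decomposition buys a more familiar ``go to the rest frame'' picture in Case 1, at the cost of the orientation caveat you noticed (for $v^0<0$ one must compose with $-\unit\in O(1,n)$; the Gram--Schmidt route avoids this, though in the application all K\"ahler cone generators are future-directed so it is immaterial). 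In Case 2 your plan is heavier than necessary: since $a_0$ is unconstrained, no boost normalization of $v_i$ is needed at all, and aligning $v_i$'s spatial part inside the $\mathbb{R}^{1,2}$ block is done by a rotation rather than a boost; once both vectors sit in the block with $v_i$ aligned, $v_k$ is automatically of the form $(b_0,b_1,b_2,0,\ldots,0)$ and $b_0^2-b_1^2-b_2^2=0$ follows from invariance of $\eta$, so the little-group discussion is superfluous rather than wrong. One genuine caveat on your closing remark: the rational refinement does not follow in general, since $b_2=\sqrt{b_0^2-b_1^2}$ need not be rational for integral null inputs (e.g.\ $v_i=(1,1,0,0)$, $v_k=(3,1,2,2)$ gives $b_2=2\sqrt{2}$); the lemma as stated only claims $a_0,b_0,b_1,b_2\in\mathbb{R}$, and the subsequent eigenvalue computation uses only the relations $b_0^2-b_1^2-b_2^2=0$ and $C_{ik}=a_0(b_0-b_1)$, so nothing downstream is affected.
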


\begin{proof}
First consider any $v \in \mathbb{R}^{1,n}$ with Minkowski inner product $\eta(v,v)=1$. Since $\eta(v,v)=1\neq 0$, we can carry out the Gram-Schmidt process starting with $v$ as the first vector to generate an orthonormal basis $\{e_1=v, e_2,...,e_{n+1}\}$ for $\mathbb{R}^{1,n}$. Define the $(1+n)\times (1+n)$ matrix $B$ whose $i$-th column is $e_i$, and define $A=\eta B$. Then $A^Tv=(1,0,...,0)^T$ by orthonormality. Both $B$ and $\eta$ are in $O(1,n)$ because each has its columns orthonormal to one another under the $(1,n)$ Minkowski metric. Thus $A=\eta B\in O(1,n)$.

Next consider any pair of vector $v_i, v_k \in \mathbb{R}^{1,n}$, both with Minkowski inner product $\eta(v_i,v_i)=\eta(v_k,v_k)=0$. If both are equal to the trivial vector $(0,...,0)^T$, let $A'$ be any matrix in $O(1,n)$ and we are done with $a_0=b_0=b_1=b_2=0$. Thus assume at least one of them, WLOG say $v_i$, is not the trivial vector. Let $v_i=(a_0,\mathbf{a})^T$ where $\mathbf{a}=(a_1,...,a_n)^T\in \mathbb{R}^{n}$. Since $\eta(v_i,v_i)=0 $ and $v_i$ is not the trivial vector, the Euclidean norm of $\mathbf{a}$, $|\mathbf{a}|=a_0\neq 0$ ($a_0$ is positive because $v_i$ is in the positive light cone). We can thus use $\mathbf{a}/|\mathbf{a}|$ as the first vector in the Gram-Schmidt process on $\mathbb{R}^{n}$ to generate an orthonormal basis $\{e_1=\mathbf{a}/|\mathbf{a}|, e_2,...,e_n\}$ for $\mathbb{R}^{n}$. Define the $n\times n$ matrix $B'$ whose $i$-th column is $e_i$. Then define the $(1+n)\times (1+n)$ block diagonal matrix $B^{''}$ by 

\begin{equation}
B^{''}=\begin{pmatrix}
	1 & 0 \\
	0 & B'
\end{pmatrix}\,.
\end{equation}
$B^{''} \in O(1,n)$ because its columns are orthonormal. Also $B^{''T}v_i=(a_0, a_0, 0,...,0)^T$. Let $v_k^TB^{''}=(b_0,b_1,\mathbf{b'})$ where $\mathbf{b'}=(b'_2,...,b'_n) \in \mathbb{R}^{n-1}$. If $\mathbf{b'}$ is the trivial vector in $\mathbb{R}^{n-1}$, we are done by setting $A'=B^{''}$ and $b_2=0$. If $\mathbf{b'}$ is not the trivial vector, we can again use $\mathbf{b'}/|\mathbf{b'}|$ as the first vector in the Gram-Schmidt process on $\mathbb{R}^{n-1}$ to generate an orthonormal basis $\{e_1=\mathbf{b'}/|\mathbf{b'}|, e_2,...,e_{n-1}\}$ for $\mathbb{R}^{n-1}$. Define the $(n-1)\times (n-1)$ matrix $C'$ whose $i$-th column is $e_i$. Then define the $(1+n)\times (1+n)$ block diagonal matrix  $C^{''}$ by 

\begin{equation}
C^{''}=\begin{pmatrix}
	1 & 0 & 0 \\
	0 & 1 & 0 \\
	0 & 0 & C'
\end{pmatrix}\,.
\end{equation}
$C^{''} \in O(1,n)$ because its columns are orthonormal. Let $A'=B^{''}C^{''}$. $A' \in O(1,n)$ because $B^{''},C^{''}$ are. We also have $A^{'T}v_i=(a_0, a_0, 0,...,0)^T$, $v_k^TA'=(b_0, b_1, b_2,0,...,0)$ where $b_2=|\mathbf{b'}|$. Notice that $b_0^2-b_1^2-b_2^2=\eta(A^{'T}v_k,A^{'T}v_k)=\eta(v_k,v_k)=0$, where in the second equality we used the facts that $O(1,n)$ is closed under transposition, so $A^{'T}\in O(1,n)$, and that the Lorentz group $O(1,n)$ preserves $\eta(\cdot , \cdot)$.
\end{proof}

Before justifying the choice $x_{\{i,k\}}=1/C_{ik}$, we need to show $C_{ik}\neq 0$ for $i\neq k$ (by definition we always have $i\neq k$ in $x_{\{i,k\}}$ and $M_{\{i,k\}}$). Also recall that in \eqref{M}, we require $x_{\{i,k\}}\in \mathbb{Q}^+$. Thus a prerequisite for the choice $x_{\{i,k\}}=1/C_{ik}$ to make sense is that $C_{ik}>0$ for $i\neq k$ ($C_{ik}$ is already an integer since it is an intersection number). We have the following lemma:

\begin{lemma}
$C_{ik}\geq 0$. Furthermore, $C_{ik}> 0$ if $i\neq k$; $C_{ii}= 0$ if and only if $v_i$ is lightlike; i.e. $\eta(v_i,v_i)=0$.
\end{lemma}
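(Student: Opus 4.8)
The claim concerns the pairing $C_{ik}=\mathcal{K}_{0ik}=\eta(v_i,v_k)$ on the K\"ahler cone generators $v_i$ of $dP_n$ (extended to include $c_1(dP_n)$ where relevant), and it has three parts: $C_{ik}\geq 0$ always; $C_{ik}>0$ when $i\neq k$; and $C_{ii}=0$ iff $v_i$ is lightlike. The plan is to exploit property (1) at the beginning of Section \ref{sec:generic proof}: every K\"ahler cone generator (in the chosen subcone) is time- or lightlike and all lie in the \emph{same} light cone, say the future-directed one. So I would first record the elementary linear-algebra fact about Minkowski space $\mathbb{R}^{1,n}$: if $u,w$ are both nonzero and each is future-directed causal (i.e. $\eta(u,u)\geq 0$, $u^0>0$, and likewise for $w$), then $\eta(u,w)\geq 0$, with equality iff both $u$ and $w$ are lightlike and proportional. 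This is proved by writing $\eta(u,w)=u^0 w^0-\mathbf{u}\cdot\mathbf{w}$ and using Cauchy--Schwarz on the spatial parts together with $|\mathbf{u}|\leq u^0$, $|\mathbf{w}|\leq w^0$; the inequality $\mathbf{u}\cdot\mathbf{w}\leq|\mathbf{u}||\mathbf{w}|\leq u^0 w^0$ gives $\eta(u,w)\geq 0$, and tracing the equality cases of the two Cauchy--Schwarz steps pins down when equality holds.

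Applying this with $u=v_i$, $w=v_k$ immediately gives $C_{ik}=\eta(v_i,v_k)\geq 0$ in all cases, which is the first assertion. For the case $i=k$: $C_{ii}=\eta(v_i,v_i)$ is $0$ or $1$ by the fifth column of Table \ref{tab:KCdPn} (equivalently by the defining conditions \eqref{eq:KaehlerConedPn}), so $C_{ii}=0$ precisely when $v_i$ is lightlike and $C_{ii}=1$ when timelike; this is the third assertion and requires nothing beyond the classification already established.

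For the second assertion, $C_{ik}>0$ when $i\neq k$, the content is ruling out the equality case $\eta(v_i,v_k)=0$ for distinct generators. By the equality analysis above, $\eta(v_i,v_k)=0$ forces both $v_i$ and $v_k$ lightlike and $v_i=\lambda v_k$ for some $\lambda>0$ (same light cone). But the $v_i$ are \emph{primitive} integral vectors spanning distinct extremal rays of the K\"ahler cone — distinct rows/entries of Table \ref{tab:KCdPn} or \eqref{eq:simplicialKCdPn} — so no two of them are positive multiples of one another, and $\lambda=1$ would make $v_i=v_k$, contradicting $i\neq k$. Hence the equality case cannot occur and $C_{ik}>0$. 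I would phrase this using the geometric description of the generators rather than a case check: distinct generators define distinct rays, so they cannot be proportional.

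The only mild subtlety — and the step I would be most careful about — is making the equality analysis of the Minkowski Cauchy--Schwarz airtight, in particular handling the mixed timelike/lightlike sub-case: if $v_i$ is timelike then $|\mathbf{u}|<u^0$ strictly, so $\eta(v_i,v_k)\geq u^0 w^0-|\mathbf{u}||\mathbf{w}|>0$ outright (using $|\mathbf{w}|\leq w^0$ and $w^0>0$), and the proportionality discussion is only needed when both are lightlike. Once that trichotomy (timelike--timelike, timelike--lightlike, lightlike--lightlike) is laid out cleanly, all three statements of the lemma drop out with no further computation. Everything used here — property (1), the list of generators in Table \ref{tab:KCdPn} and \eqref{eq:simplicialKCdPn}, the values of $\eta(v_i,v_i)$, and the identification $C_{ik}=\eta(v_i,v_k)$ from \eqref{eq:Malternative} — is already available in the excerpt.
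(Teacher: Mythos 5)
Your proof is correct and follows essentially the same route as the paper: both use that the generators are nonzero causal vectors in a common light cone, so $\eta(v_i,v_k)\geq 0$ with equality only for proportional lightlike vectors, which is excluded for distinct generators; the paper merely asserts this Minkowski-signature fact while you spell out the Cauchy--Schwarz derivation and the equality cases.
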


\begin{proof}
Recall we have 
\begin{equation}\label{eq:CikDefinition}
C_{ik}=\eta(v_i,v_k)\,.
\end{equation}
Also, by Table \ref{tab:KCdPn}, all the K\"ahler cone generators 
$v_i,v_k$ are either time-like or light-like vectors belonging to 
the same light cone. In addition, of course neither of them is the 
trivial vector $\mathbf{0}$, because they generate the K\"ahler 
cone. This means all their inner products are non-negative, i.e. 
$C_{ik}=\eta(v_i,v_k) \geq 0$, where equality $C_{ik}=\eta(v_i,v_k) = 0$ holds only when $v_i$ and $v_k$ are parallel light-like vectors. This implies that $v_i$ and $v_k$ are not independent, so they must be the same K\"ahler cone generator $v_i=v_k$.
\end{proof}

With this, we can prove the following theorem:

\begin{theorem}
Let $x_{\{i,k\}}=1/C_{ik}$. Then $M_{\{i,k\}}$ is positive semi-definite.
\end{theorem}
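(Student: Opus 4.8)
The plan is to apply Lemma~\ref{lem:M+A^TMA+} twice—first with $A=\eta$, to pass from $M_{\{i,k\}}$ to the matrix $\tilde M_{\{i,k\}}$ of \eqref{eq:tildeM}, and then with the Lorentz matrix supplied by Lemma~\ref{lem:ExistenceOfAinLorentzGroup}—so as to reduce the claim to positive semi-definiteness of an explicit matrix, read off by a Schur-complement argument together with the light-cone relations among the K\"ahler cone generators. Three facts will be used throughout: for $A\in O(1,n)$ one has $A^T\eta A=\eta$ and hence also $A\eta A^T=\eta$, so $\eta(A^Tv,A^Tv')=\eta(v,v')$; by the Lemma preceding the Theorem $C_{ik}=\eta(v_i,v_k)>0$ for $i\neq k$, so $x_{\{i,k\}}=1/C_{ik}$ is a well-defined positive rational; and the matrices $A$, $A'$ produced in Lemma~\ref{lem:ExistenceOfAinLorentzGroup} are purely spatial—block diagonal with leading entry $1$—hence preserve the time orientation of the common future light cone containing all the $v_i$.

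\textbf{Case 1:} one of $v_i,v_k$, say $v_i$, is time-like, $\eta(v_i,v_i)=1$. I would pick $A\in O(1,n)$ with $A^Tv_i=(1,0,\dots,0)^T$ and set $w:=A^Tv_k=(w_0,\dots,w_n)^T$; then $w_0=\eta(A^Tv_i,A^Tv_k)=\eta(v_i,v_k)=C_{ik}=1/x_{\{i,k\}}$. A direct computation gives $A^T\tilde M_{\{i,k\}}A=\bigl(\begin{smallmatrix}1 & \mathbf u^T\\ \mathbf u & I_n\end{smallmatrix}\bigr)$ with $\mathbf u=(w_1,\dots,w_n)^T/w_0$. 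Since the bottom-right block $I_n$ is positive definite, this matrix is positive semi-definite iff its Schur complement $1-|\mathbf u|^2$ is nonnegative, i.e.\ iff $w_1^2+\cdots+w_n^2\le w_0^2$; and indeed $w_0^2-(w_1^2+\cdots+w_n^2)=\eta(w,w)=\eta(v_k,v_k)\ge 0$ because $v_k$ is time- or light-like. This disposes of Case 1.

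\textbf{Case 2:} both $v_i,v_k$ are light-like. Here I would use the $A'\in O(1,n)$ of Lemma~\ref{lem:ExistenceOfAinLorentzGroup} with $A^{'T}v_i=(a_0,a_0,0,\dots,0)^T$ and $A^{'T}v_k=(b_0,b_1,b_2,0,\dots,0)^T$, $b_0^2-b_1^2-b_2^2=0$. Then $C_{ik}=\eta(A^{'T}v_i,A^{'T}v_k)=a_0(b_0-b_1)$, and one records the sign facts $a_0>0$, $b_0>0$ (time orientation), $b_0\ge|b_1|$ (from $b_0^2=b_1^2+b_2^2$), hence $b_0-b_1>0$—strictly, since $C_{ik}>0$. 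Expanding $A^{'T}\tilde M_{\{i,k\}}A'=x_{\{i,k\}}\bigl((A^{'T}v_i)(A^{'T}v_k)^T+(A^{'T}v_k)(A^{'T}v_i)^T\bigr)-\eta$ with $x_{\{i,k\}}=1/(a_0(b_0-b_1))$ and simplifying, everything outside the first three coordinates is the identity, while the $3\times 3$ block on coordinates $0,1,2$ becomes $\bigl(\begin{smallmatrix}s & s & t\\ s & s & t\\ t & t & 1\end{smallmatrix}\bigr)$ with $s=(b_0+b_1)/(b_0-b_1)\ge 0$ and $t=b_2/(b_0-b_1)$. Its quadratic form on $(x,y,z)$ equals $(t(x+y)+z)^2+(s-t^2)(x+y)^2$, and the light-cone relation gives $s-t^2=\bigl((b_0+b_1)(b_0-b_1)-b_2^2\bigr)/(b_0-b_1)^2=(b_0^2-b_1^2-b_2^2)/(b_0-b_1)^2=0$, so the form is a perfect square and the block is positive semi-definite. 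Combining the two cases and running Lemma~\ref{lem:M+A^TMA+} backwards shows that $M_{\{i,k\}}$ is positive semi-definite.

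I expect the main obstacle to be the bookkeeping in Case 2: correctly reducing $A^{'T}\tilde M_{\{i,k\}}A'$ to the $3\times 3$ block, and, more subtly, nailing the sign statements $b_0>0$ and $b_0-b_1>0$, which rely on the time-orientation-preserving construction of $A'$ in Lemma~\ref{lem:ExistenceOfAinLorentzGroup} and on $C_{ik}>0$. Once these are secured, the crucial cancellation $s=t^2$ forced by $b_0^2-b_1^2-b_2^2=0$ is exactly what collapses the quadratic form to a single square; this is also what fixes the value $x_{\{i,k\}}=1/C_{ik}$ in \eqref{eq:xikStressEnergy}.
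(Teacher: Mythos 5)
Your proposal is correct and follows essentially the same route as the paper: reduce $M_{\{i,k\}}$ to $\tilde M_{\{i,k\}}$ via Lemma~\ref{lem:M+A^TMA+} with $A=\eta$, apply the Lorentz transformations of Lemma~\ref{lem:ExistenceOfAinLorentzGroup} in the same two cases, and verify positive semi-definiteness of the resulting explicit matrices. The only (harmless) difference is the final verification step — you use a Schur complement in Case 1 and complete the square in Case 2, whereas the paper computes the eigenvalues explicitly; your worry about the signs $b_0>0$, $b_0-b_1>0$ is also largely moot, since the identity $s=t^2$ holds algebraically and only $C_{ik}>0$ (from the preceding Lemma) is needed.
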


\begin{proof}
It is equivalent to prove that with $x_{\{i,k\}}=1/C_{ik}$, 
$A^T\tilde{M}_{\{i,k\}}A$ or $A^{'T}\tilde{M}_{\{i,k\}}A'$, 
depending on which case above we are refering to is positive 
semi-definite, where $A$ (or $A'$) is the appropriately chosen 
matrix in $O(1,n)$ discussed above.\\\

\noindent \textbf{Case 1.} Suppose $\tilde{M}_{\{i,k\}}$, defined in \eqref{eq:tildeM}, has at least one of its $v_i, v_k$ with Minkowski inner product 1. WLOG say $\eta(v_i,v_i)=1$. Then
\bea
A^T\tilde{M}_{\{i,k\}}A &= & x_{\{i,k\}}A^T(v_i\cdot v_k^T+v_k\cdot v_i^T)A-\eta \nonumber\\
&= & \frac{1}{c_0} \left((1,0,...,0)^T(c_0,c_1,...,c_n) + (c_0,c_1,...,c_n)^T(1,0,...,0) \right)-\eta\nonumber\\
&= & 
\begin{pmatrix}
	1 & \frac{c_1}{c_0} & \frac{c_2}{c_0} & \frac{c_3}{c_0} & ... & \frac{c_n}{c_0} \\
	\frac{c_1}{c_0} & 1 & 0 & 0 & ... & 0\\
	\frac{c_2}{c_0} & 0 & 1 & 0 & ... & 0\\
	\ . & . & . & . & . & .\\
	\frac{c_n}{c_0} & 0 & 0 & . & . & 1\\
\end{pmatrix}\,,
\eea
where in the first equality, we used the fact that $A\in O(1,n)$ if and only if $A^T\eta A =\eta$. In the second equality, we used \eqref{eq:ADefinition} and let $A^Tv_k=(c_0,c_1,...,c_n)^T$, so $C_{ik}=\eta(v_i, v_k)= \eta(A^Tv_i, A^Tv_k)=\eta((1,0,...,0)^T, (c_0,c_1,...,c_n)^T)=c_0$ (notice that $O(1,n)$ is closed under transposition, so $A^T\in O(1,n)$ and thus $A^T$ preserves the inner product $\eta(\cdot , \cdot)$). It is not hard to see that the characteristic equation of $A^T\tilde{M}_{\{i,k\}}A$ is 
\begin{equation}
\text{det}\left(A^T\tilde{M}_{\{i,k\}}A - \lambda I \right)=(1-\lambda)^{n-1}\left(\lambda^2-2\lambda+1-\frac{1}{c_0^2}\sum_{j=1}^n c_j^2\right)=0\,,
\end{equation}
so the eigenvalues are 
\begin{equation}\label{eq:Case1Eigenvalues}
\lambda=\left\{\underbrace{1,...1}_{n-1},\left(1\pm \sqrt{\sum_{j=1}^n \frac{c_j^2}{c_0^2}}\right)\right\}\,.
\end{equation}
Since 
\begin{equation}
0\leq 0\text{ or } 1= \eta(v_k,v_k)=\eta(A^Tv_k,A^Tv_k)=\eta((c_0,c_1,...,c_n)^T,(c_0,c_1,...,c_n)^T)=c_0^2-\sum_{j=1}^nc_j^2\,,
\end{equation}
we must have 
\begin{equation}
1\geq \sqrt{\sum_{j=1}^n \frac{c_j^2}{c_0^2}}\,,
\end{equation}
so all the eigenvalues in \eqref{eq:Case1Eigenvalues} are 
non-negative. In particular, if $\eta(v_k,v_k)=c_0^2-\sum_{j=1}^nc_j^2=1$, the eigenvalues will be
\begin{equation}\label{eq:Case1EigenvaluesVkTimelike}
\lambda=\left\{\underbrace{1,...1}_{n-1},\left(1\pm \sqrt{1-\frac{1}{c_0^2}}\right)\right\}\,.
\end{equation}
If $\eta(v_k,v_k)=c_0^2-\sum_{j=1}^nc_j^2=0$, the eigenvalues will be
\begin{equation}\label{eq:Case1EigenvaluesVkLightlike}
\lambda=\left\{\underbrace{1,...1}_{n-1},0,2\right\}\,.
\end{equation}

\noindent \textbf{Case 2.} Suppose $\tilde{M}_{\{i,k\}}$, defined in \eqref{eq:tildeM}, has both of its $v_i, v_k$ with Minkowski inner product 0. Then 
\bea
 &&A^{'T}\tilde{M}_{\{i,k\}}A' = x_{\{i,k\}}A^{'T}(v_i\cdot v_k^T+v_k\cdot v_i^T)A'-\eta \nonumber\\
 &&= \frac{1}{b_0-b_1} \left((1,1,0,...,0)^T(b_0,b_1,b_2,0...,0) + (b_0,b_1,b_2,0,...,0)^T(1,1,0,...,0) \right)-\eta\nonumber\\
&&=  
\begin{pmatrix}
\frac{b_0+b_1}{b_0-b_1} & \frac{b_0+b_1}{b_0-b_1} & \frac{b_2}{b_0-b_1} &0 & ... & 0 \\
\frac{b_0+b_1}{b_0-b_1} & \frac{b_0+b_1}{b_0-b_1} & \frac{b_2}{b_0-b_1} &0 & ... & 0 \\
\frac{b_2}{b_0-b_1} & \frac{b_2}{b_0-b_1} & 1 & 0 & ... & 0\\
\ 0 & . & 0 & 1 & . & .\\
\ . & . & . & . & . & .\\
\ 0 & . & 0 & . & . & 1\\
\end{pmatrix}\,,
\eea
where in the second equality we used \eqref{eq:A'Definition} and $C_{ik}=\eta(v_i, v_k)= \eta(A^{'T}v_i, A^{'T}v_k)=\eta((a_0,a_0,0,...,0)^T, (b_0,b_1,b_2,0,...,0)^T)=a_0(b_0-b_1)$. Letting 
\begin{equation}
s\equiv \frac{b_0+b_1}{b_0-b_1} \qquad \sqrt{s}=\frac{b_2}{b_0-b_1}\,,
\end{equation}
where in the second equation we used the relationship $b_0^2-b_1^2-b_2^2=0$, it is not hard to see that the characteristic equation of $A^{'T}\tilde{M}_{\{i,k\}}A'$ is 
\begin{equation}
\text{det}\left(A^{'T}\tilde{M}_{\{i,k\}}A' - \lambda I \right)=(1-\lambda)^{n-2}\lambda^2(2s+1-\lambda)=0\,,
\end{equation}
so the eigenvalues are 
\begin{equation}\label{eq:Case2Eigenvalues}
\lambda=\left\{\underbrace{1,...1}_{n-2},0,0, \frac{3b_0+b_1}{b_0-b_1}\right\}\,.
\end{equation}
The last eigenvalue $\frac{3b_0+b_1}{b_0-b_1}$ is positive because $b_0^2-b_1^2-b_2^2=0$, so $|b_0|>|b_1|$.
\end{proof}

Notice that the only required condition for this general proof is 
that all the K\"ahler cone generators $v_i,v_k$ are either time-like 
or light-like, and belong to the same light cone. This light cone 
does not need to be the positive one. Indeed, it is not hard to see 
that if all the K\"ahler cone generators were to belong to the 
negative light cone, the proof still holds with slight modifications 
at the relevant parts. Also, the time-like K\"ahler cone generators 
can always be rescaled to have Minkowski inner product 
$\eta(v_i,v_i)=1$. In summary, we have the following corollary:

\begin{corollary}
If all the K\"ahler cone generators $v_i,v_k$ are either time-like or light-like, and belong to the same light cone, then each matrix $M_{\{i,k\}}$ will be positive semi-definite by setting $x_{\{i,k\}}=1/C_{ik}$.
\end{corollary}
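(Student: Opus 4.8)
The plan is to observe that the proof of the preceding Theorem used nothing about the K\"ahler cone generators of $dP_n$ beyond two structural facts recorded in Table \ref{tab:KCdPn}: every generator $v_i$ is time- or light-like, and all of them lie in a single light cone. Under the hypothesis of the Corollary these are exactly the assumptions available, so I would re-run that argument, pausing only at the handful of points where the explicit del Pezzo data entered and replacing them by cone-theoretic statements.

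First I would check that $x_{\{i,k\}}=1/C_{ik}$ is a well-defined positive rational for $i\neq k$. Since $C_{ik}=\eta(v_i,v_k)$ is an intersection number it is an integer, and since $v_i,v_k$ lie in the closed future light cone one has $\eta(v_i,v_k)\geq 0$ with equality only if $v_i,v_k$ are parallel null vectors; because distinct extremal rays of a pointed cone are linearly independent, this forces $v_i=v_k$, so $C_{ik}>0$ whenever $i\neq k$. This reproduces, without reference to the tables, the positivity lemma used earlier in Appendix \ref{app:CveticTheorem}.

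Next I would carry out the reduction exactly as in the Theorem: by Lemma \ref{lem:M+A^TMA+} it suffices to prove positive semi-definiteness of $\tilde M_{\{i,k\}}$ from \eqref{eq:tildeM}, and hence, for any $A\in O(1,n)$, of $A^{T}\tilde M_{\{i,k\}}A$. In Case~1, with (after rescaling, see below) $\eta(v_i,v_i)=1$, Lemma \ref{lem:ExistenceOfAinLorentzGroup} supplies $A$ with $A^{T}v_i=(1,0,\dots,0)^{T}$; the Theorem's computation then produces the eigenvalues $1$ (with multiplicity $n-1$) and $1\pm\sqrt{\sum_{j\geq 1}c_j^2/c_0^2}$, where $c_0=C_{ik}>0$ and $c_0^2-\sum_{j\geq1}c_j^2=\eta(v_k,v_k)\geq 0$, which is precisely the inequality forcing all eigenvalues to be non-negative. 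In Case~2, with both $v_i,v_k$ null, Lemma \ref{lem:ExistenceOfAinLorentzGroup} supplies $A'$ bringing $(v_i,v_k)$ to the stated normal form with $b_0^2-b_1^2-b_2^2=0$ and $C_{ik}=a_0(b_0-b_1)>0$, and the Theorem's computation gives eigenvalues $1$ (with multiplicity $n-2$), $0$, $0$, and $(3b_0+b_1)/(b_0-b_1)>0$. Finally I would dispose of the two cosmetic discrepancies between the Theorem's setup and the Corollary's hypothesis: rescaling $v_i\mapsto\lambda v_i$ with $\lambda>0$ multiplies both $v_iv_k^{T}+v_kv_i^{T}$ and $C_{ik}$ by $\lambda$ and hence leaves $\tilde M_{\{i,k\}}$ (and $M_{\{i,k\}}$) unchanged, so any time-like generator may be normalized to unit norm; and if the common light cone happens to be the past one, $v\mapsto -v$ is an element of $O(1,n)$ mapping everything into the future cone while preserving each $v_iv_k^{T}+v_kv_i^{T}$, so that case reduces to the one already handled.

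The main obstacle here is not a computation---those are all inherited from the Theorem---but the bookkeeping of the reduction: one must make sure that $C_{ik}>0$ strictly, so that $x_{\{i,k\}}=1/C_{ik}\in\mathbb{Q}^+$ is admissible in \eqref{M}, which is where pointedness of the K\"ahler cone is genuinely used to exclude parallel null generators, and one must confirm that the two Lorentz normal forms of Lemma \ref{lem:ExistenceOfAinLorentzGroup} apply under the weaker hypothesis, which they do since that lemma is phrased purely in terms of the Minkowski norms of $v_i,v_k$. Once these points are in place the Corollary follows immediately from the Theorem.
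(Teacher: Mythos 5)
Your proposal is correct and follows essentially the same route as the paper, which likewise derives the corollary by noting that the theorem's proof only used that the generators are time- or light-like in a common light cone, that $C_{ik}>0$ for $i\neq k$, and that time-like generators may be normalized (with the past-cone case handled by a sign flip). Your explicit observations that rescaling $v_i$ leaves $\tilde M_{\{i,k\}}$ unchanged because $x_{\{i,k\}}=1/C_{ik}$ scales inversely, and that $v\mapsto -v$ preserves $v_iv_k^T+v_kv_i^T$, are just slightly more detailed versions of the paper's remarks.
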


\bibliographystyle{JHEP}
\bibliography{refs}

\end{document}